\renewcommand{\algorithmicensure}{\textbf{Input:}}
\newcommand{\tX}{\widetilde{X}}%\mathbf{X}^{(1)}}
\newcommand{\tbX}{\widetilde{\mathbf{X}}}
\newcommand{\tbx}{\widetilde{\mathbf{x}}}
\newcommand{\bigY}{\mathbf{Y}}
\newcommand{\smally}{\mathbf{y}}
\newcommand{\swap}[1]{{\text{swap}(#1)}}
\newcommand{\LCD}{\mathrm{LCD}}
\newcommand{\LSM}{\mathrm{LSM}}
\newcommand{\mlr}{\mathrm{mlr}}
\newcommand{\MLR}{\mathrm{MLR}}
\newcommand{\amlr}{\mathrm{amlr}}
\newcommand{\AMLR}{\mathrm{AMLR}}
\newcommand{\uppi}{^{\pi}}
\newcommand{\littley}{\mathbf{y}}
\newcommand{\tbeta}{\tilde{\beta}}
\newcommand{\Power}{\Gamma}
\newcommand{\oracle}{^{\mathrm{oracle}}}
\newcommand{\bayes}{^{\pi}}
\newcommand{\uptheta}{^{(\theta)}}
\newcommand{\sorted}{\mathrm{sorted}}
\newcommand{\upj}{^{(j)}}
\newcommand{\negupj}{^{(\mathrm{-}j)}}
\newcommand{\upnegj}{\negupj}
\newcommand{\br}{\mathbf{r}}
\newcommand{\sample}{\mathrm{sample}}
\newcommand{\obvious}{_{\mathrm{obvious}}}
\newcommand{\nonobvious}{_{\mathrm{non-obvious}}}
\newcommand{\numnonnull}{s_n}
\numberwithin{equation}{section}
\newcommand{\Powervtwo}{\texttt{ENDisc}}
\newcommand{\TP}{\mathrm{TP}}
\title{Asymptotically Optimal Knockoff Statistics via the Masked Likelihood Ratio}
\author[1]{Asher Spector}
\author[2]{William Fithian}
\affil[1]{Department of Statistics, Stanford University, USA}
\affil[2]{Department of Statistics, UC Berkeley, USA}
\begin{document}
\maketitle

\begin{abstract}
In feature selection problems, knockoffs are synthetic controls for the original features. Employing knockoffs allows analysts to use nearly any variable importance measure or ``feature statistic" to select features while rigorously controlling false positives. However, it is not clear which statistic maximizes power. In this paper, we argue that state-of-the-art lasso-based feature statistics often prioritize features that are unlikely to be discovered, leading to low power in real applications. Instead, we introduce masked likelihood ratio (MLR) statistics, which prioritize features according to one's ability to distinguish each feature from its knockoff.  Although no single feature statistic is uniformly most powerful in all situations, we show that MLR statistics asymptotically maximize the number of discoveries under a user-specified Bayesian model of the data. (Like all feature statistics, MLR statistics always provide frequentist error control.) This result places no restrictions on the problem dimensions and makes no parametric assumptions; instead, we require a ``local dependence" condition that depends only on known quantities. In simulations and three real applications, MLR statistics outperform state-of-the-art feature statistics, including in settings where the Bayesian model is highly misspecified. We implement MLR statistics in the python package \texttt{knockpy}; our implementation is often faster than computing a cross-validated lasso.
\end{abstract}

\section{Introduction}\label{sec::intro}

Given a design matrix $\bX = (\bX_1, \dots, \bX_p) \in \R^{n \times p}$ and a response vector $\bigY \in \R^n$, the task of \textit{controlled feature selection} is, informally, to discover features that influence $\bigY$ while controlling the false discovery rate (FDR). In this context, knockoffs \citep{fxknock, mxknockoffs2018} are fake variables $\tbX \in \R^{n \times p}$ which act as negative controls for the features $\bX$. Remarkably, employing knockoffs allows analysts to use nearly any machine learning model or test statistic, often known interchangeably as a ``feature statistic" or ``knockoff statistic," to select features while exactly controlling the FDR. As a result, knockoffs has become popular in the analysis of genetic studies, financial data, clinical trials, and more \citep{genehunting2017, knockoffzoom2019, challet2021, sechidis2021predictiveknockoffs}.

The flexibility of knockoffs has inspired the development of a variety of feature statistics based on penalized regression coefficients, sparse Bayesian models, random forests, neural networks, and more (see, e.g., \cite{fxknock, mxknockoffs2018, knockoffsmass2018, deeppink2018}). These feature statistics not only reflect different modeling assumptions, but more fundamentally, they estimate different quantities, including coefficient sizes, Bayesian posterior inclusion probabilities, and various other measures of variable importance. Yet there has been relatively little theoretical comparison of these methods, in large part because analyzing the power of knockoffs can be very challenging; see Section \ref{subsec::literature}. In this work, we develop a principled approach and concrete methods for designing knockoff statistics that maximize power.

\subsection{Review of model-X and fixed-X knockoffs}\label{subsec::knockreview}

This section reviews the key elements of Model-X (MX) and Fixed-X (FX) knockoff methods. 

\textbf{Model-X (MX) knockoffs} \citep{mxknockoffs2018} is a method to test the hypotheses $H_j : \bX_j \Perp \bigY \mid \bX_{-j}$, where $\bX_{-j} \defeq \{\bX_{\ell}\}_{\ell \ne j}$ denotes all features except $\bX_j$, assuming that the law of $\bX$ is known.\footnote{Note that this assumption can be relaxed to having a well-specified parametric model for $\bX$ \citep{condknock2019}, and knockoffs are known to be robust to misspecification of the law of $\bX$ \citep{robustness2020}.} Applying MX knockoffs requires three steps.

\textit{1. Constructing knockoffs.} Valid MX knockoffs must satisfy two properties.  First, the columns of $\bX$ must be \textit{pairwise exchangeable} with the corresponding columns of $\tbX$, i.e. $[\bX_{-j}, \tbX_{-j}, \bX_j, \tbX_j] \disteq [\bX_{-j}, \tbX_{-j}, \tbX_j, \bX_j]$ must hold for all $j \in [p]$.
%I.e., if $[\bX, \tbX]_{\swap{j}}$ denotes the matrix $[\bX, \tbX]$ but with the columns $\bX_j$ and $\tbX_j$ swapped, then $[\bX, \tbX]_{\swap{j}} \disteq [\bX, \tbX]$ must hold for $j \in [p]$. 
Second, we require that $\tbX \Perp \bigY \mid \bX$, which holds if one constructs $\tbX$ without looking at $\bigY$. Informally, these constraints guarantee that $\bX_j, \tbX_j$ are ``indistinguishable" under $H_j$. Sampling knockoffs can be challenging, but this problem is well studied \citep[e.g.,][]{metro2019}.

\textit{2. Fitting feature statistics.} Next, use any machine learning (ML) algorithm to fit feature importances $Z = z([\bX, \tbX], \bigY) \in \R^{2p}$, where $Z_j$ and $Z_{j+p}$ heuristically measure the ``importance" of $\bX_j$ and $\tbX_j$ in predicting $\bigY$. The only restriction is that swapping $\bX_j$ and $\tbX_j$ must also swap the feature importances $Z_j$ and $Z_{j+p}$ without changing any of the other feature importances $\{Z_\ell\}_{\ell \ne j}, \{Z_{\ell+p}\}_{\ell \ne j}$. This restriction is satisfied by most ML algorithms, such as the lasso or various neural networks \citep{deeppink2018}.

Given $Z$, we define the \textit{feature statistics} $W = w([\bX, \tbX], \bigY) \in \R^p$ via $W_j = f(Z_j, Z_{j+p})$ where $f(x,y) = - f(y,x)$ is any antisymmetric function. E.g., the lasso coefficient difference (LCD) statistic sets $W_j = |Z_j| - |Z_{j+p}|$, where $Z_j$ and $Z_{j+p}$ are coefficients from a lasso fit on $[\bX, \tbX], \bigY$. Intuitively, when $W_j$ is positive, this suggests that $\bX_j$ is more important than $\tbX_j$ and thus is evidence against the null. Indeed, Steps 1-2 guarantee that the signs of the null $\{W_j\}_{j=1}^p$ are i.i.d. random signs.

\textit{3.Make rejections.} Define the data-dependent threshold $T \defeq \inf\left\{t > 0: \frac{\#\{j : W_j \le -t \} + 1}{\#\{W_j \ge t \}} \le q \right\}$, where $\inf \emptyset \defeq \infty$. 
Then, reject $S \defeq \{j : W_j \ge T\}$, which guarantees finite-sample FDR control at level $q \in (0,1)$. Note this result does not require any assumptions about the law of $\bigY \mid \bX$.

\begin{theorem}[\cite{mxknockoffs2018}]\label{thm::fdr} Let $P\opt$ denote the unknown joint law of 
$(\bX, \bigY)$, and suppose the law of $\bX \sim P_X\opt$ is known, allowing one to construct valid knockoffs $\tilde{\bX}$. \footnote{Typically, one assumes that the observations are i.i.d. to construct valid knockoffs, but the i.i.d. assumption is not necessary as long as $\tbX$ are valid knockoffs.} Then for any feature statistic $w$, 
\begin{equation*}
    \mathrm{FDR} \defeq \E_{P\opt}\left[\frac{|S \cap \mcH_0|}{1 \vee |S|}\right] \le q,
\end{equation*}
where $\mcH_0 = \{j \in [p] : H_j \text{ is true}\}$ is the set of nulls under $P\opt$.
\end{theorem}

\textbf{Fixed-X (FX) knockoffs} \citep{fxknock} treats $\bX$ as nonrandom and yields exact FDR control under the Gaussian linear model $\bigY \mid \bX \sim \mcN(\bX \beta, \sigma^2 I_n)$. Fitting FX knockoffs is identical to the steps above with two exceptions:
\begin{enumerate}[noitemsep, topsep=0pt]
    \item FX knockoffs need not satisfy the constraints in Step 1: instead, $\tbX$ must satisfy (i) $\tbX^T \tbX = \bX^T \bX$ and (ii) $\tbX^T \bX = \bX^T \bX - \Delta$, for some diagonal matrix $\Delta$ satisfying $2 \bX^T \bX \succ \Delta$.
    \item The feature importances $Z$ can only depend on $\bigY$ through $[\bX, \tbX]^T \bigY$, which permits the use of many test statistics, but not all (for example, this prohibits the use of cross-validation). 
\end{enumerate}
Our theory applies to both MX and FX knockoffs,but  oftenfocus on the MX context for brevity.

\subsection{Theoretical problem statement}\label{subsec::problemstatement}

This section defines two types of optimal knockoff statistics: {\em oracle statistics}, which maximize the expected number of discoveries (\Powervtwo) for the true (unknown) data distribution $P\opt$, and {\em Bayes-optimal statistics}, which maximize \Powervtwo\, with respect to a prior distribution over $P\opt$. We focus on the expected number of discoveries since it greatly simplifies the analysis and all feature statistics provably control the frequentist FDR anyway. However, Section \ref{subsec::amlr} extends our analysis to consider the expected number of true discoveries.

Let $S_w \subset [p]$ denote the discovery set using feature statistic $w$ on data $([\bX, \tbX], \bigY)$. 
An \textit{oracle statistic} maximizes the expected number of discoveries under $P\opt$ as defined below:
\begin{equation}\label{eq::oracledef}
    \Powervtwo\opt(w) \defeq
    \,\,\E_{P\opt}\left[|S_w|\right].
\end{equation}
Next, let $\mcP = \{P^{(\theta)} : \theta \in \Theta\}$ denote a model class of potential distributions for $(\bX, \bigY)$ and let  $\pi : \Theta \to \R_{\ge 0}$ denote a prior density over $\mcP$.\footnote{We implicitly assume all elements of $\mcP$ are consistent with the core assumptions of MX/FX knockoffs (see Section \ref{subsec::knockreview}). For example, when employing MX knockoffs, all $P\uptheta \in \mcP$ should specify the correct marginal law for $X$. Although this is not necessary for our theoretical results, it is necessary for the computational techniques in Section \ref{subsec::computation}.} A {\em Bayes-optimal statistic} maximizes the average-case expected number of discoveries with respect to $\pi$:
\begin{equation}\label{eq::problemstatement}
    \Powervtwo\uppi(w) \defeq \int_{\Theta} \E_{P^{(\theta)}}[|S_w|] \pi(\theta) d\theta \defeq \E_{P\bayes}[|S_w|],
\end{equation}
where above, $P\bayes$ denotes the mixture distribution which first samples a parameter $\theta\opt \in \Theta$ according to the prior $\pi$ and then samples $(\bX, \bigY) \mid \theta\opt \sim P^{(\theta\opt)}$. We refer to $P\bayes$ as a ``Bayesian model," and we give a default choice of $P\bayes$ (based on sparse generalized additive models) in Section \ref{subsec::gams}. Our paper introduces statistics $\mlr\oracle$ and $\mlr\uppi$ that asymptotically maximize $\Powervtwo\opt$ and $\Powervtwo\uppi$, respectively. 

While introducing a prior distribution may seem a strong assumption to some readers, Bayesian models are routinely used in applications where knockoffs are commonly applied, such as genetic fine-mapping \citep[e.g.,][]{guanstephens2011, polyfun2019}. Furthermore, in simulations and real applications, our approach yields significant power gains over pre-existing approaches even when the prior is highly misspecified. Lastly, we emphasize that Theorem \ref{thm::fdr} guarantees that using $\mlr\uppi$ as a feature statistic guarantees frequentist FDR control under the \textit{true} law $P\opt$ of the data, even if $P\opt$ is not a member of $\mcP$---and in this Type I error result, the conditional independence null hypotheses $H_1,\ldots,H_p$ are defined nonparametrically with respect to the unknown $P\opt$, not with respect to $P\bayes$.%%%COULDCUT last clause

\subsection{Contribution and overview of results}\label{subsec::contribution}

This paper develops \textit{masked likelihood ratio} (MLR) statistics, a class of feature statistics that are asymptotically optimal, computationally efficient, and powerful in applications. To derive these statistics, we reformulate MX knockoffs as a guessing game on \textit{masked data} $D = (\bigY, \{\bX_j, \tbX_j\}_{j=1}^p),$ where the notation $\{\bX_j, \tbX_j\}$ denotes an unordered set.\footnote{For brevity, this section only presents results for MX knockoffs. See Section \ref{sec::mlr} for analogous results for the FX case.} After observing $D$, the analyst must do as follows:
\begin{itemize}[itemsep=0.5pt, topsep=0pt, leftmargin=*]
    \item For each $j \in [p]$, the analyst must produce a guess $\widehat{\bX}_j \in \R^n$ of the value of the feature $\bX_j$ based on $D$. Note that given $D$, $\widehat{\bX}_j \in \{\bX_j, \tbX_j\}$ takes one of two values.
    \item The analyst must then assign an order to their $p$ guesses, ideally from most to least promising.
    \item The analyst may make $k$ discoveries if roughly $k$ of their first $(1+q)k$ guesses are correct (according to the order they specify), where $q$ is the FDR level. Here, guess $j$ is ``correct" if $\widehat{\bX}_j = \bX_j$.
\end{itemize}

We show that to maximize the expected number of discoveries, an asymptotically optimal strategy is:
\begin{itemize}[topsep=0pt, itemsep=0.5pt, leftmargin=*]
    \item For each $j \in [p]$, guess the value $\widehat{\bX}_j \in \{\bX_j, \tbX_j\}$ which is conditionally more likely given $D$ (see below).
    \item Order the guesses in descending order of the probability that each guess is correct.
\end{itemize}
In the traditional language of knockoffs, this corresponds to using the masked data to compute the log-likelihood ratio between the two possible values of $\bX_j$ (namely $\{\bX_j, \tbX_j\})$ given $D$. Precisely, let $P\bayes$ denote a Bayesian model as defined in Section \ref{subsec::problemstatement}. Then for any value $\bd = (\littley, \{\bx_j, \tbx_j\}_{j=1}^p)$ in the support of $D$ and any fixed $\bx \in \{\bx_j, \tbx_j\}$, let $P\bayes_{j}(\bx \mid \bd) = P\bayes(\bX_j = \bx \mid D = \bd)$ denote the conditional law of $\bX_j$ given $D$. The \textit{masked likelihood ratio} (MLR) statistic is defined as
\begin{equation}\label{eq::mlr}
    \MLR_j\uppi \defeq \mlr_j\uppi([\bX, \tbX], \bigY) \defeq \log\left(\frac{P\bayes_j(\bX_j \mid D)}{P\bayes_j(\tbX_j \mid D)}\right).
\end{equation}
In words, the numerator plugs the observed values of $\bX_j$ and $D$ into the conditional law of $\bX_j \mid D$ under $P\bayes$, and the denominator plugs $\tbX_j$ and $D$ into the same law. Since swapping $\bX_j$ and $\tbX_j$ flips the sign of $\MLR_j\uppi$ without changing the values of $\{\MLR_\ell\uppi\}_{\ell \ne j}$, this equation defines a valid knockoff statistic. See Section \ref{subsec::literature} for comparison to \cite{katsevichmx2020}'s (unmasked) likelihood ratio statistic.

This paper gives three arguments motivating the use of MLR statistics.

\textbf{1. Intuition: the right notion of variable importance.} Existing feature statistics measure many different proxies for variable importance, ranging from regression coefficients to posterior inclusion probabilities. However, Section \ref{subsec::hiv_motiv} shows that popular lasso-based methods incorrectly prioritize features $\bX_j$ that are predictive of $\bigY$ but are nearly indistinguishable from their knockoffs $\tbX_j$, leading to low power in real applications. In contrast to conventional variable importances, MLR statistics instead estimate whether a feature $\bX_j$ is distinguishable from its knockoff $\tbX_j$. %We show in Section \ref{sec::mlr} that this is the optimal way to prioritize the features.

\textbf{2. Theory: asymptotic Bayes-optimality.} Section \ref{subsec::avgopt} shows that MLR statistics asymptotically maximize the number of expected discoveries under $P\bayes$. 
Namely, under technical assumptions, Theorem \ref{thm::avgopt} shows that for any valid feature statistic $w$,
\begin{equation}
    \Powervtwo\uppi(\mlr\uppi) \ge \Powervtwo\uppi(w)
    + o(\text{\# of non-nulls}).
\end{equation}
Our result applies to arbitrarily high-dimensional asymptotic regimes and allows $P\bayes$ to take any form---we do not assume that $\bigY \mid \bX$ follows a linear model under $P\bayes$. Instead, we assume the signs of the MLR statistics satisfy a local dependency condition, similar to dependency conditions often assumed on p-values \citep{genovese2004, storey2004, ferreira2006, farcomeni2007}. Our condition does not involve unknown quantities, so it can be diagnosed in practice.

Despite the Bayesian nature of this optimality result, we emphasize that MLR statistics are valid knockoff statistics. Thus, if $S_{\mlr} \subset [p]$ are the discoveries made by the $\mlr\uppi$ feature statistic, Theorem \ref{thm::fdr} shows that the frequentist FDR is controlled in finite samples assuming only that $\tbX$ are valid knockoffs:
\begin{equation}
    \mathrm{FDR} \defeq \E_{P\opt}\left[\frac{|S_{\mlr} \cap \mcH_0|}{1 \vee |S_{\mlr}|}\right] \le q.
\end{equation}
%%%%%COULDCUT this equation and shorten the previous paragraph. I.e., Despite the Bayesian nature of this optimality result, we emphasize that MLR statistics are valid knockoff statistics. Thus, when employing them, the frequentist FDR is controlled in finite samples assuming only that $\tbX$ are valid knockoffs (see Theorem 1.1 and section 1.2 for discussion).
%This result holds even if $P\bayes$ and $P\opt$ are quite different.

\textbf{3. Empirical results.} We demonstrate via simulations and three real data analyses that MLR statistics are powerful in practice, even when the user-specified Bayesian model $P\bayes$ is highly misspecified.

\begin{itemize}[topsep=0pt, itemsep=0.5pt, leftmargin=*]
    \item We develop concrete instantiations of MLR statistics based on uninformative (sparse) priors for generalized additive models and binary GLMs. Our Python implementation is often faster than fitting a cross-validated lasso. 

    \item In simulations, MLR statistics outperform other state-of-the-art feature statistics, often by wide margins. Even when $P\bayes$ is highly misspecified, MLR statistics often nearly match the performance of the oracle which sets $P\bayes = P\opt$.
    Furthermore, when $\bigY$ has a highly nonlinear relationship with $\bX$, MLR statistics also outperform ``black-box" feature statistics based on neural networks and random forests.
    
    \item We replicate three knockoff-based analyses of drug resistance \citep{fxknock}, financial factor selection \citep{challet2021}, and RNA-seq data \citep{nodewiseknock}. We find that MLR statistics (with an uninformative prior) make one to ten times more discoveries than the original analyses.
\end{itemize}
Overall, our results suggest that MLR statistics can substantially increase the power of knockoffs. 
%%%COULDCUT this last line

\subsection{Related literature}\label{subsec::literature}

The literature contains many feature statistics, which can (roughly) be separated into three categories. First, perhaps the most common feature statistics are based on penalized regression coefficients, notably the lasso signed maximum (LSM) and lasso coefficient difference (LCD) statistics \citep{fxknock}. Indeed, these lasso-based statistics are often used in applied work \citep[e.g.,][]{knockoffzoom2019} and have received much theoretical attention \citep{weinstein2017, rank2017, ke2020, weinstein2020power, wang2021power}. Perhaps surprisingly, we argue that many of these statistics target the wrong notion of variable importance, leading to reduced power. Second, some works have introduced Bayesian knockoff statistics \citep[e.g.,][]{mxknockoffs2018, RenAdaptiveKnockoffs2020}. MLR statistics have a Bayesian flavor but take a different form than previous statistics. Furthermore, our motivation differs from those of previous works: the real innovation of MLR statistics is to estimate a \textit{masked} likelihood ratio, and we mainly use a Bayesian framework to quantify uncertainty about nuisance parameters (see Section \ref{subsec::mlr}). In contrast, previous works largely motivated Bayesian statistics as a way to incorporate prior information \citep{mxknockoffs2018, RenAdaptiveKnockoffs2020}. That said, an important special case of MLR statistics is similar to the ``BVS" statistics from \cite{mxknockoffs2018}, as discussed in Section \ref{subsec::computation}. Third, many feature statistics take advantage of ``black-box" ML to assign variable importances \cite[e.g.,][]{deeppink2018, knockoffsmass2018}. Empirically, our implementation of MLR statistics based on regression splines outperforms ``black-box" feature statistics in Section \ref{sec::sims}.

Previous power analyses for knockoffs have largely focused on showing the consistency of coefficient-difference feature statistics \citep{liurigollet2019, rank2017, mrcknock2022} or quantifying the power of coefficient-difference feature statistics assuming $\bX$ has i.i.d. Gaussian entries \citep{weinstein2017, weinstein2020power, wang2021power}. \cite{ke2020} also derive a phase diagram for LCD statistics assuming $\bX$ is blockwise orthogonal. Our goal is different: to show that MLR statistics are asymptotically optimal, with particular focus on settings where the asymptotic power lies strictly between $0$ and $1$. Furthermore, the works above exclusively focus on  Gaussian linear models, whereas our analysis places no explicit restrictions on the law of $\bigY \mid \bX$ or the dimensionality of the problem. Instead, we assume the signs of the MLR statistics satisfy a local dependency condition, similar to common dependency conditions on p-values \citep{genovese2004, storey2004, ferreira2006, farcomeni2007}. However, our proof technique is novel and specific to knockoffs. %%%COULDCUT last sentence

Our theory builds on \cite{whiteout2021}, who developed knockoff$\star$, a provably optimal oracle statistic for FX knockoffs---in fact, oracle MLR statistics are equivalent to knockoff$\star$ for FX knockoffs. Our work also builds on \cite{katsevichmx2020}, who showed that \textit{unmasked} likelihood statistics maximize $P\opt(W_j > 0)$. MLR statistics also have this property, although we show the stronger result that MLR statistics maximize the expected number of overall discoveries. Another key difference is that unmasked likelihood statistics are not jointly valid knockoff statistics (see Appendix \ref{appendix::unmasked_lr}). Thus, unmasked likelihood statistics do not yield provable FDR control, whereas MLR statistics do. Lastly, we note that the oracle procedures derived in these two works cannot be used in practice since they depend on unknown parameters. To our knowledge, MLR statistics are the first usable knockoff statistics with explicit optimality guarantees.

\subsection{Notation and outline}

\textbf{Notation}: Let $\bX \in \R^{n \times p}$ and $\bigY \in \R^{n}$ denote the design matrix and response vector in a feature selection problem with $n$ data points and $p$ features. We let the non-bold versions $X = (X_1, \dots, X_p) \in \R^p$ and $Y \in \R$ denote the features and response for any arbitrary observation. %%%COULDCUT do we use this?
For $k \in \N$, define $[k] \defeq \{1, \dots, k\}$. For any $M \in \R^{m \times k}$ and $J \subset [k]$, $M_J$ denotes the columns of $M$ corresponding to the indices in $J$. Similarly, $M_{-J}$ denotes the columns of $M$ which do not appear in $J$, and $M_{-j}$ denotes all columns except column $j \in [k]$. For matrices $M_1 \in \R^{n \times k_1}, M_2 \in \R^{n \times k_2}$, $[M_1, M_2] \in \R^{n \times (k_1 + k_2)}$ denotes the column-wise concatenation of $M_1, M_2$.
$I_n$ denotes the $n \times n$ identity.  Throughout, $P\opt$ denotes the true (unknown) law of $\bX, \bigY$, and $P\bayes$ denotes a user-specified Bayesian model of the law of $\bX, \bigY, \theta\opt$ as defined in Section \ref{subsec::problemstatement}.

\textbf{Outline}: Section \ref{subsec::hiv_motiv} gives intuition explaining why popular feature statistics may have low power, using an HIV resistance dataset as motivation. Section \ref{sec::mlr} introduces MLR statistics and presents our theoretical results. Section \ref{subsec::computation} discusses computation and suggests default choices of the Bayesian model $P\bayes$. Section \ref{sec::sims} compares MLR statistics to competitors via simulations. Section \ref{sec::data} applies MLR statistics to three real datasets. Section \ref{sec::discussion} discusses future directions.
%%%COULDCUT this whole section honestly

\section{Intuition and motivation from an HIV drug resistance dataset}\label{subsec::hiv_motiv}

\subsection{Intuition: what makes knockoffs powerful?}\label{subsec::intuition}

Given a vector of knockoff statistics $W \in \R^p$, the number of discoveries is determined as follows:
\begin{itemize}[itemsep=0.5pt, topsep=0pt, leftmargin=*]
    \item Step 1: Let $\sigma : [p] \to [p]$ denote the permutation such that $|W_{\sigma(1)}| \ge |W_{\sigma(2)}| \ge \dots \ge |W_{\sigma(p)}|$.
    \item Step 2: Let $k$ be the largest integer such that at least $\ceil{(k+1)/(1+q)}$ of the $k$ feature statistics $W_{\sigma(1)}, \dots, W_{\sigma(k)}$ with the largest absolute values have positive signs. Then the analyst may discover the features corresponding to the positive signs among $W_{\sigma(1)}, \dots, W_{\sigma(k)}$.
\end{itemize}
The procedure above is equivalent to using the ``data-dependent threshold" from Section \ref{subsec::knockreview} \citep[see][]{fxknock}. This characterization suggests that to make many discoveries, we should:
\begin{itemize}[itemsep=0.5pt, topsep=0pt, leftmargin=*]
    \item Goal 1: Maximize the probability that each coordinate $W_j$ has a positive sign. (Note that null coordinates are guaranteed to be symmetric.) %COULDCUT: part in parentheses
    \item Goal 2: Assign absolute values such that coordinates $W_j$ with larger absolute values also have higher probabilities of being positive. This ensures that for each $k$, the $k$ feature statistics with the highest absolute values contain as many positive signs as possible, thus maximizing the number of discoveries. Although it is not yet clear how to formalize this goal, intuitively, we would like $\{|W_j|\}_{j=1}^p$ to have the same order as $\{P\opt(W_j > 0)\}_{j=1}^p$. 
\end{itemize}

We emphasize that the second goal is crucial to make discoveries when $\{P\opt(W_j > 0)\}_{j=1}^p$ is heterogeneous, as illustrated in Section \ref{subsec::hiv}. See also Appendix \ref{appendix::wstat_synth} for a simpler simulated example.

\subsection{Motivation from the HIV drug resistance dataset}\label{subsec::hiv}

We now ask: do the most common choices of feature statistics used in the literature, LCD and LSM statistics, accomplish Goals 1-2? We argue \textit{no}, using the HIV drug resistance dataset from \cite{rhee2006data} as an illustrative example. This dataset has been used as a benchmark in several papers about knockoffs, e.g., \cite{fxknock, deepknockoffs2018}, and we perform a complete analysis of this dataset in Section \ref{sec::data}. For now, note that the design $\bX$ consists of genotype data from $n \approx 750$ HIV samples, the response $\bigY$ measures the resistance of each sample to a drug (in this case Indinavir), and we apply knockoffs to discover which genetic variants affect drug resistance---note our analysis exactly mimics that of \cite{fxknock}. As notation, let $(\hat \beta^{(\lambda)}, \tbeta^{(\lambda)}) \in \R^{2p}$ denote the estimated lasso coefficients fit on $[\bX, \tbX]$ and $\bigY$ with regularization parameter $\lambda$. Furthermore, let $\hat \lambda_j$ (resp. $\tilde{\lambda}_j)$ denote the smallest value of $\lambda$ such that $\hat \beta^{(\lambda)}_j \ne 0$ (resp. $\tbeta^{(\lambda)}_j \ne 0$).  Then the LCD and LSM statistics are defined as:
\begin{equation}\label{eq::lassodef}
    W_j^{\LCD} = |\hat\beta_j^{(\lambda)}| - |\tbeta_j^{(\lambda)}|, \,\,\,\,\,\,\,\,\,\, W_j^{\LSM} = \sign(\hat \lambda_j - \tilde{\lambda}_j) \max(\hat \lambda_j, \tilde{\lambda}_j). 
\end{equation}

\begin{figure}[!ht]
    \centering
    \includegraphics[width=\linewidth]{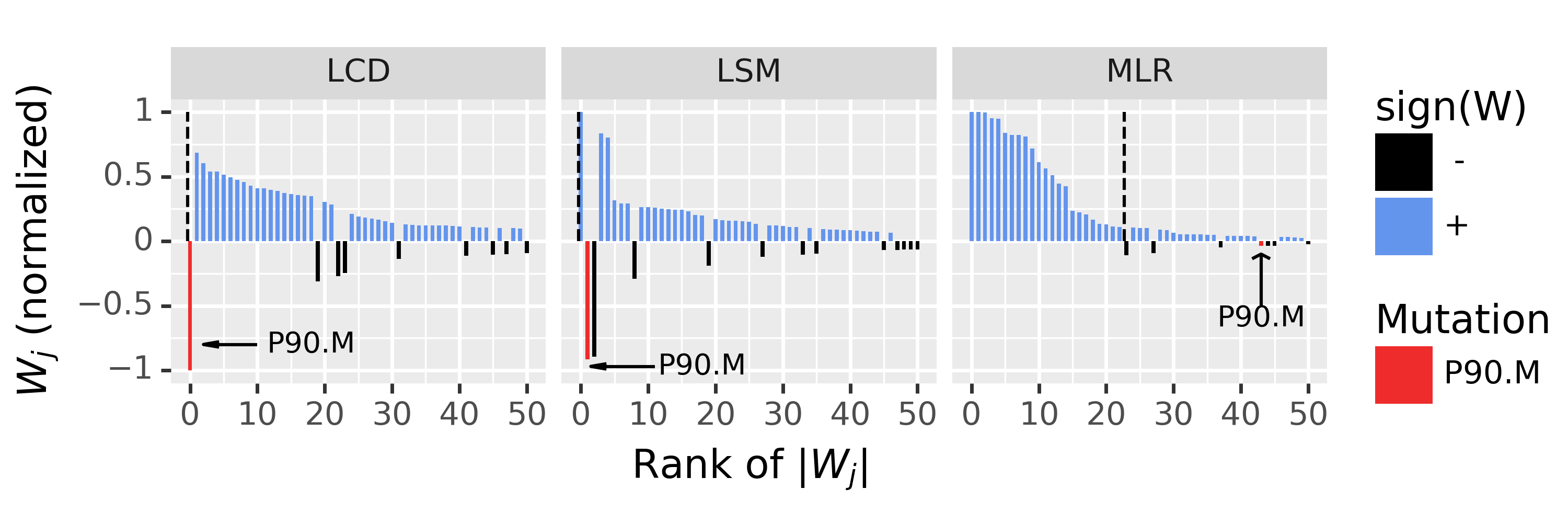}
    \caption{We plot the first $50$ LCD, LSM, and MLR feature statistics sorted in descending order of absolute value when applied to the HIV drug resistance dataset for the drug Indinavir (IDV). For FDR level $q=0.05$, all positive feature statistics to the left of the dotted black line are discoveries. This figure shows that when $\bX$ is correlated, LCD and LSM statistics make few discoveries because they occasionally yield highly negative $W$-statistics for highly predictive variables that have low-quality knockoffs, such as the ``P90.M" variant from Section \ref{subsec::hiv_motiv}. In contrast, MLR statistics (defined in Section \ref{sec::mlr}) deprioritize the P90.M variant; although they still do not discover P.90M, this deprioritization allows the discovery of $22$ other features. For visualization, we apply a monotone transformation to $\{W_j\}$ such that $|W_j| \le 1$, which (provably) does not change the performance of knockoffs. See Appendix \ref{appendix::realdata} for further details and corresponding plots for the other fifteen drugs in the dataset.}
    \label{fig::wstatplot}
\end{figure}

As intuition, imagine that a feature $\bX_j$ appears to influence $\bigY$: however, due to high correlations within $\bX$, we must create a knockoff $\tbX_j$ which is highly correlated with $\bX_j$. For example, the ``P90.M" variant in the HIV dataset is extremely predictive of resistance to Indinavir (IDV), as its OLS t-statistic is $\approx 8.95$. However, in the original analysis, P90.M is $> 99\%$ correlated with its knockoff, so the lasso may select $\tbX_j$ instead of $\bX_j$. Furthermore, since the lasso induces sparsity, it is unlikely to select \textit{both} $\bX_j$ and $\tbX_j$ as they are highly correlated. Thus, $W_j^\LCD$ and $W_j^\LSM$ will have large absolute values, since $\bX_j$ appears significant, and a reasonably high probability of being negative, since $\bX_j \approx \tbX_j$. Indeed, the LCD and LSM statistics for P90.M have, respectively, the largest and second-largest absolute values among all genetic variants, but both statistics are negative because the lasso selected the knockoff instead of the feature. 

Figure \ref{fig::wstatplot} shows that this misprioritization prevents the LCD and LSM statistics from making any discoveries when $q = 0.05$. Yet this problem is avoidable. If $\cor(X_j,\tX_j)$ is large and $W_j$ may be negative, we can ``deprioritize" $W_j$ by lowering its absolute value. As shown by Figure \ref{fig::wstatplot}, this is exactly what MLR statistics do for the P90.M variant. Although this does not allow us to discover P90.M, it does allow us to discover $22$ other features. 

\begin{remark}[Alternative solutions] 
To our knowledge, this problem with lasso statistics has not been previously discussed (see Section \ref{subsec::literature}). Once pointed out, there are many intuitive approaches that mitigate (but do not fully solve) this problem, such as studentizing the coefficients or adding a ridge penalty. These practical ideas may merit further exploration; however, we focus on obtaining optimal feature statistics. Furthermore, some may argue that the best solution is simply to ensure that P90.M is less correlated with its knockoff. We wholeheartedly agree that the SDP knockoff construction from \cite{fxknock} is sub-optimal here, and thus, we also use an alternative knockoff construction in Section \ref{sec::data}. Yet reasonably strong correlations between features and knockoffs are inevitable when the features are correlated \citep{daibarber2016}. However, knockoffs can still be powerful in these settings if the feature statistics properly account for the (known) dependencies among $[\bX, \tbX]$. MLR statistics are designed to do this.
\end{remark}

\section{Masked likelihood ratio statistics}\label{sec::mlr}

This section introduces and analyzes MLR statistics. First, Section \ref{subsec::maskeddata} introduces the notation needed to define MLR statistics. Then, Section \ref{subsec::mlr} defines MLR statistics, and Section \ref{subsec::avgopt} shows that MLR statistics asymptotically maximize the expected number of discoveries. Finally, Section \ref{subsec::amlr} introduces an adjusted MLR statistic that asymptotically maximizes the expected number of \textit{true} discoveries.

\subsection{Knockoffs as inference on masked data}\label{subsec::maskeddata}

Section \ref{subsec::hiv_motiv} argued that to maximize power, we should assign $W_j$ a large absolute value if and only if $P\opt(W_j > 0)$ is large. To do this, we must estimate $P\opt(W_j > 0)$ from the data, but we cannot use all the data for this purpose: e.g., we cannot directly adjust $|W|$ based on $\sign(W)$ without violating FDR control. To resolve this ambiguity, we reformulate knockoffs as inference on \textit{masked data}.

\begin{definition}\label{def::maskeddata} Suppose we observe data $\bX, \bigY$, knockoffs $\tbX$, and independent random noise $U$. ($U$ may be used to fit a randomized feature statistic.) The \emph{masked data} $D$ is defined as
\begin{equation}\label{eq::maskeddata}
    D = \begin{cases}
        (\bigY, \{\bX_j, \tbX_j\}_{j=1}^p, U) & \text{ for model-X knockoffs} \\
        (\bX, \tbX, \{\bX_j^T \bigY, \tbX_j^T \bigY\}_{j=1}^p, U) & \text{ for fixed-X knockoffs.}
    \end{cases}
\end{equation}
\end{definition}

As shown in Propositions \ref{prop::mxdistinguish}-\ref{prop::fxdistinguish}, the masked data $D$ is all the data we may use when assigning magnitudes to $W$, and knockoffs will be powerful when we can recover the full data from  $D$. 

\begin{proposition}\label{prop::mxdistinguish} Let $\tbX$ be model-X knockoffs such that $\bX_j \ne \tbX_j$ a.s. for $j \in [p]$. Then $W = w([\bX, \tbX], \bigY)$ is a valid feature statistic if and only if:
\begin{enumerate}[topsep=0pt, leftmargin=*]
    \setlength{\parskip}{0pt}
    \setlength{\itemsep}{0pt plus 1pt}
    \item $|W|$ is a function of the masked data $D$.
    \item For all $j \in [p]$, there exists a $D$-measurable random vector $\widehat{\bX}_j$ such that $W_j > 0$ if and only if $\widehat{\bX}_j = \bX_j$.
\end{enumerate}
\end{proposition}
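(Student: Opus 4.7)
The plan is to prove each direction separately by exploiting the flip-sign characterization of a valid feature statistic: $W_j$ reverses sign under the swap of column $j$ (that is, sending $\bX_j \leftrightarrow \tbX_j$), and is unchanged under the swap at any other index. Equivalently, if $\epsilon \in \{-1,+1\}^p$ records the pattern of swaps, then $w([\bX,\tbX]_{\swap{J}}, \by) = \epsilon \odot w([\bX,\tbX], \by)$ where $J = \{j : \epsilon_j = -1\}$.

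For the forward direction (valid $\Rightarrow$ (1) and (2)), first note that $|W_j|$ is invariant under any swap (the swap at $j$ flips only the sign, and swaps at $i \neq j$ leave $W_j$ untouched). Since the masked data $D$ is precisely the maximal function of $([\bX,\tbX],\by,U)$ invariant under every column swap, it follows that $|W_j|$ is a function of $D$, proving (1). To construct the distinguishing function, fix for each pair $\{\bX_i, \tbX_i\}$ an arbitrary canonical ordering (for instance, lexicographic, using $U$ to break ties), yielding ordered pairs $(a_i, b_i)$. Treating this canonical labeling as if it were the true one, compute $W_j^{\mathrm{can}}$; by the swap property at $j$, exactly one of $a_j, b_j$ makes $W_j^{\mathrm{can}}$ positive. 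Define $g_j(D)$ to be that element. The canonical labelings at $i \neq j$ do not affect the sign of $W_j$ by validity, so $g_j$ is a well-defined function of $D$ alone. With the true labeling in place, $W_j > 0$ holds iff the true $\bX_j$ agrees with the canonical element that produced a positive value, i.e.\ iff $g_j(D) = \bX_j$.

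For the backward direction ((1) and (2) $\Rightarrow$ valid), note that $|W_j|$ is automatically invariant under every swap because $D$ is. It remains to check the sign behavior. By (2), $W_j > 0 \iff g_j(D) = \bX_j$. Swapping $\bX_j \leftrightarrow \tbX_j$ leaves $D$ (and hence $g_j(D)$) unchanged, but relabels the pair so that the new $\bX_j$ is the old $\tbX_j$; using the hypothesis $\bX_j \neq \tbX_j$, the truth value of $g_j(D) = \bX_j$ flips, and so does the sign of $W_j$. Swapping $\bX_i \leftrightarrow \tbX_i$ for $i \neq j$ leaves both $D$ and the labeling of pair $j$ unchanged, so $W_j$ is unchanged. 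Combined with the invariance of $|W_j|$, this gives the flip-sign property.

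The main obstacle is making sense of $g_j$ as a bona fide function of $D$ when $D$ only encodes the \emph{unordered} pair $\{\bX_j, \tbX_j\}$: the canonical-labeling device above handles this cleanly, and the assumption $\bX_j \neq \tbX_j$ is essential to ensure the two elements of each pair are distinguishable as formal objects (so that $g_j(D) = \bX_j$ is a well-defined event). Apart from this point, both directions reduce to bookkeeping with the swap action, and no probabilistic argument is needed.
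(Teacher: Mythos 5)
Your proof is correct and follows essentially the same route as the paper's: both directions reduce to the flip-sign property, with $|W|$ recovered from $D$ as a swap-invariant and $g_j$ built by evaluating $w$ on a representative labeling of the unordered pairs (you use a deterministic canonical ordering where the paper uses a random slot assignment, a purely cosmetic difference). No gaps.
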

%%COULDCUT make this in-line and not bullet points.

Proposition \ref{prop::mxdistinguish} reformulates knockoffs as a guessing game, where we produce a ``guess" $\widehat{\bX}_j \in \{\bX_j, \tbX_j\}$ of the value of $\bX_j$ based on $D = (\bigY, \{\bX_j, \tbX_j\}_{j=1}^p)$. If our guess is right, meaning $\widehat{\bX}_j = \bX_j$, then we are rewarded and $W_j > 0$: else $W_j < 0$. To avoid highly negative $W$-statistics, we should only assign $W_j$ a large absolute value when we are confident that our ``guess" $\widehat{\bX}_j$ is correct. We discuss more implications of this result in the next section: for now, we obtain an analogous result for fixed-X knockoffs (similar to a result from \cite{whiteout2021}) by substituting $\{\bX_j^T \bigY, \tbX_j^T \bigY\}$ for $\{\bX_j, \tbX_j\}$.

\begin{proposition}\label{prop::fxdistinguish} Let $\tbX$ be fixed-X knockoffs satisfying $\bX_j^T \bigY \ne \tbX_j^T \bigY$ a.s. for $j \in [p]$. Then $W = w([\bX, \tbX], \bigY)$ is a valid feature statistic if and only if:
\begin{enumerate}[topsep=0pt, leftmargin=*]
    \setlength{\parskip}{0pt}
    \setlength{\itemsep}{0pt plus 1pt}
    \item $|W|$ is a function of the masked data $D$.
    \item For $j \in [p]$, there exists a $D$-measurable random variable $R_j$ such that $W_j > 0$ if and only if $R_j = \bX_j^T \bigY$.
\end{enumerate}
\end{proposition}

\begin{remark} Propositions \ref{prop::mxdistinguish}-\ref{prop::fxdistinguish} hold for knockoffs as defined in \cite{fxknock, mxknockoffs2018}. However, in the fixed-X case, one can also augment $D$ to include $\hat \sigma^2 = \|(I_n - H) \bigY\|_2^2$, where $H$ is the OLS projection matrix of $[\bX, \tbX]$ while preserving validity \citep{altsign2017, whiteout2021}. Our theory also applies to this extension of the knockoffs framework.
\end{remark}
%%COULDCUT this remark

\subsection{Introducing masked likelihood ratio (MLR) statistics}\label{subsec::mlr}

We now introduce masked likelihood ratio (MLR) statistics in two steps. First, we introduce oracle MLR statistics, which depend on the unknown law $P\opt$ of the data. Then, we introduce Bayesian MLR statistics, which substitute $P\bayes$ for $P\opt$. Throughout, we focus on MX knockoffs, but analogous results for FX knockoffs merely replace $\{\bX_j, \tbX_j\}$ with $\{\bX_j^T \bigY, \tbX_j^T \bigY\}$ (see Definition \ref{def::mlr}).

\underline{Step 1: Oracle MLR statistics}. We now apply Proposition \ref{prop::mxdistinguish} to achieve the two intuitive optimality criteria from Section \ref{subsec::hiv_motiv}.

\begin{itemize}[itemsep=0.5pt, topsep=0pt, leftmargin=*]
    \item Goal 1 asks that we maximize $P\opt(W_j > 0)$. Proposition \ref{prop::mxdistinguish} shows that ensuring $W_j > 0$ is equivalent to correctly guessing the value of $\bX_j$ from the masked data  $D = (\bigY, \{\bX_j, \tbX_j\}_{j=1}^p)$. Thus, to maximize $P\opt(W_j > 0 \mid D) = P\opt(\widehat{\bX}_j = \bX_j \mid D)$, the analyst should guess the value $\widehat{\bX}_j = \argmax_{\bx \in \{\bX_j, \tbX_j\}} P_j\opt(\bx \mid D)$ which  maximizes the likelihood that the guess is correct.
    \item Goal 2 asks us to order the guesses in descending order of $P\opt(W_j > 0)$, i.e., in descending order of the likelihood that each guess $\widehat{\bX}_j$ is correct. %%COULDCUT second clause
\end{itemize}

Both goals are achieved by using the masked data to compute a log-likelihood ratio between the two possible values of $\bX_j$ (namely $\{\bX_j, \tbX_j\}$). This defines the oracle masked likelihood ratio: 
\begin{equation}
    \MLR_j\oracle = \log\left(\frac{P_{j}\opt(\bX_j \mid D)}{P_{j}\opt(\tbX_j \mid D)}\right), %\tag{\ref{eq::oraclemlr}}
\end{equation}
where $P\opt_{j} \defeq P\opt_{\bX_j \mid D}$ is the true (unknown) conditional law of $\bX_j$ given $D$. Soon, Proposition \ref{prop::intuit_opt} will verify that $\MLR\oracle$ achieves both goals above, and Section \ref{subsec::avgopt} shows that $\MLR_j\oracle$ asymptotically maximizes the expected number of discoveries under $P\opt$ (under regularity conditions).

\underline{Step 2: Bayesian MLR statistics}. $\MLR^{\mathrm{oracle}}$ cannot be used in practice since it depends on $P\opt$. A heuristic solution is to ``plug in" an estimate $\hat P$ for $P\opt$. For example, given some model class $\mcP = \{P\uptheta : \theta \in \Theta\}$ of the law of $(\bigY, \bX)$, one could estimate $\hat \theta$ using $D$ and replace $P\opt$ with $P^{(\hat\theta)}$. However, this ``plug-in" approach can perform poorly, since knockoffs are most popular in high-dimensional settings with significant uncertainty about the true value of any unknown parameters.
Thus, to account for uncertainty, we suggest replacing $P\opt$ with a Bayesian model $P\bayes$.

\begin{definition}[MLR statistics]\label{def::mlr} For any Bayesian model $P\bayes$ (see Section \ref{subsec::problemstatement}), we define the model-X masked likelihood ratio (MLR) statistic below:

\begin{equation}\label{eq::mlr_def}
    \MLR_j\uppi \defeq \mlr_j\uppi([\bX, \tbX], \bigY) \defeq \log\left(\frac{P\bayes_{j}(\bX_j \mid D)}{P\bayes_{j}(\tbX_j \mid D)}\right)
        \text{ for model-X knockoffs,}\footnote{In the edge case where the MLR is zero, we set $\MLR_j\uppi \simind \Unif(\{-\epsilon, \epsilon\})$ where $\epsilon$ satisfies $\epsilon < |\MLR_k\uppi|$ for each $k$ such that the $k$th feature statistic is nonzero, i.e., $|\MLR_k\uppi| > 0$.
        }
    %\tag{\ref{eq::mlr}}
\end{equation}
where $P\bayes_{j} \defeq P\bayes_{\bX_j \mid D}$ denotes the conditional law of $\bX_j \mid D$ under $P\bayes$. 

The fixed-X MLR statistic is analogous but replaces $\{\bX_j, \tbX_j\}$ with $\{\bX_j^T \bigY, \tbX_j^T \bigY\}$. In particular, if $P\bayes_{j,\mathrm{fx}}$ denotes the conditional law of $\bX_j^T \bigY \mid D$ under $P\bayes$, then
\begin{equation}\label{eq::fx_mlr}
    \MLR_j\uppi \defeq \mlr_j\uppi([\bX, \tbX], \bigY) \defeq  \log\left(\frac{P\bayes_{j,\mathrm{fx}} (\bX_j^T \bigY \mid D)}{P\bayes_{j,\mathrm{fx}}(\tbX_j^T \bigY \mid D)}\right) \text{ for fixed-X knockoffs.}
\end{equation}

\end{definition}
\begin{lemma}\label{lem::mlrvalid} Equations (\ref{eq::mlr_def}) and (\ref{eq::fx_mlr}) define valid MX and FX knockoff statistics, respectively.
\end{lemma}

To see how MLR statistics account for uncertainty about nuisance parameters, let $\pi(\theta \mid D)$ denote the posterior density of $\theta\opt \mid D$ under $P\bayes$. We can write, e.g., in the model-X case:

\begin{equation}
    \MLR_j\uppi = \log\left(\frac{\int_{\Theta} P\uptheta_j(\bX_j \mid D) \pi(\theta \mid D) d\theta}{\int_{\Theta} P\uptheta_j(\tbX_j \mid D) \pi(\theta \mid D) d\theta}\right).
\end{equation}
Unlike the ``plug-in" approach, $\MLR_j\uppi$ does not rely on a single estimate of $\theta$---instead, it takes the weighted average of the likelihoods $P\uptheta_j(\bX_j \mid D)$, weighted by the posterior law of $\theta \mid D$ under $P\bayes$.

We now verify that under $P\bayes$, MLR statistics achieve the intuitive criteria from Section \ref{subsec::hiv_motiv} (Goals 1-2). This result applies to oracle MLR statistics, since $\MLR_j\oracle = \MLR_j\uppi$ in the special case where $P\bayes = P\opt$.

\begin{proposition}\label{prop::intuit_opt} Let $\MLR\uppi$ be the MLR statistics with respect to a Bayesian model $P\bayes$. Then for any other feature statistic $W$,
\begin{equation}\label{eq::bestsigns}
    P\bayes(\MLR_j\uppi > 0 \mid D) \ge P\bayes(W_j > 0 \mid D).
\end{equation}
Furthermore, $\{|\MLR_j\uppi|\}_{j=1}^p$ has the same order as $\{P\bayes(\MLR_j\uppi > 0 \mid D)\}_{j=1}^p$. More precisely,
\begin{equation}\label{eq::bestmags}
    P\bayes(\MLR_j\uppi > 0 \mid D) = \frac{\exp(|\MLR_j\uppi|)}{1 + \exp(|\MLR_j\uppi|)}.
\end{equation}
\end{proposition}

Equation (\ref{eq::bestmags}) shows that the absolute values $|\MLR_j\uppi|$ have the same order as $P\bayes(\MLR_j\uppi > 0 \mid D)$, so under $P\bayes$, MLR statistics prioritize the hypotheses ``correctly." More generally, if $\bX_j$ is predictive of $\bigY$ but $\tbX_j$ is nearly indistinguishable from $\bX_j$, $|\MLR_j\uppi|$ should be small, since $\bX_j \approx \tbX_j$ suggests $P\bayes_j(\bX_j \mid D) \approx P\bayes_j(\tbX_j \mid D)$. Thus, MLR statistics should rarely be highly negative (see Figure \ref{fig::wstatplot}).

Lastly, we make two connections to the literature. First, one proposal in \cite{RenAdaptiveKnockoffs2020} also suggests ranking the hypotheses by $\P(W_j > 0 \mid |W_j|)$ (see their footnote 8). That said, \cite{RenAdaptiveKnockoffs2020} do not propose a feature statistic accomplishing this. Rather, they develop ``adaptive knockoffs," an extension of knockoffs that can be combined with any predefined feature statistic, including MLR or lasso statistics. Indeed, using better initial feature statistics should increase the power of adaptive knockoffs, so our contribution is both orthogonal and complementary to theirs (see Appendix \ref{appendix::adaknock} for more details). Second, \cite{katsevichmx2020} show that the \textit{unmasked} likelihood statistic maximizes $P\opt(W_j > 0)$; indeed, our work builds on theirs. However, there are two key differences. First, unlike MLR statistics, the unmasked likelihood statistic is not a valid knockoff statistic even though it is marginally symmetric under the null (see Appendix \ref{appendix::unmasked_lr}), so it does not provably control the FDR. Second, MLR statistics have additional guarantees on their \textit{magnitudes} (Eq. \ref{eq::bestmags}), allowing us to show much stronger theoretical results in Section \ref{subsec::avgopt}.

\begin{remark} Appendix \ref{appendix::groupmlr} extends this section's results to apply to \textit{group} knockoffs \citep{daibarber2016}. 
\end{remark}

\subsection{MLR statistics are asymptotically optimal}\label{subsec::avgopt}

We now show that MLR statistics asymptotically maximize $\Powervtwo\uppi$, the expected number of discoveries under $P\bayes$. Indeed, Proposition \ref{prop::intuit_opt} might make one hope that MLR statistics exactly maximize $\Powervtwo\uppi$, since MLR statistics exactly accomplish Goals 1-2 from Section \ref{subsec::hiv_motiv}. This intuition is correct under the conditional independence condition below (generalizing \cite{whiteout2021} Proposition 2).

\begin{proposition}\label{prop::exactopt} If $\{\I(\MLR_j\uppi > 0)\}_{j=1}^p$ are conditionally independent given $D$ under $P\bayes$, then \newline $\Powervtwo\uppi(\mlr\uppi) \ge \Powervtwo\uppi(w)$ for any valid feature statistic $w$.
%the MLR statistics with respect to $P\bayes$ exactly maximize the expected number of discoveries under $P\bayes$. 
\end{proposition}

Furthermore, in Gaussian linear models, oracle MLR statistics satisfy this conditional independence condition, making them finite-sample optimal.

\begin{proposition}\label{prop::gaussian_cond_ind}
Suppose that (i) $\bX$ are FX knockoffs or Gaussian conditional MX knockoffs \citep{condknock2019} and (ii) under $P\opt$, $\bigY \mid \bX \sim \mcN(\bX \beta, \sigma^2 I_n)$. Then under $P\opt$, $\{\I(\MLR_j\oracle > 0)\}_{j=1}^p \mid D$ are conditionally independent.
\end{proposition}

Absent this independence condition, it may be possible to exploit dependencies among the coordinates of $\sign(\MLR\uppi)$ to slightly improve power. Yet Appendix \ref{subsec::finitesampleopt} shows that to improve power even slightly seems to require pathological dependencies, making it hard to imagine that accounting for dependencies can substantially increase power in practice. Formally, we now show that MLR statistics are \textit{asymptotically} optimal under regularity conditions on the dependence of $\sign(\MLR\uppi) \mid D$.

To this end, consider any asymptotic regime where we observe $\bX^{(n)} \in \R^{n \times p_n}, \bigY^{(n)} \in \R^{n}$ and construct knockoffs $\tbX^{(n)}$. For each $n$, let $P\bayes_n$ denote a Bayesian model based on a model class $\mcP\upn = \{P\uptheta : \theta \in \Theta\upn\}$ and prior density $\pi\upn : \Theta\upn \to \R_{\ge 0}$. Let $D^{(n)}$ denote the masked data (Definition \ref{def::maskeddata}). For a sequence of feature statistics $W\upn = w_n([\bX\upn, \tbX\upn], \bigY\upn)$, let $S\upn(q)$ denote the rejection set of $W\upn$ when controlling the FDR at level $q$. So far, we have made no assumptions about the law of $\bigY\upn, \bX\upn$ under $P\bayes_n$, and we allow the dimension $p_n$ to grow arbitrarily with $n$. To analyze the asymptotic behavior of MLR statistics under $P\bayes_n$, we need two main assumptions.

\begin{assumption}[Sparsity]\label{assump::sparsity} 
%Let $\numnonnull = \E_{P\bayes_n}\left[\left|\{j\in [p_n] : X_j \not \Perp Y \mid X_{-j}, \theta\opt\}\right|\right]$ denote the expected number of non-nulls under $P_n\bayes$. 
For $\theta \in \Theta\upn$, let $\numnonnull\uptheta$ denote the number of non-nulls under $P\uptheta$ and $\numnonnull = \int_{\Theta} \numnonnull\uptheta \pi(\theta) d\theta$ denote the expected number of non-nulls under $P_n\bayes$.
We assume $\numnonnull \gg \log(p_n)^5$ as $n \to \infty$.
\end{assumption}

Assumption \ref{assump::sparsity} allows for many previously studied sparsity regimes, such as polynomial \citep{donohojin2004, ke2020} and linear \citep[e.g.,][]{weinstein2017} sparsity regimes.

\begin{assumption}[Local dependence]\label{assump::localdep} Under $P\bayes_n$, the conditional covariance matrix of $\sign(\MLR\uppi)$ given $D\upn$ decays exponentially off its diagonal. Formally, there exist constants $C\ge 0, \rho \in (0,1)$ such that
\begin{equation}\label{eq::expcovdecay}
    |\cov_{P\bayes_n}(\I(\MLR\uppi_i > 0), \I(\MLR\uppi_j > 0) \mid D\upn)| \le C \rho^{|i-j|}.
\end{equation}
\end{assumption}

%  We give a few justifications for this assumption:
% \begin{enumerate}[noitemsep, topsep=0pt, leftmargin=*]
%     \item This assumption can be diagnosed using the data. The covariances in Eq. (\ref{eq::expcovdecay}) are known quantities, since they depend only on $P\bayes$, which is specified by the analyst. In Section \ref{subsec::computation}, we develop an algorithm to compute $\cov_{P\bayes}(\sign(\MLR\uppi) \mid D)$, allowing analysts to assess the plausibility of Assumption \ref{assump::localdep}. 
%     \item In each of our simulations and three real data analyses, we find that $\cov_{P\bayes}(\sign(\MLR\uppi) \mid D)$ is nearly indistinguishable from a diagonal matrix, suggesting that Assumption \ref{assump::localdep} holds in practice (at least for the choices of $P\bayes$ we recommend). We give further intuition for this empirical finding in Section \ref{sec::sims}, but in brief, this is not too surprising because knockoffs are designed to ensure that $\sign(W)$ are conditionally independent for null variables, and this roughly extends to non-null variables as well.
% \end{enumerate}

Assumption \ref{assump::localdep} quantifies the requirement that $\sign(\MLR\uppi) \mid D\upn$ are not ``too" conditionally dependent. Similar local dependence conditions are common in the multiple testing literature \citep{genovese2004, storey2004, ferreira2006, farcomeni2007}, although previous assumptions are typically made about p-values. We justify this assumption below.
\begin{enumerate}[noitemsep, topsep=0pt, leftmargin=*]
    \item This assumption is intuitively plausible because knockoffs guarantee that the null coordinates of $\sign(W)$ are independent given $D$ under $P\opt$, regardless of the correlations among $\bX$ \citep{fxknock}. This independence also holds for non-null coordinates in Gaussian linear models (see Prop. \ref{prop::gaussian_cond_ind}). Appendix \ref{appendix::dependencediscussion} gives additional informal intuition explaining why this result often holds approximately under $P_n\bayes$ for both null and non-null variables.
    \item  Empirically, $\cov_{P\bayes}(\sign(\MLR\uppi) \mid D)$ is nearly indistinguishable from a diagonal matrix in all of our simulations and three real analyses. This suggests that Assumption \ref{assump::localdep} holds in practice.
    \item This assumption can also be diagnosed in real applications, since it depends only on $P\bayes$, which is specified by the analyst. To this end, Section \ref{subsec::computation} shows how to compute $\cov_{P\bayes}(\sign(\MLR\uppi) \mid D)$.% and diagnose Assumption \ref{assump::localdep} in real applications.
    \item Explicit analysis of the covariances in Eq. (\ref{eq::expcovdecay}) is known to be challenging. Nonetheless, in Appendix \ref{subsec::unhappy}, we prove that Assumption \ref{assump::localdep} holds if the design matrix $\bX$ is blockwise orthogonal, which is an important (if not entirely realistic) special case studied by \cite{ke2020}.
    \item Assumption \ref{assump::localdep} can also be substantially relaxed (see Appendix \ref{subsec::assumptions}). All we require is that the partial sums of $\{\sign(\MLR_j\uppi)\}_{j=1}^p$ obey a strong law of large numbers conditional on $D$.
\end{enumerate}

With these two assumptions, we show that MLR statistics asymptotically maximize $\Power_q(w_n)$, the expected number of discoveries normalized by the expected number of non-nulls:
\begin{equation}\label{eq::powerdef}
    \Power_q(w_n) \defeq \frac{\E_{(\bX\upn, \bigY\upn) \sim P\bayes_n}[|S\upn(q)|]}{\numnonnull}.
\end{equation}

% \begin{remark}\label{rem::numdisc_not_power} $\Power_q(w_n)$ counts the (normalized) expected number of discoveries instead of the number of \textit{true} discoveries. Intuitively, these quantities should be similar since knockoffs provably control the FDR. However, Section \ref{subsec::amlr}  
% extends our analysis to analyze the expected number of true discoveries.
% \end{remark}

\begin{theorem}\label{thm::avgopt} Consider any high-dimensional asymptotic regime where we observe data $\bX^{(n)} \in \R^{n \times p_n}, \bigY^{(n)} \in \R^{n}$ and knockoffs $\tbX^{(n)}$ with $D^{(n)}$ denoting the masked data. Let $P\bayes_n$ be a sequence of Bayesian models of the data satisfying Assumptions \ref{assump::sparsity}-\ref{assump::localdep}, and let $\mlr_n\uppi([\bX^{(n)}, \tbX^{(n)}], \bigY^{(n)})$ denote the MLR statistics with respect $P\bayes_n$. Let $w_n([\bX\upn, \tbX\upn], \bigY\upn)$ denote any other sequence of feature statistics. 

Then, if the limits $\lim_{n \to \infty} \Power_q(w_n)$ and $\lim_{n \to \infty} \Power_q(\mlr_n\uppi)$ exist for $q \in (0,1)$, we have that
\begin{equation}
    \lim_{n \to \infty} \Power_q(\mlr_n\uppi) \ge \lim_{n \to \infty} \Power_q(w_n)
\end{equation}
holds for all but countably many $q$.
\end{theorem}

Theorem \ref{thm::avgopt} shows that MLR statistics asymptotically maximize the (normalized) number of expected discoveries without any explicit assumptions on the relationship between $\bigY$ and $\bX$ or the dimensionality. Besides Assumptions \ref{assump::sparsity}-\ref{assump::localdep}, we also assume that the quantities we aim to study actually exist, i.e., $\lim_{n \to \infty} \Power_q(w_n)$ and $\lim_{n \to \infty} \Power_q(\mlr_n\uppi)$ exist---however, even this assumption can be relaxed (see Appendix \ref{subsec::assumptions}). Yet the weakest aspect of Theorem \ref{thm::avgopt} is that MLR statistics are only provably optimal under $P\bayes$. If $P_j\opt$ and $P_j\bayes$ are quite different, MLR statistics may not perform well. For this reason, Section \ref{subsec::gams} suggests practical choices of $P\bayes$ that performed well empirically, even under misspecification.
%Indeed, the first two assumptions of Theorem \ref{thm::avgopt} are quite weak: the first assumption merely guarantees that the limiting powers we aim to study actually exist, and the second is only a mild restriction on the sparsity regime. Indeed, our setting allows for many previously studied sparsity regimes, such as a polynomial sparsity model \citep{donohojin2004, ke2020} and the linear sparsity regime \citep{weinstein2017}. Nonetheless, these assumptions can be substantially weakened at the cost of a more technical theorem statement, as discussed in Appendix \ref{subsec::assumptions}.

\subsection{Maximizing the expected number of true discoveries}\label{subsec::amlr}

We now introduce \textit{adjusted} MLR (AMLR) statistics, which asymptotically maximize the number of expected \emph{true} discoveries under $P\bayes$. Empirically, AMLR and MLR statistics perform similarly, 
but AMLR statistics are less intuitive and depend somewhat counterintuitively on the FDR level. (This is why our paper focuses mostly on MLR statistics.) Thus, for brevity, this section gives only a little intuition and a slightly informal theorem statement. Please see Appendix \ref{appendix::tpr} for a rigorous theorem statement.

We begin with notation. For $\theta \in \Theta$, $\mcH_1(\theta) \subset [p]$ denotes the set of non-nulls under $P\uptheta$ and $\mcH_1(\theta\opt) \subset [p]$ denotes the random set of non-nulls under $P\bayes$. Then, $P\bayes(\MLR_j\uppi > 0, j \in \mcH_1(\theta\opt) \mid D)$ is the conditional probability that $\MLR_j\uppi$ is positive and the $j$th feature is non-null given the masked data. Finally, define the following ratio $\nu_j$:
\begin{equation}
    \nu_j = \frac{P\bayes(\MLR_j\uppi > 0, j \in \mcH_1(\theta\opt) \mid D)}{(1+q)^{-1} - P\bayes(\MLR_j\uppi > 0 \mid D)}.
\end{equation}
\begin{definition}\label{def::amlr} With this notation, we now define AMLR statistics $\{\AMLR_j\}_{j=1}^p$ in two cases.
\begin{itemize}[itemsep=0.5pt, topsep=0pt, leftmargin=*]
    \item Case 1: $\AMLR_j\bayes = \MLR_j\bayes$ if $P\bayes(\MLR_j\bayes > 0 \mid D) \ge (1+q)^{-1}$.
    \item Case 2: Otherwise, with $\logit(x) \defeq \log(x/(1-x))$, we define
\begin{equation}\label{eq::amlr}
    \AMLR_j\bayes = \sign(\MLR_j\bayes) \cdot \logit\left((1+q)^{-1}\right) \cdot \logit^{-1}\left(\nu_j \right).
\end{equation}
By construction, all AMLR statistics in Case 2 have smaller absolute values than all statistics in Case 1. Note that Appendix \ref{subappendix::computing_amlr} shows how to compute AMLR statistics. 
\end{itemize}
\end{definition}

\begin{corollary}\label{cor::amlr_valid} AMLR statistics from Definition \ref{def::amlr} are valid knockoff statistics.
\end{corollary}

MLR and AMLR statistics have the same signs but different absolute values. To understand why, Appendix \ref{appendix::tpr} argues that maximizing the expected number of true discoveries can be formulated as a simple linear program where the ``benefit" of prioritizing a feature is $b_j \defeq P\uppi(\MLR_j\uppi > 0, j \in \mcH_1(\theta\opt) \mid D)$---the probability that $\MLR_j\uppi$ is positive and $j$ is non-null---and the ``cost" is $c_j \defeq (1+q)^{-1} - P\uppi(\MLR_j\uppi > 0 \mid D)$. The intuition is that to make $k$ discoveries, $\approx (1+q)^{-1} k$ of the $k$ feature statistics with the largest absolute values must have positive signs. Thus, $c_j$ measures the difference between $(1+q)^{-1}$ and the (conditional) probability that $\MLR_j\uppi$ is positive. Feature $j$ has a \textit{negative} cost $c_j < 0$ if it produces a ``surplus" of $\ge (1+q)^{-1}$ positive signs in expectation.

The optimal solution to this problem is to (a) maximally prioritize all features with negative costs by giving them the highest absolute values---i.e., the features in Case 1 above---and (b) prioritize all other features in descending order of the benefit-cost ratio $\nu_j = b_j / c_j$. This is accomplished by the AMLR formulas in Definition \ref{def::amlr}. In contrast, MLR statistic magnitudes are a decreasing function of only the costs $c_j$. By incorporating the benefit $b_j$, AMLR statistics reduce the expected number of discoveries while increasing the expected number of \textit{true} discoveries. See Appendix \ref{appendix::tpr} for further details.

AMLR and MLR statistics are different but not \textit{too} different, since typically, $\{P\bayes(\MLR_j\bayes > 0, j \in \mcH_1(\theta\opt) \mid D)\}_{j=1}^p$ has a similar order as $\{P\bayes(\MLR_j\bayes > 0 \mid D)\}_{j=1}^p$. (When the orders are the same, AMLR and MLR statistics yield identical rejection sets.) Indeed, Figure \ref{fig::amlr} shows in a simple simulation that the power of AMLR and MLR statistics is nearly identical.

We now show that AMLR statistics asymptotically maximize power under $P\bayes$ (see Appendix \ref{appendix::tpr} for a formal statement and proof). For any statistic $w([\bX, \tbX], \bigY)$ with discovery set $S_w \subset [p]$, let $\TP\bayes(w)$ denote the expected number of true positives under $P\bayes$:
\begin{equation}\label{eq::tpdef}
    \TP\bayes(w) \,\defeq\, \int_{\Theta} \E_{P\uptheta}\big[|S_w \cap \mcH_1(\theta)|\big] \, \pi(\theta) \, d\theta \,=\, \E_{P\bayes}[|S_w \cap \mcH_1(\theta\opt)|].
\end{equation}

\begin{theorem}[Informal]\label{thm::amlr_power_informal} Suppose the conditions of Theorem \ref{thm::avgopt} hold. Furthermore, suppose that (i) the local dependence condition in Assumption \ref{assump::localdep} holds when replacing $\I(\MLR_j\uppi > 0)$ with $\I(\MLR_j\uppi > 0, j \in \mcH_1(\theta\opt))$ and (ii) the coefficient of variation of the number of non-nulls $|\mcH_1(\theta\opt)|$ is bounded as $n \to \infty$. Then for any sequence of feature statistics $\{w_n\}_{n \in \N}$,
\begin{equation}
    \TP\bayes(\amlr) \ge \TP\bayes(w_n) + o(s_n),
\end{equation}
where $s_n$ is the expected number of non-nulls under $P_n\bayes$, as defined in Assumption \ref{assump::sparsity}.
\end{theorem}

\begin{figure}
    \centering
    \includegraphics[width=\linewidth]{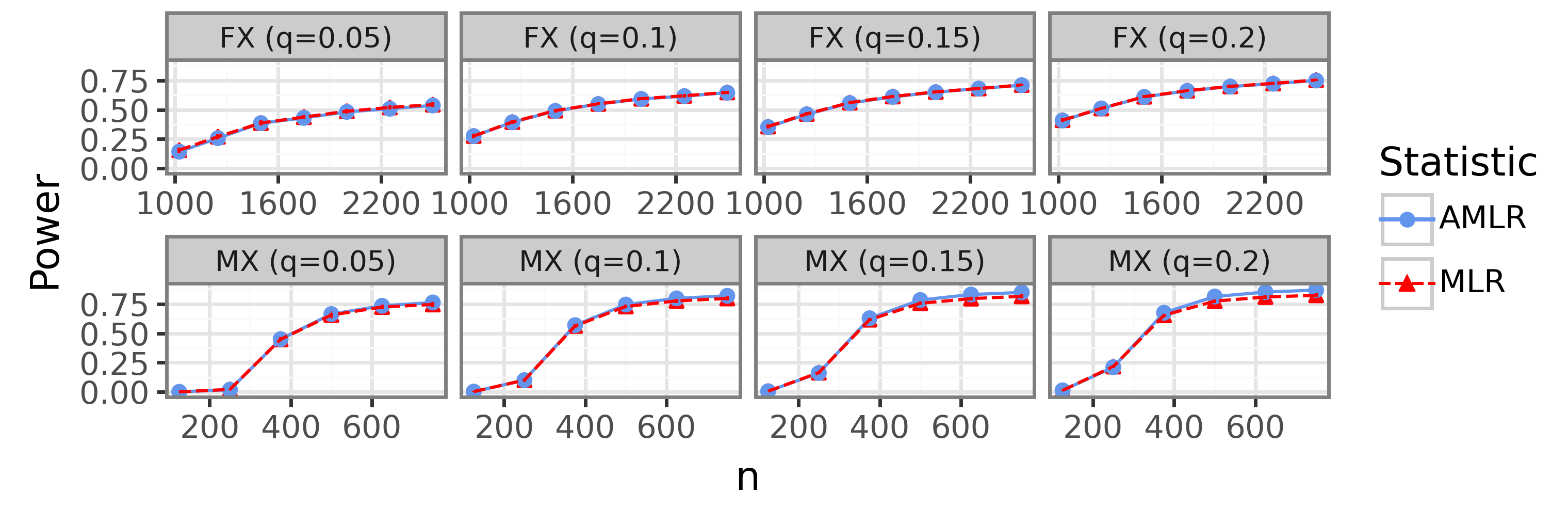}
    \caption{In the AR(1) simulation setting from Section \ref{subsec::linear_sims}, this figure plots the power of MLR and AMLR statistics (using MVR fixed-X or model-X knockoffs) for different nominal FDR levels $q$. It shows that both statistics have essentially the same power. See Section \ref{subsec::linear_sims} for details.
    }\label{fig::amlr}
\end{figure}

\section{Computing MLR statistics}\label{subsec::computation}

\subsection{General strategy}

We now show how to compute $P\bayes_j(\bX_j \mid D)$ and $P\bayes_j(\tbX_j \mid D)$ by Gibbs sampling from the law of $\bX \mid D$ under $P\bayes$. For brevity, we focus on the MX setting---Appendix \ref{appendix::gibbs} discusses the FX case. 

The key idea is that
\textit{conditional} on $\bX_{-j}$ and the latent parameter $\theta\opt$, sampling from the law of $\bX_j \mid \bX_{-j}, \theta\opt, D$ is easy. In particular, for any fixed $\bd = (\mathbf{y}, \{\bx_j, \tbx_j\}_{j=1}^p)$, observing $D = \bd$ implies that $\bX_j$ must lie in $\{\bx_j, \tbx_j\}$. Lemma \ref{lem::gibbs_sampler_result} shows that the conditional likelihood ratio equals:
\begin{align}
        \frac{P\bayes(\bX_j = \bx_j \mid \bX_{-j}, \theta\opt = \theta, D = \bd)}{P\bayes(\bX_j = \tbx_j \mid \bX_{-j}, \theta\opt = \theta, D = \bd)} 
    &=
        \frac{P_{\bigY \mid \bX}\uptheta(\smally \mid \bX_j = \bx_j, \bX_{-j})}{P_{\bigY \mid \bX}\uptheta(\smally \mid \bX_j = \tbx_j, \bX_{-j})} \label{eq::resamplexj}.
\end{align}
The right-hand side of Eq. (\ref{eq::resamplexj}) is easy to compute for most parametric models $\mcP$, since it only involves computing the likelihood of $\bigY$ given $\bX$. Thus, we can easily sample from the law of $\bX_j \mid D, \theta\opt, \bX_{-j}$. 

To sample from the law of $\bX \mid D$, Algorithm \ref{alg::mlrgibbs} describes a Gibbs sampler which (i) for $j \in [p]$, resamples from $\bX_j \mid \bigY, \bX_{-j}, \theta\opt$ and (ii) resamples from the posterior of $\theta\opt \mid \bigY, \bX$. Step (ii) can be done using any off-the-shelf Bayesian sampler  \citep{mcmcbook2011}, since this step is identical to a typical Bayesian regression. Lemma \ref{lem::gibbs} shows that Algorithm \ref{alg::mlrgibbs} correctly computes the MLR statistics as $n_\sample \to \infty$ under standard regularity conditions \citep{robertcasella2004mcmc}. These mild conditions are satisfied by our default choices (Example \ref{ex::sparsegam}), but they can also be relaxed further (see Appendix \ref{subsec::gibbsproof}).

\begin{algorithm}[h!]
\caption{Gibbs sampling meta-algorithm to compute MLR statistics.}\label{alg::mlrgibbs}
\algorithmicensure\,  $\bigY, \bX, \tbX$,  
a model class $\{P\uptheta : \theta \in \Theta\}$ and prior $\pi : \Theta \to \R_{\ge 0}$.

\begin{algorithmic}[1]
        \State Initialize $\theta^{(0)} \sim \pi$ and 
        $\bX_j^{(0)} \simind \Unif(\{\bX_j, \tbX_j\})$ for $j \in [p]$. \smallskip \Comment{Initialization}
        % and $(\bX_j^{(0)}, \tbX_j^{(0)}) \simind \Unif\left(\left\{(\bX_j, \tbX_j), (\tbX_j, \bX_j)\right\}\right)$ for $j \in [p]$.
        \For {$i=1,2,\dots,n_\sample$}:
            \State Initialize $\bX^{(i)} = \bX^{(i-1)} \in \R^{n \times p}$. \medskip
            %and $\theta^{(i)} = \theta^{(i-1)} \in \R^{k}$.
            \For {$j=1,\dots,p$}:  \Comment{Resample $\bX^{(i)}$}
                \State Set $\eta_j^{(i)} = \log\left(P_{\bigY \mid \bX}^{(\theta_i)}(\bigY \mid [\bX_{-j}^{(i)}, \bX_j] \right) - \log\left(P_{\bigY \mid \bX}^{(\theta_i)}(\bigY \mid [\bX_{-j}^{(i)}, \tbX_j])\right)$.
                % \item Set $p_j^{(i)} = \exp\left(\frac{\exp(\eta_j^{(i)}}{1 + \exp(\eta_j^{(i))}}\right)$
                \State Define $p_j^{(i)} = \logit^{-1}(\eta_j^{(i)})$.
                \State Set $\bX_j^{(i)} = \bX_j$ with probability $p_j^{(i)}$.  Else set $\bX_j^{(i)} = \tbX_j$.
            \EndFor
            \medskip 

            %\For {$k=1,\dots,K$}:
            \State Sample $\theta^{(i)}$ from the law of $\theta\opt \mid \bigY, \bX = \bX^{(i)}$ under $P\bayes$.  \Comment{Resample $\theta^{(i)}$}
                %$\theta^{(i)}_k \sim P\bayes(\theta_k\opt = \cdot \mid \bX = \bX^{(i)}, \bigY = \bigY, \theta\opt_{-k} = \theta_{-k}^{(i)})$.
            %\EndFor
        \EndFor
    \smallskip 
    \State Return $\MLR_j\uppi = \log\left(\sum_{i=1}^{n_\sample} p_j^{(i)}\right) - \log\left(\sum_{i=1}^{n_\sample} 1 - p_j^{(i)}\right)$, for $j \in [p]$.
	\end{algorithmic} 
\end{algorithm}

\begin{lemma}\label{lem::gibbs} Let $p_j^{(i)}$ be defined as in Algorithm \ref{alg::mlrgibbs}. Suppose that under $P\bayes$, (i) $p_j^{(i)} \in (0,1)$ a.s. for $j \in [p]$ and (ii) the support of $\theta\opt \mid \bX, \bigY$ equals $\Theta$. Then as $n_\sample \to \infty$,
\begin{equation*}
        \log\left(\sum_{i=1}^{n_\sample} p_j^{(i)}\right) 
        - \log\left(\sum_{i=1}^{n_\sample} 1 - p_j^{(i)}\right)
    \toprob
        \MLR_j\uppi
    \defeq 
        \log\left(\frac{P_j\bayes(\bX_j \mid D)}{P_j\bayes(\tbX_j \mid D)}\right).
\end{equation*}
\end{lemma}

\begin{remark}\label{rem::covs} Algorithm \ref{alg::mlrgibbs} also allows us to diagnose Assumption \ref{assump::localdep}. Prop. \ref{prop::mxdistinguish} yields that if $\widehat{\bX}_j = \argmax_{\bx \in \{\bX_j, \tbX_j\}} P_j\bayes(\bx \mid D)$, then
\begin{align}
    \cov_{P\bayes}(\I(\MLR_k\uppi > 0), \I(\MLR_j\uppi > 0) \mid D)
    =
    \cov_{P\bayes}(\I(\widehat{\bX}_k = \bX_k), \I(\widehat{\bX}_j = \bX_j) \mid D).
\end{align}
Thus, we can approximate the covariance above with the empirical covariance of $\{\I(\bX_j^{(i)} = \widehat{\bX}_j)\}_{i=1}^{n_\sample}$ and $\{\I(\bX_k^{(i)} = \widehat{\bX}_k)\}_{i=1}^{n_\sample}$, where $\{\bX^{(i)}\}_{i=1}^{n_\sample}$ are the samples from Algorithm \ref{alg::mlrgibbs}.
\end{remark}

\begin{remark}
In the special case of Gaussian linear models with a sparse prior on the coefficients $\beta$, Algorithm \ref{alg::mlrgibbs} is similar in flavor to the ``Bayesian Variable Selection" (BVS) feature statistic from \cite{mxknockoffs2018}, although there are differences in the Gibbs sampler and the final estimand. Broadly, we see our work as complementary to theirs. Yet aside from technical details, a main difference is that \cite{mxknockoffs2018} seemed to argue that the advantage of BVS was to incorporate accurate prior information. In contrast, we argue that MLR statistics can improve power even without prior information  (see Section \ref{sec::sims}) by estimating the right notion of variable importance. 
\end{remark}

\subsection{A default choice of Bayesian model}\label{subsec::gams}

Below, we describe a class of Bayesian models that is computationally efficient and can flexibly model both linear and nonlinear relationships. Note that to specify $\mcP$, it suffices to model the law of $\bigY \mid \bX$, since the law of $\bX$ is assumed known in the MX case and $\bX$ is fixed in the FX case.

\begin{example}[Sparse generalized additive model]\label{ex::sparsegam} For linear coefficients $\beta\upj \in \R^{k_j}$ and noise variance $\sigma^2 \in \R$, let $\theta = (\beta^{(1)}, \dots, \beta^{(p)}, \sigma^2) \in \Theta \defeq \R^K \times \R_{\ge 0}$. For a prespecified set of basis functions $\phi_j : \R \to \R^{k_j}$, we consider the model class $\mcP = \{P\uptheta : \theta \in \Theta\}$ where
\begin{equation}
\bigY_i \mid \bX \simind \mcN\left(\sum_{j=1}^p \phi_j(\bX_{ij})^T \beta\upj, \sigma^2\right) \text{ for } i = 1, \dots, n \text{ under } P\uptheta.
\end{equation}
By default, we take $\phi_j$ to be the identity function, which reduces to a Gaussian linear model. However, if $\bigY$ and $\bX$ may have nonlinear relationships, we suggest taking $\phi_j(\cdot)$ to be the basis representation of regression splines (see \cite{esl2001} for review), as we do in Section \ref{subsec::nonlinear_sims}. 
For the prior, we let $\pi$ denote the law of $\theta\opt$ after sampling from the following process:
\begin{itemize}[noitemsep, topsep=0pt]
    \item Sample hyperparameters $p_0 \sim \Beta(a_0, b_0)$ (sparsity), $\tau^2 \sim \invGamma(a_{\tau}, b_{\tau})$ (signal size), and $\sigma^2 \sim \invGamma(a_{\sigma}, b_{\sigma})$ (noise variance). By default, we take $a = b = b_{\tau} = b_{\sigma} = 1, a_{\tau} = a_{\sigma} = 2$.
    \item Sample $\beta\upj = B_j Z_j$ for $Z_j \simind \mcN(0, \tau^2 I_{k_j})$ and $B_j \iid \Bern(1-p_0).$
\end{itemize}
This group-sparse prior is effectively a ``two-groups" model, as $\bX_j$ is null if and only if $\beta\upj = 0$. As shown in Section \ref{sec::sims}, using these hyperpriors allows us to adaptively estimate the sparsity level. 
\end{example}

Standard techniques for ``spike-and-slab" models \citep{mcculloch1997} allow us to compute the MLR statistics from Ex. \ref{ex::sparsegam} in $O(n_\sample n p)$ operations (assuming  $\sum_{j=1}^p k_j = O(p)$)---see Appendix \ref{appendix::gibbs} for review. This cost is cheaper than computing Gaussian MX or FX knockoffs, which requires $O(n p^2  + p^3)$ operations. Fitting the LASSO has a comparable cost, which is $O(n_\sample n p)$ using coordinate descent or $O(n p^2)$ using the LARS algorithm \citep{efron2004}.

Lastly, we can easily extend this algorithm to binary responses. In particular, using techniques from \cite{albertchib1993}, we can compute Gibbs updates in the same computational complexity when $P\bayes(Y = 1 \mid X) = \Phi\left(\sum_{j=1}^p \phi_j(X_j)^T \beta\upj \right)$, where $\Phi$ is the Gaussian CDF (see Appendix \ref{appendix::gibbs} for details).

\section{Simulations}\label{sec::sims}

We now analyze the power of MLR statistics in simulations. Throughout, MLR statistics do \textit{not} have accurate prior information: we use exactly the same choice of Bayesian model $P\bayes$ (the default from Section \ref{subsec::gams}) to compute MLR statistics in every plot. Also, we let $\bX$ be highly correlated to test whether MLR statistics perform well even when Assumption \ref{assump::localdep} may fail. Nonetheless, MLR statistics uniformly outperform existing competitors.%, suggesting that they are robust to misspecification of the Bayesian model and highly correlated features.

The FDR level is $q=0.05$. All plots have two standard deviation error bars, although the bars may be too small to be visible. In each plot, knockoffs provably control the frequentist FDR, so we only plot power. All code is available at \url{https://github.com/amspector100/mlr_knockoff_paper}.

\subsection{Gaussian linear models}\label{subsec::linear_sims}

\begin{figure}
    \centering
    \includegraphics[width=\linewidth]{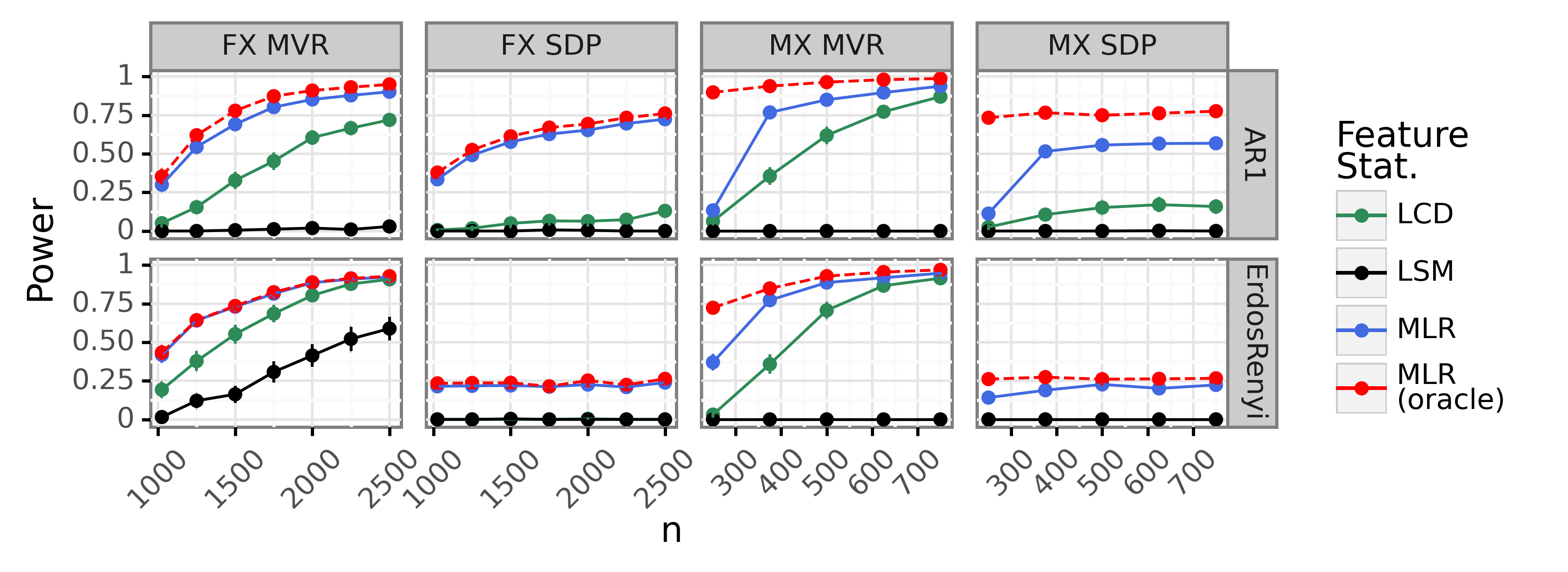}
    \caption{Power of MLR, LCD, and LSM statistics in a sparse Gaussian linear model with $p=500$ and $50$ non-nulls. For FX knockoffs, MLR statistics almost exactly match the power of the oracle procedure which provably upper bounds the power of any feature statistic. For MX knockoffs, MLR statistics are slightly less powerful than the oracle, although they are still very powerful compared to the lasso-based statistics. Note that the power of knockoffs can be roughly constant in $n$ in the ``SDP" setting: this is because SDP knockoffs sometimes have identifiability issues \citep{mrcknock2022}. Also, note that the ``LCD" and ``LSM" curves completely overlap in two of the bottom panels, where both methods have zero power. See Appendix \ref{appendix::simdetails} for precise simulation details.}
    \label{fig::linear_model}
\end{figure}

In this section, we sample $\bigY \mid \bX \sim \mcN(\bX \beta, I_n)$ for sparse $\beta$. We draw $X \simiid \mcN(0, \Sigma)$ for two choices of $\Sigma$. By default, $\Sigma$ corresponds to a highly correlated nonstationary AR(1) process, inspired by real genetic design matrices. However, we also analyze an ``ErdosRenyi" covariance matrix where $\Sigma$ is $80\%$ sparse with the nonzero entries drawn uniformly at random. We compute both ``SDP" and ``MVR" knockoffs \citep{mxknockoffs2018, mrcknock2022} to show that MLR statistics perform well in both cases. See Appendix \ref{appendix::simdetails} for further simulation details.

We compare four feature statistics. First, we compute MLR statistics using the default Bayesian model from Section \ref{subsec::computation}---in plots, ``MLR" refers to this version of MLR statistics. Second, we compute LCD and LSM statistics as described in Section \ref{subsec::hiv_motiv}. Lastly, we compute the oracle MLR statistics which have full knowledge of the true value of $\beta$. Figure \ref{fig::linear_model} shows the results while varying $n$ in low dimensions (using FX knockoffs) and high dimensions (using MX knockoffs). It shows that MLR statistics are substantially more powerful than the lasso-based statistics and, in the FX case, MLR statistics almost perfectly match the power of the oracle. Indeed, this result holds even for the ``ErdosRenyi" covariance matrix, where $\bX$ exhibits strong non-local dependencies (in contrast to Assumption \ref{assump::localdep}). Furthermore, Figure \ref{fig::comptime} shows that MLR statistics are computationally efficient, often faster than a cross-validated lasso and comparable to the cost of computing FX knockoffs.

\begin{figure}[!ht]
    \centering
    \includegraphics[width=\linewidth]{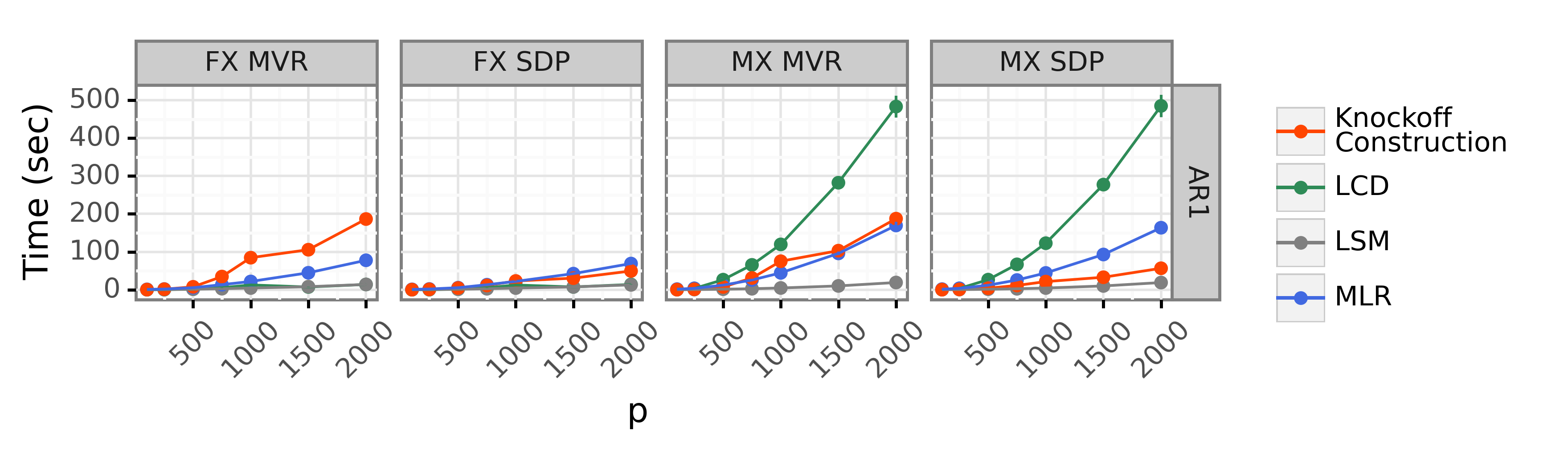}
    \caption{This figure shows the computation time for various feature statistics in the same setting as Figure \ref{fig::linear_model}, as well as the cost of computing knockoffs. It shows that MLR statistics are competitive with state-of-the-art feature statistics (in the model-X case) or comparable to the cost of computing knockoffs (in the fixed-X case).}
    \label{fig::comptime}
\end{figure}

Next, we analyze the performance of MLR statistics when the prior is misspecified. In Figure \ref{fig::misspec}, we vary the sparsity (proportion of non-nulls) between $5\%$ and $40\%$, and we draw the non-null coefficients as (i) heavy-tailed i.i.d. Laplace variables and (ii) ``light-tailed" i.i.d. $\Unif([-1/2, -1/4] \cup [1/4, 1/2])$ variables. In all cases, the MLR prior assumes the non-null coefficients are i.i.d. $\mcN(0, \tau^2)$ with sparsity $p_0 \sim \Beta(1, 1)$. Nonetheless, MLR statistics consistently outperform the lasso-based statistics and nearly match the performance of the oracle.

\begin{figure}[!ht]
    \centering
    \includegraphics[width=\linewidth]{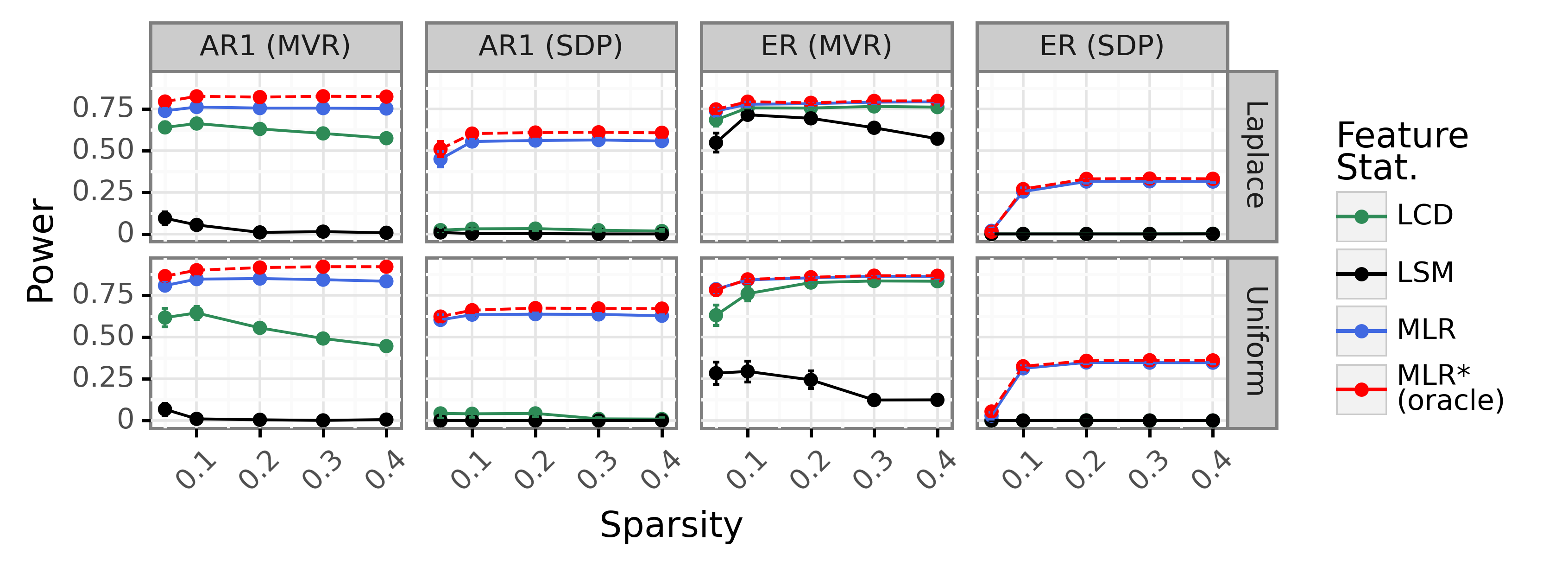}
    \caption{This figure shows the power of MLR, LCD, and LSM statistics when varying the sparsity level and drawing the non-null coefficients from a heavy-tailed (Laplace) and light-tailed (Uniform) distribution, with $p=500$ and $n=1250$. The setting is otherwise identical to the AR1 setting from Figure \ref{fig::linear_model}. It shows that the MLR statistics perform well despite using the same (misspecified) prior in every setting.}
    \label{fig::misspec}
\end{figure}

Lastly, we verify that the local dependence condition assumed in Theorem \ref{thm::avgopt} holds empirically. We consider the AR(1) setting but modify the parameters so that $\bX$ is extremely highly correlated, with adjacent correlations drawn as i.i.d. $\Beta(50, 1)$ variables. We also consider a setting where $\bX$ is equicorrelated with correlation $95\%$. In both cases, Figure \ref{fig::dependence} shows that $\cov_{P\bayes}(\I(\MLR\uppi > 0) \mid D)$ has entries which decay off the main diagonal---in fact, the maximum off-diagonal covariance across both examples is $0.07$. Please see Section \ref{subsec::avgopt} and Appendix \ref{appendix::dependencediscussion} for intuition behind this result, although we cannot perfectly explain it.

\begin{figure}[!ht]
    \centering
    \includegraphics[width=\linewidth]{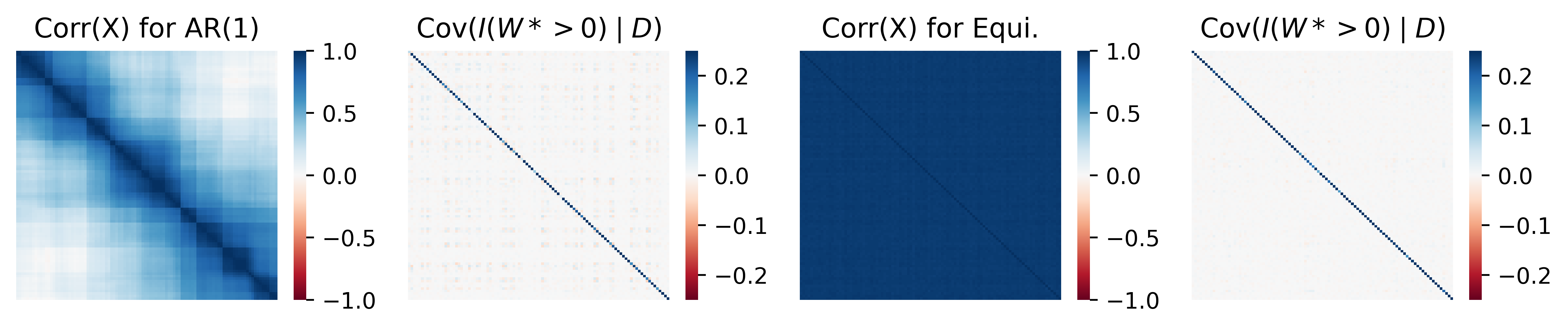}
    \caption{In the AR(1) and equicorrelated settings, this plot shows both the correlation matrix of $\bX$ as well as the conditional covariance of signs of the MLR statistic $\MLR\uppi$, computed as per Remark \ref{rem::covs}. It shows that even when $\bX$ is very highly correlated, the signs of $\MLR\uppi$ are only locally dependent. Note in this plot, every feature is non-null and the power of knockoffs is $53\%$ and $10\%$ for the AR(1) and equicorrelated settings, respectively.}
    \label{fig::dependence}
\end{figure}

\subsection{Generalized additive models}\label{subsec::nonlinear_sims}

We now sample $\bigY \mid \bX \sim \mcN(h(\bX) \beta, I_n)$ for some non-linear function $h : \R \to \R$ applied element-wise to $\bX \in \R^{n \times p}$. We consider the AR(1) setting from Section \ref{subsec::linear_sims} with four choices of $h$: $h(x) = \sin(x), h(x) = \cos(x), h(x) = x^2, h(x) = x^3$. We compare six feature statistics: linear MLR statistics, MLR based on cubic regression splines with one knot, the LCD, a random forest with swap importances as in \cite{knockoffsmass2018}, and DeepPINK \citep{deeppink2018}, which is based on a feedforward neural network. This setting is more challenging than the linear setting, since the feature statistics must learn (or approximate) the function $h$. Thus, our simulations in this section are low-dimensional with $n > p$, and we should not expect any feature statistic to match the performance of the oracle MLR statistics.
%%COULDCUT last clause?

\begin{figure}[!ht]
    \centering
    \includegraphics[width=\linewidth]{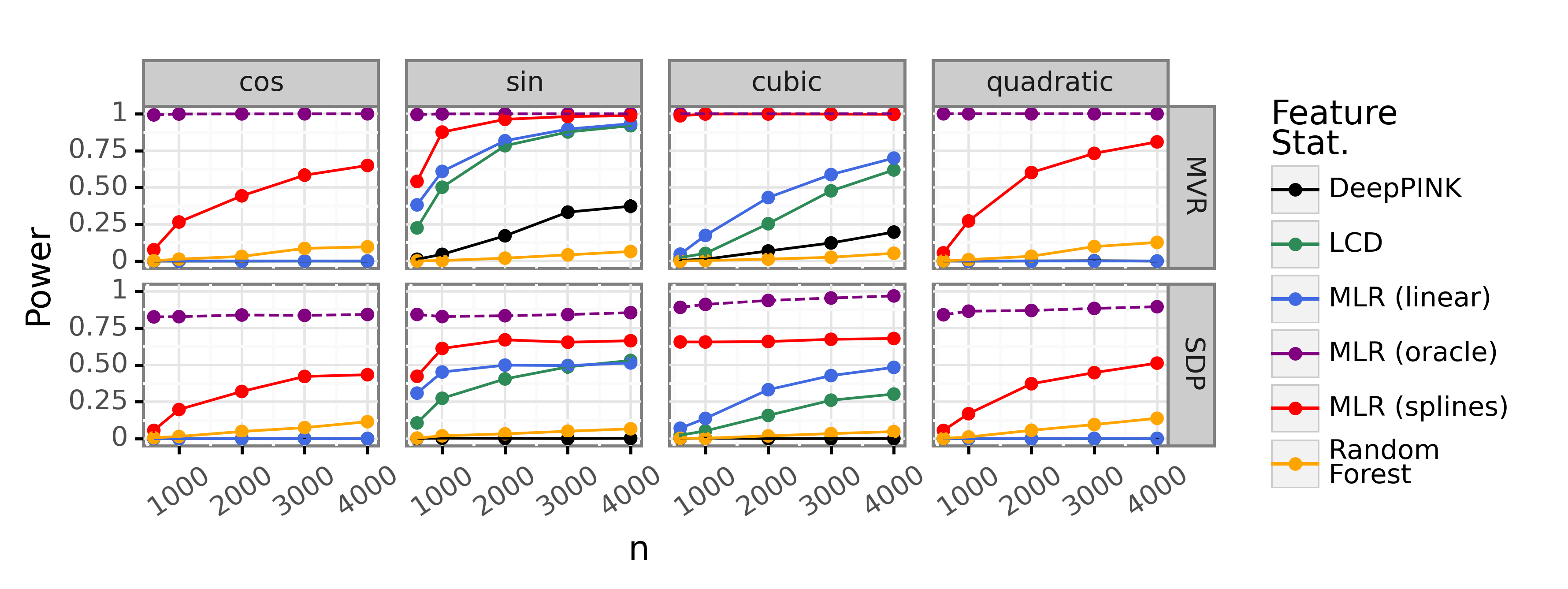}
    \caption{This figure plots power in generalized additive models where $\bigY \mid \bX \sim \mcN(h(\bX) \beta, I_n)$, for $h : \R \to \R$ applied elementwise to $\bX$. The x-facets show the choice of $h$, and the y-facets show results for both MVR and SDP MX knockoffs; note $p=200$ and there are $60$ non-nulls. For this plot only, we choose $q=0.1$ because several of the competitors made zero discoveries at $q=0.05$. See Appendix \ref{appendix::simdetails} for the plot with $q=0.05$. Note that in the top ``cubic" panel, the MLR (splines) statistic has $100\%$ power and overlaps with the oracle.}
    \label{fig::nonlin}
\end{figure}

Figure \ref{fig::nonlin} shows that ``MLR (splines)" uniformly outperforms every other feature statistic, often by wide margins. Linear MLR and LCD statistics are powerless in the $\cos$ and quadratic settings, where $h$ is an even function and thus the non-null features have no linear relationship with the response. However, in the $\sin$ and cubic settings, linear MLR statistics outperform the LCD, suggesting that linear MLR statistics can be powerful under misspecification as long as there is some linear effect.

\subsection{Logistic regression}\label{subsec::logit_sims}

Lastly, we now consider the setting of logistic regression, so we sample $Y \mid X \sim \Bern(s(X^{\top} \beta))$ where $s$ is the sigmoid function. We run the same simulation setting as Figure \ref{fig::linear_model}, except that now $Y$ is binary and we consider low-dimensional settings, since inference in logistic regression is generally more challenging than in linear regression. The results are shown by Figure \ref{fig::logistic}, which shows that MLR statistics outperform the LCD, although there is a substantial gap between the performances of the MLR and oracle MLR statistics. %MLR statistics do take $\approx 3$ times longer to compute than the LCD in this setting due to the Gibbs sampler used for binary regression (see Appendix \ref{appendix::gibbs}). %It may be possible to speed up the Gibbs sampler, but we leave this possibility to future work.

\begin{figure}[!ht]
    \centering
    \includegraphics[width=\linewidth]{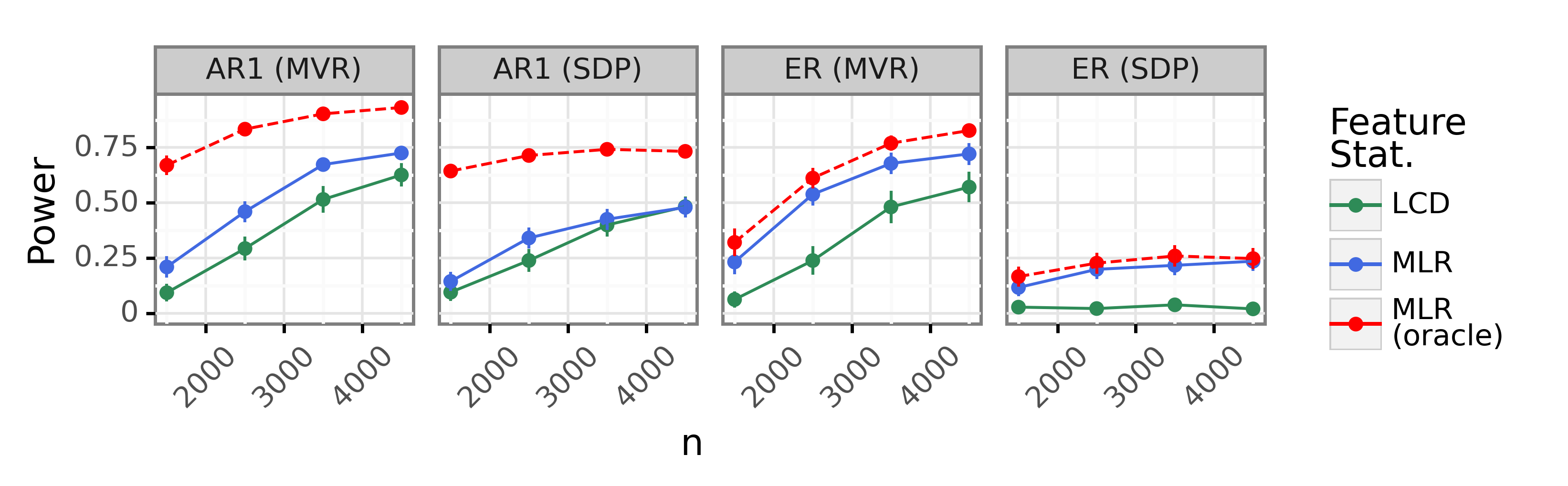}
    \caption{This plot shows the power of MLR statistics compared to the cross-validated LCD in logistic regression, with $p=500$, $50$ non-nulls, and $n$ varied between $1500$ and $4500$. The setting is otherwise identical to Figure \ref{fig::linear_model}.}
    \label{fig::logistic}
\end{figure}

\section{Real applications}\label{sec::data}

In this section, we apply MLR statistics to three real datasets which have been previously analyzed using knockoffs. We use the same default choice of MLR statistics from our simulations in all three applications. In each case, MLR statistics have comparable or higher power than competitor statistics. All code and data are available at \url{https://github.com/amspector100/mlr_knockoff_paper}.

\subsection{HIV drug resistance}

We begin with the HIV drug resistance dataset from \cite{rhee2006data}, which (e.g.) \cite{fxknock} previously analyzed using knockoffs. The dataset consists of genotype data from $n \approx 750$ HIV samples as well as drug resistance measurements for $16$ different drugs, and the goal is to discover genetic variants that affect drug resistance for each of the drugs. Furthermore, \cite{rhee2005corroborate} published treatment-selected mutation panels for this setting, so we can check whether any discoveries made by knockoffs are corroborated by this separate analysis.

We preprocess and model the data following \cite{fxknock}. Then, we apply FX knockoffs with LCD, LSM, and MLR statistics and FDR level $q=0.05$. For both MVR and SDP knockoffs, Figure \ref{fig::hiv} shows the total number of discoveries made by each statistic, stratified by whether each discovery is corroborated by \cite{rhee2005corroborate}. For SDP knockoffs, the MLR statistics make nearly an order of magnitude more discoveries than the competitor methods with a comparable corroboration rate. For MVR knockoffs, MLR and LCD statistics perform roughly equally well, although MLR statistics make $\approx 5\%$ more discoveries with a slightly higher corroboration rate. Overall, in this setting, MLR statistics are competitive with and sometimes substantially outperform the lasso-based statistics. See Appendix \ref{appendix::realdata} for specific results for each drug.

\begin{figure}
    \centering
    \includegraphics[width=0.66\linewidth]{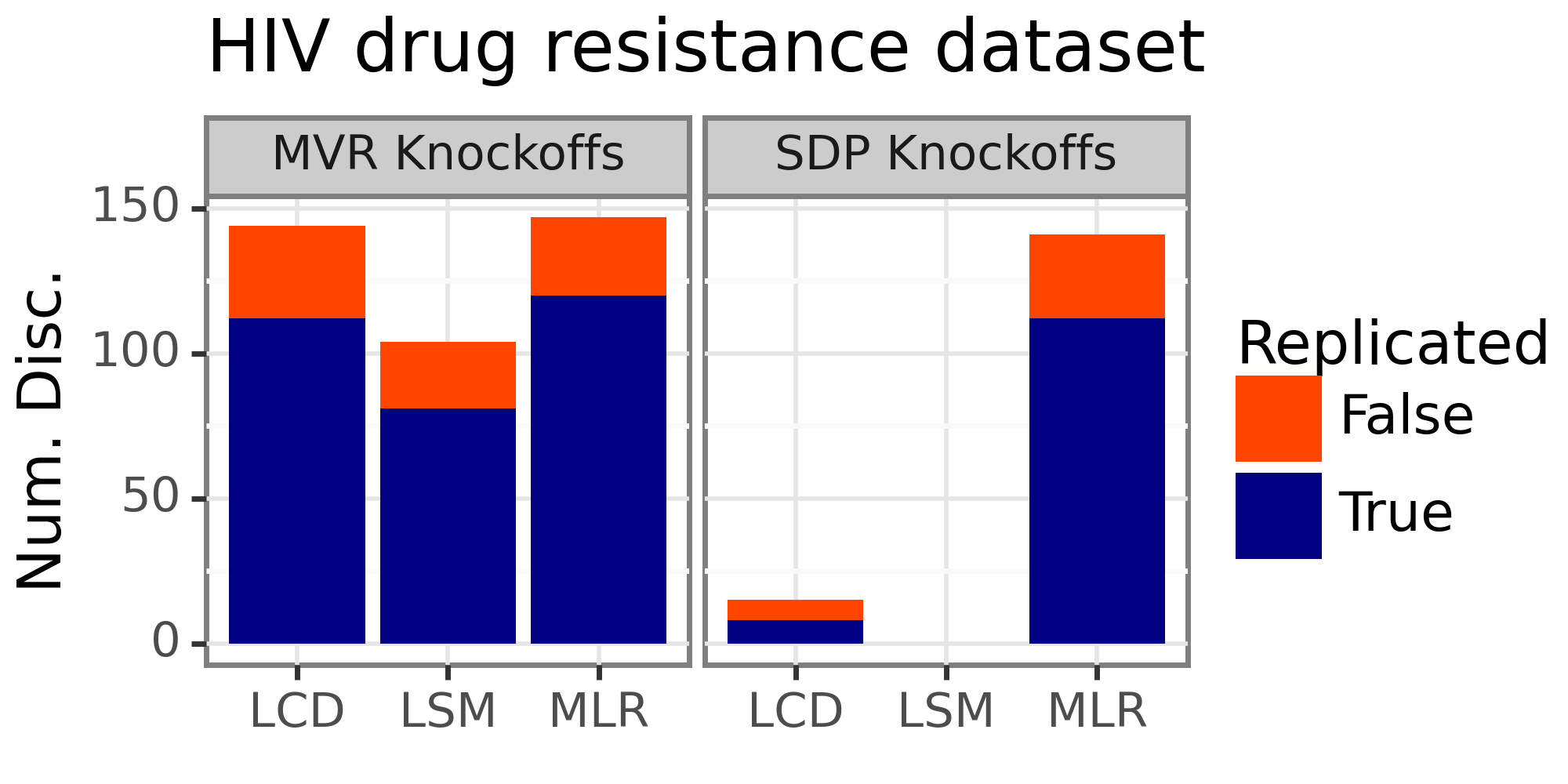}
    \caption{This figure shows the total number of discoveries made by the LCD, LSM, and MLR feature statistics in the HIV drug resistance dataset from \cite{rhee2006data}, summed across all $16$ drugs.}
    \label{fig::hiv}
\end{figure}

\subsection{Financial factor selection}\label{subsec::fundrep}

In finance, analysts often aim to select factors that drive the performance of a particular asset. \cite{challet2021} applied FX knockoffs to factor selection, and as a benchmark, they tested which US equities explain the performance of an index fund for the energy sector (XLE). Here, the ground truth is available since the index fund is a weighted combination of a known list of stocks.

We perform the same analysis for ten index funds of key sectors of the US economy, including energy, technology, and more (see Appendix \ref{appendix::realdata}). Here, $\bigY$ is the index fund's daily log return and $\bX$ contains the daily log returns of each stock in the S\&P 500 since $2013$, so $p \approx 500$ and $n \approx 2300$. We compute fixed-X MVR and SDP knockoffs and apply LCD, LSM, and MLR statistics. Figure \ref{fig::fund_rep} shows the number of true and false discoveries summed across all index funds with $q=0.05$. In particular, MLR statistics make $35\%$ and $78\%$ more discoveries than the LCD for MVR and SDP knockoffs (respectively), and the LSM makes more than $5$ times fewer discoveries than the MLR statistics. Thus, MLR statistics substantially outperform the lasso-based statistics. Appendix \ref{appendix::realdata} also shows that the FDP (averaged across all index funds) is well below $5\%$ for each method.

\begin{figure}
    \centering
    \includegraphics[width=0.66\linewidth]{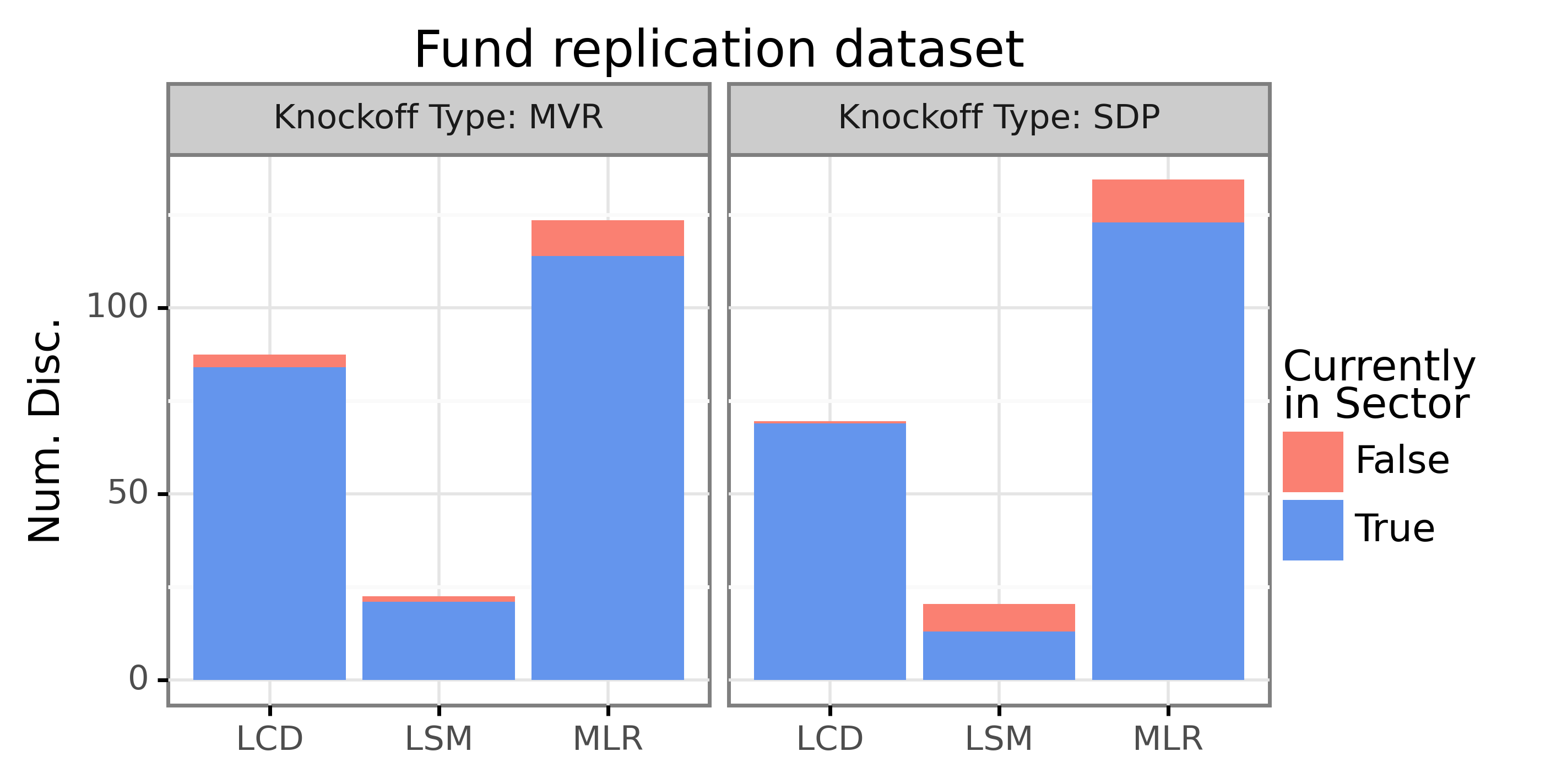}
    \caption{This figure shows the total number of discoveries made by each method in the fund replication dataset inspired by \cite{challet2021}, summed across all ten index funds. See Appendix \ref{appendix::realdata} for a table showing that the average FDP for each method is below the nominal level of $q=0.05$.}
    \label{fig::fund_rep}
\end{figure}

\subsection{Graphical model discovery for gene networks}

Lastly, we consider the problem of recovering a gene network from single-cell RNAseq data. Our analysis follows \cite{nodewiseknock}, who model gene expression log counts as a Gaussian graphical model (see \cite{nodewiseknock} for justification of the Gaussian assumption). In particular, they develop an extension of FX knockoffs that detects edges in Gaussian graphical models while controlling the FDR across discovered edges. They applied this method to RNAseq data from \cite{zheng2017ggmdata}. The ground truth is not available, so following \cite{nodewiseknock}, we only evaluate methods based on the number of discoveries they make.

We replicate this analysis and compare LCD, LSM, and MLR statistics. Figure \ref{fig::ggm} plots the number of discoveries as a function of $q \in [0, 0.5]$. MLR statistics make the most discoveries for nearly every value of $q$, although often by a small margin. For small $q$, the LSM statistic performs poorly, and for large $q$, the LCD statistic performs poorly, whereas the MLR statistic is consistently powerful.

\begin{figure}
    \centering
    \includegraphics[width=0.66\linewidth]{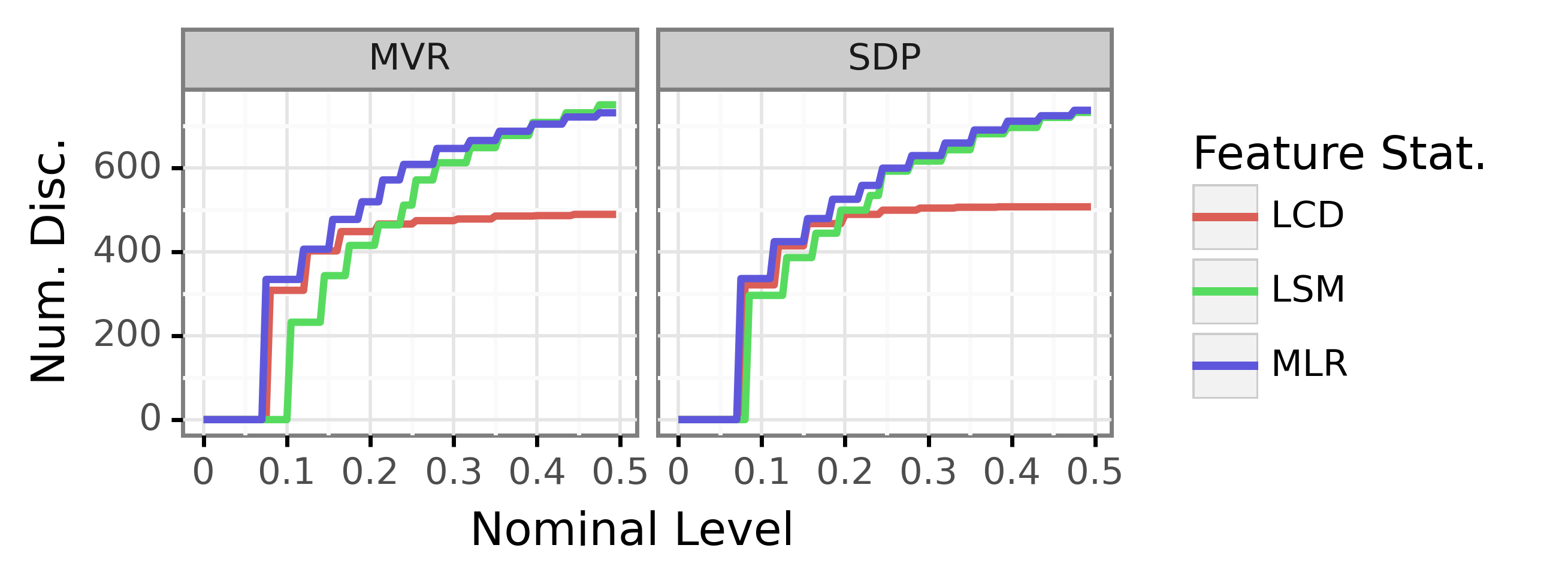}
    \caption{This figure shows the number of discoveries made by LCD, LSM, and MLR statistics when used to detect edges in a Gaussian graphical model for gene expression data, as in \cite{nodewiseknock}.}
    \label{fig::ggm}
\end{figure}

\section{Discussion}\label{sec::discussion}

This paper introduces masked likelihood ratio statistics, a class of asymptotically Bayes-optimal knockoff statistics. We show in simulations and three applications that MLR statistics are efficient and powerful. However, our work leaves open several directions for future research.

\begin{itemize}[topsep=0pt, itemsep=0.5pt, leftmargin=*]
    \item MLR statistics are asymptotically Bayes optimal. However, it might be worthwhile to develop \textit{minimax-optimal} knockoff-statistics, e.g., by computing a ``least-favorable" prior.

    \item Our theory requires a ``local dependency" condition which is challenging to verify analytically, although it can be diagnosed using the data at hand. It might be interesting to investigate (i) precisely when this condition holds and (ii) if MLR statistics are still optimal when it fails.
    
    \item We only consider classes of MLR statistics designed for binary GLMs and generalized additive models. However, other types of MLR statistics could be more powerful, e.g., those based on Bayesian additive regression trees \citep{bart2010}.

    \item In practice, analysts may prefer to discover features with large effect sizes. E.g., in Section \ref{subsec::hiv_motiv}, the P90.M variant has a large estimated OLS coefficient; thus, while it is particularly hard to discover, it may be particularly valuable to discover. In principle, the Bayesian framework in Section \ref{subsec::amlr} could be used to find knockoff statistics which asymptotically maximize many different notions of power, e.g., the sum of squared coefficient sizes across all discovered variables.

\end{itemize}

\section{Acknowledgements}

The authors thank John Cherian, Kevin Guo, Lucas Janson, Lihua Lei, Basil Saeed, Anav Sood, and Timothy Sudijono for valuable comments. A.S. is partially supported by a Citadel GQS PhD Fellowship, the Two Sigma Graduate Fellowship Fund, and an NSF Graduate Research Fellowship. W.F. is partially supported by the NSF DMS-1916220 and a Hellman Fellowship from Berkeley.

\bibliography{references}
\bibliographystyle{apalike}

\appendix

\newpage

\section{An illustration of the importance of the order of $|W|$}\label{appendix::wstat_synth}

Section \ref{subsec::intuition} argues intuitively that a good knockoff statistic should roughly achieve the following goals:
\begin{enumerate}
    \item For each $j$, it should maximize $P\opt(W_j > 0)$.
    \item The order of $\{|W_j|\}_{j=1}^p$ should match the order of $\{P\opt(W_j > 0)\}_{j=1}^p$---i.e., $|W_j|$ should be an increasing function of $P\opt(W_j > 0)$. 
\end{enumerate}
Sections \ref{sec::mlr} formalizes (and slightly modifies) these goals to develop an asymptotically optimal test statistic. However, to build intuition, we now give a concrete (if contrived) example showing the importance of the second goal.

Consider a setting with $p = 50$ features where $25$ features have a large signal size with $P\opt(W_j > 0) = 99.9\%$, $10$ features have a moderate signal size with $P\opt(W_j > 0) = 75\%$, the last $15$ features are null with $P\opt(W_j > 0) = 50\%$, and $\{\sign(W_j)\}_{j=1}^p$ are independent. In this case, what absolute values should $W$ take to maximize power? 

To make any discoveries and control the FDR at level $q = 0.05$, we must ensure that $> 95\%$ of the $k$ feature statistics with the largest absolute values have positive signs (for some $k \ge 20$ due to the ceiling function in Step 2 in Section \ref{subsec::intuition}). Since only the features with large signal sizes have a $>95\%$ chance of being positive, making any discoveries is extremely unlikely unless the features with large signal sizes generally have the highest absolute values. Figure \ref{fig::wstatplotsynth} illustrates this argument---in the ``random prioritization" setting, we sample $|W_j| \iid \Unif(0,1)$, and in the ``oracle prioritization" setting, we set $|W_j| = P\opt(W_j > 0)$. As expected, knockoffs makes zero discoveries with random prioritization and $\approx 30$ discoveries with oracle prioritization.

\begin{figure}[!hb]
    \centering
    \includegraphics[width=\linewidth]{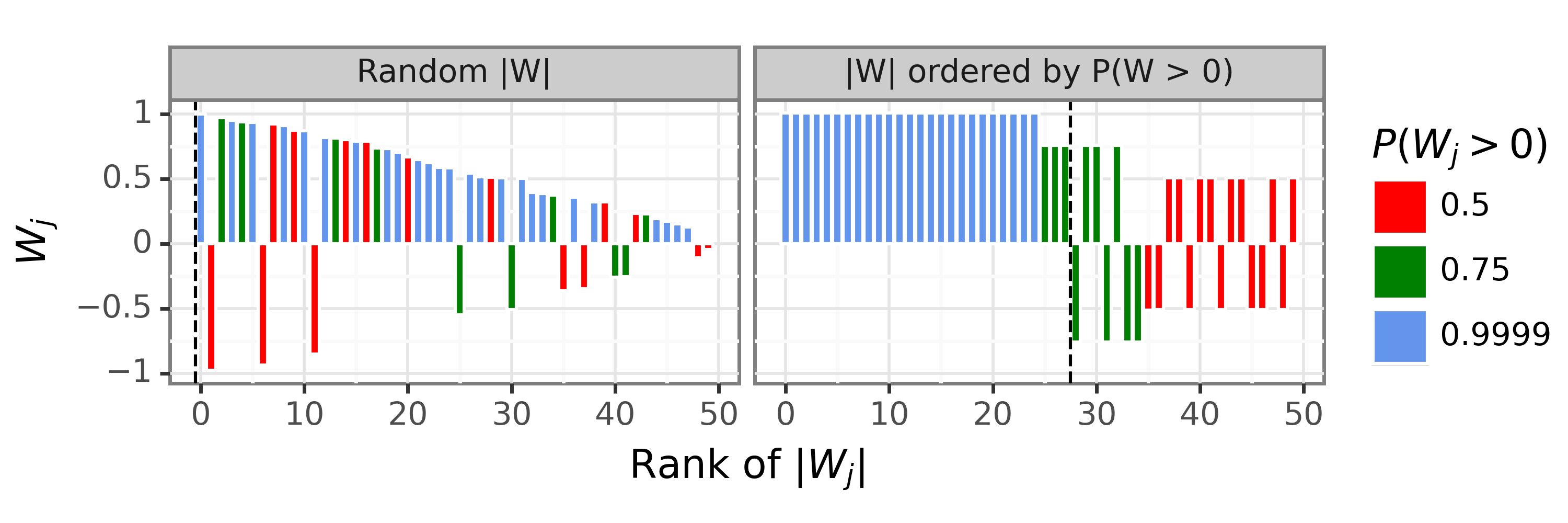}
    \caption{In a simple (if contrived) example from Appendix \ref{appendix::wstat_synth}, this figure illustrates the importance of ensuring that $\{|W_j|\}_{j=1}^p$ has roughly the same order as $\{P\opt(W_j > 0)\}_{j=1}^p$. The dotted black line shows the discovery threshold.}
    \label{fig::wstatplotsynth}
\end{figure}

\section{Main proofs and interpretation}\label{appendix::proofs}

In this section, we prove the main results of the paper. We also offer additional discussion of these results.

\subsection{Knockoffs as inference on masked data}

 In this section, we prove Propositions \ref{prop::mxdistinguish} and \ref{prop::fxdistinguish}, Lemma \ref{lem::mlrvalid}, and one more related corollary which will be useful when proving Theorem \ref{thm::avgopt}. As notation, for any matrices $M_1, M_2 \in \R^{n \times p}$, let $[M_1, M_2]_{\swap{j}}$ denote the matrix $[M_1, M_2]$ but with the $j$th column of $M_1$ and $M_2$ swapped: similarly, $[M_1, M_2]_{\swap{J}}$ swaps all columns $j \in J$ of $M_1$ and $M_2$.
 
\begingroup
\def\theproposition{\ref{prop::mxdistinguish}}
\begin{proposition} Let $\tbX$ be model-X knockoffs such that $\bX_j \ne \tbX_j$ a.s. for $j \in [p]$. Then $W = w([\bX, \tbX], \bigY)$ is a valid feature statistic if and only if:
\begin{enumerate}[topsep=0pt, leftmargin=*]
    \setlength{\parskip}{0pt}
    \setlength{\itemsep}{0pt plus 1pt}
    \item $|W|$ is a function of the masked data $D$.
    \item For all $j \in [p]$, there exists a $D$-measurable random vector $\widehat{\bX}_j$ such that $W_j > 0$ if and only if $\widehat{\bX}_j = \bX_j$.
\end{enumerate}
\begin{proof} \underline{Forward direction}: Suppose $W$ is a valid feature statistic; we will now show conditions (i) and (ii). To show (i), note that observing $\{\bX_j, \tbX_j\}_{j=1}^p$ is equivalent to observing $[\bX, \tbX]_{\swap{J}}$ for some unobserved $J \subset [p]$ chosen uniformly at random. Define $[\bX^{(1)}, \bX^{(2)}] \defeq [\bX, \tbX]_{\swap{J}}$ and let $W' = w([\bX^{(1)}, \bX^{(2)}], \bigY)$. Then by the swap invariance property of knockoffs, we have that $|W| = |W'|$. Since $|W'|$ is a function of $D$, this implies $|W|$ is a function of $D$ as well, which proves (i).

To show (ii), we construct $\widehat{\bX}_j$ as follows. Let $O_j \in \R^n$ be any ``other" random vector chosen such that $O_j \not \in \{\bX_j^{(1)}, \bX_j^{(2)}\}$. Then define
\begin{equation*}
    \widehat{\bX}_j \defeq \begin{cases}
        \bX_j^{(1)} & W_j' > 0 \\
        \bX_j^{(2)} & W_j' < 0 \\
        O_j & W_j' = 0.
    \end{cases}
\end{equation*}
Intuitively, we set $\widehat{\bX}_j = O_j$ if and only if $W_j = 0 \Leftrightarrow W_j' = 0$, since this will guarantee that $\widehat{\bX}_j \ne \bX_j$ whenever $W_j = 0$.

Note that $\widehat{\bX}_j$ is a function of $[\bX^{(1)}, \bX^{(2)}], \bigY$ and therefore is $D$-measurable. To show $\widehat{\bX}_j$ it is well-defined (does not depend on $J$), note that  $\widehat{\bX}_j \in \{\bX_j, \tbX_j, O_j\}$ can only take one of three values conditional on $D$. Thus, it suffices to show that the events $\widehat{\bX}_j = O_j$ and $\widehat{\bX}_j = \bX_j$ do not depend on the random set $J$. 

To show that the event $\widehat{\bX}_j = O_j$ does not depend on $J$, recall $\widehat{\bX}_j = O_j$ iff $|W_j'| = 0$; since $|W_j'| = |W_j|$, this event does not depend on $J$.

To show that the event $\widehat{\bX}_j = \bX_j$ does not depend on $J$, it suffices to show $\widehat{\bX}_j = \bX_j$ if and only if $W_j > 0$, which also shows (ii). There are two cases. In the first case, if $j \not \in J$, then $\bX_j^{(1)} = \bX_j$ by definition of $\bX^{(1)}$ and also $W_j' = W_j$ by the ``flip-sign" property of $w$. Thus $\widehat{\bX}_j = \bX_j^{(1)} = \bX_j$ if and only if $W_j > 0$. The second case is analogous: if $j \in J$, then $W_j' = - W_j$, so $\widehat{\bX}_j = \bX_j^{(2)} = \bX_j$ if and only if $W_j' < 0 \Leftrightarrow W_j > 0$. In both cases,  $W_j > 0$ if and only if $\widehat{\bX}_j = \bX_j$, proving (ii).

\underline{Backwards direction}: To show $W = w([\bX, \tbX], \bigY)$ is a valid feature statistic, it suffices to show the flip-sign property, namely that $W' \defeq w([\bX, \tbX]_{\swap{J}}, \bigY) = -1_{J} \odot W$, where $\odot$ denotes elementwise multiplication and $-1_{J}$ is the vector of all ones but with negative ones at the indices in $J$. To do this, note that $D$ is invariant to swaps of $\bX$ and $\tbX$, so $|W| = |W'|$ because by assumption $|W|, |W'|$ are a function of $D$. Furthermore, for any $j \in [p]$, we have that $W_j > 0$ if and only if $\widehat{\bX}_j = \bX_j$; however, since $\widehat{\bX}_j$ is also a function of $D$, we have that $\sign(W_j) = \sign(W'_j)$ if and only if $j \not \in J$. This completes the proof.
\end{proof}
\end{proposition}
\addtocounter{proposition}{-1}
\endgroup

The proof of Proposition \ref{prop::fxdistinguish} is identical to the proof of Proposition \ref{prop::mxdistinguish}, so we omit it for brevity. 

We now prove Lemma \ref{lem::mlrvalid}.

\begingroup
\def\thetheorem{\ref{lem::mlrvalid}}
\begin{lemma}
Equations (\ref{eq::mlr_def}) and (\ref{eq::fx_mlr}) define valid MX and FX knockoff statistics, respectively.
\begin{proof} For the MX case, we will show that for any $J \subset [p]$, $\mlr\uppi([\bX, \tbX]_{\swap{J}}, \bigY) = -1_J \odot \mlr\uppi([\bX, \tbX], \bigY)$, where $\odot$ denotes elementwise multiplication and $-1_J \in \R^p$ is the vector of ones but with negative ones at the indices in $J$. To show this, note that the masked data $D$ is invariant to swaps. Therefore, applying Eq. \ref{eq::mlr_def} yields
\begin{equation}
    \mlr_j\uppi([\bX, \tbX]_{\swap{J}}, \bigY) = \begin{cases}
    \log\left(\frac{P_j\bayes(\bX_j \mid D)}{P_j\bayes(\tbX_j \mid D)} \right) & j \not \in J \\
    \log\left(\frac{P_j\bayes(\tbX_j \mid D)}{P_j\bayes(\bX_j \mid D)} \right) & j \in J.
    \end{cases}
\end{equation}
Since $\log(x/y) = \log(x) - \log(y)$ is an antisymmetric function, this proves that
\begin{equation}
    \mlr_j\uppi([\bX, \tbX]_{\swap{J}}, \bigY) = \begin{cases}
    \mlr_j\uppi([\bX, \tbX], \bigY) & j \not \in J \\
    - \mlr_j\uppi([\bX, \tbX], \bigY) & j \in J.
    \end{cases}
\end{equation}
which completes the proof. The proof for the FX case is analogous.
\end{proof}
\end{lemma}
\endgroup

Finally, the following corollary of Propositions \ref{prop::mxdistinguish} and \ref{prop::fxdistinguish} will be important when proving Theorem \ref{thm::avgopt}. 

\begin{corollary}\label{cor::determdiff} Let $W, W'$ be two knockoff feature statistics. Then in the same setting as Propositions \ref{prop::mxdistinguish} and \ref{prop::fxdistinguish}, the event $\sign(W_j) = \sign(W_j')$ is a deterministic function of the masked data $D$.
\begin{proof} We give the proof for the model-X case, and the fixed-X case is analogous. First, note that the events $W_j = 0$ and $W_j' = 0$ are $D$-measurable events since $|W_j|, |W_j'|$ are $D$-measurable by Proposition \ref{prop::mxdistinguish}. Therefore, the only non-trivial case is the case where $W_j, W_j' \ne 0$, which we now consider.

By Proposition (\ref{prop::mxdistinguish}), there exist $D$-measurable vectors $\widehat{\bX}_j, \widehat{\bX}_j'$ such that $W_j > 0 \Leftrightarrow \widehat{\bX}_j = \bX_j$ and $W_j' > 0 \Leftrightarrow \widehat{\bX}_j' = \bX_j$. Since $\widehat{\bX}_j, \widehat{\bX}_j'$ must take one of exactly two distinct values, this  implies that
\begin{equation*}
    \sign(W_j) = \sign(W_j') \Leftrightarrow \widehat{\bX}_j = \widehat{\bX}_j'
\end{equation*}
where the right-most expression is $D$-measurable since $\widehat{\bX}_j, \widehat{\bX}_j'$ are $D$-measurable. This completes the proof.
\end{proof}
\end{corollary}

\subsection{Proof of Proposition \ref{prop::intuit_opt}}

\begingroup
\def\theproposition {\ref{prop::intuit_opt}}
\begin{proposition} Given data $\bigY, \bX$ and knockoffs $\tbX$, let $\MLR\uppi$ be the MLR statistics with respect to some Bayesian model $P\bayes$. Let $W$ be any other valid knockoff feature statistic. Then,
\begin{equation}\tag{\ref{eq::bestsigns}}
    P\bayes(\MLR_j\uppi > 0 \mid D) \ge P\bayes(W_j > 0 \mid D).
\end{equation}
Furthermore, $\{|\MLR_j\uppi|\}_{j=1}^p$ has the same order as $\{P\bayes(\MLR_j\uppi > 0 \mid D)\}_{j=1}^p$. More precisely,
\begin{equation}\tag{\ref{eq::bestmags}}
    P\bayes(\MLR_j\uppi > 0 \mid D) = \frac{\exp(|\MLR_j\uppi|)}{1 + \exp(|\MLR_j\uppi|)}.
\end{equation}
\begin{proof} First, we prove Eq. (\ref{eq::bestsigns}). Let $\widehat{\bX}_j\opt = \argmax_{\bx \in \{\bX_j, \tbX_j\}} P_j\bayes(\bx \mid D)$ be the ``best guess" of the value of $\bX_j$ based on $D$, and observe that by definition $\MLR_j\uppi \defeq \log(P_j\uppi(\bX_j \mid D)) - \log(P_j\uppi(\tbX_j \mid D)) > 0$ if and only if $\widehat{\bX}_j\opt = \bX_j$. Similarly, Proposition \ref{prop::mxdistinguish} proves that there exists some alternative $D$-measurable random variable $\widehat{\bX}_j$ such that $W_j > 0$ if and only if $\widehat{\bX}_j > 0$. However, we note that by definition of $\widehat{\bX}_j\opt$,
\begin{equation}
    P\bayes(\MLR_j\uppi > 0 \mid D) = P\bayes(\widehat{\bX}_j\opt = \bX_j \mid D) \ge P\bayes(\widehat{\bX}_j = \bX_j \mid D) = P\bayes(W_j > 0 \mid D),
\end{equation}
which completes the proof of Eq. (\ref{eq::bestsigns}).

To prove Eq. (\ref{eq::bestmags}), observe $P_j\bayes(\bX_j \mid D) = 1 - P_j\bayes(\tbX_j \mid D)$, since conditional on $D$ we observe the set $\{\bX_j, \tbX_j\}$. Therefore,
\begin{equation*}
    |\MLR_j\uppi| = \log\left(\frac{\max_{\bx \in \{\bX_j, \tbX_j\}} P_j\bayes(\bx \mid D)}{1 - \max_{\bx \in \{\bX_j, \tbX_j\}} P_j\bayes(\bx \mid D)}\right) = \log\left(\frac{P\bayes(\MLR_j\uppi > 0 \mid D)}{1-P\bayes(\MLR_j\uppi > 0 \mid D)}\right),
\end{equation*}
where the second step uses the fact that $P\bayes(\MLR_j\uppi > 0 \mid D) = P\bayes\left(\widehat{\bX}_j\opt = \bX_j \mid D \right) = \max_{\bx \in \{\bX_j, \tbX_j\}} P_j\bayes(\bx \mid D)$ for $\widehat{\bX}_j\opt$ as defined above. This completes the proof for model-X knockoffs; the proof in the fixed-X case is analogous and just replaces $\{\bX_j, \tbX_j\}$ with $\{\bX_j^T \bigY, \tbX_j^T \bigY\}$.
\end{proof}
\end{proposition}
\addtocounter{proposition}{-1}
\endgroup

\subsection{How far from optimality are MLR statistics in finite samples?}\label{subsec::finitesampleopt}

Our main result (Theorem \ref{thm::avgopt}) shows that MLR statistics \textit{asymptotically} maximize the number of discoveries made by knockoffs under $P\bayes$. However, before rigorously proving Theorem \ref{thm::avgopt}, we give intuition suggesting that even in finite samples, MLR statistics are probably nearly optimal anyway.

Recall from Section \ref{subsec::mlr} that in finite samples, MLR statistics (i) maximize $P\bayes(\MLR_j\uppi > 0 \mid D)$ for each $j \in [p]$ and (ii) ensure that the absolute values of the feature statistics $\{|\MLR_j\uppi|\}_{j=1}^p$ have the same order as the probabilities $\{P\bayes(\MLR_j\uppi > 0)\}_{j=1}^p$. As per Proposition \ref{prop::exactopt}, this strategy is exactly optimal when the vector of signs $\sign(\MLR\uppi)$ are conditionally independent given $D$, but in general, it is possible to exploit conditional dependencies among the coordinates of $\sign(\MLR\uppi)$ to slightly improve power. However, we argue below that it is challenging to even slightly improve power without unrealistically strong dependencies.

To see why this is the case, consider a very simple setting with $p=6$ features and FDR level $q=0.2$, so knockoffs will make discoveries exactly when the first five $W$-statistics with the largest absolute values have positive signs. Suppose that $W_1, \dots, W_5 \mid D$ are perfectly correlated and satisfy $P\bayes(W_1 > 0 \mid D) = \dots = P\bayes(W_5 > 0 \mid D) = 70\%$, and $W_6 \Perp W_{1:5} \mid D$ satisfies $P\bayes(W_6 > 0 \mid D) = 90\%$. Since $W_6$ has the highest chance of being positive, MLR statistics will assign it the highest absolute value, in which case knockoffs will make discoveries with probability $70\% \cdot 90\% = 63\%$. However, in this example, knockoffs will be more powerful if we ensure that $W_1, \dots, W_5$ have the five largest absolute values, since their signs are perfectly correlated and thus $P\bayes(W_1 > 0, \dots, W_5 > 0 \mid D) = 70\% > 63\%$.\footnote{Note that there is nothing special about positive correlations in this example: one can also find similar examples where negative correlations among $\MLR\uppi$ can be prioritized to slightly increase power.} 

This example has two properties which shed light on the more general situation.
First, to even get a slight improvement in power required extremely strong dependencies among the coordinates of $\sign(\MLR\uppi)$, which is not realistic. Indeed, empirically in Figure \ref{fig::dependence}, the coordinates of $\sign(\MLR\uppi)$ appear to be almost completely conditionally uncorrelated even when $\bX$ is extremely highly correlated. Thus, although it may be possible to slightly improve power by exploiting dependencies among $\sign(\MLR\uppi)$, the \textit{magnitude} of the improvement in power is likely to be small. Second, the reason that it is possible to exploit dependencies to improve power in this case is because knockoffs has a ``hard" threshold where one can only make any discoveries if one makes at least 5 discoveries, and exploiting conditional correlations among the vector $\sign(\MLR\uppi)$ can slightly improve the probability that we reach that initial threshold. However, this ``threshold" phenomenon is less important in situations where knockoffs are guaranteed to make at least a few discoveries; thus, if the number of discoveries grows with $p$, this effect should be insignificant asymptotically.

\subsection{Proof of Theorem \ref{thm::avgopt}}\label{subsec::mainproof}

\underline{Notation}: For any vector $x \in \R^n$ and $k \le n$, we let $\bar x_k \defeq \frac{1}{k} \sum_{i=1}^k x_i$ be the sample mean of the first $k$ elements of $x$. For $k > n$, we let $\bar x_k \defeq \frac{n}{k} \bar x$ equal the sample mean of the vector $x$ plus $k - n$ additional zeros. Additionally, for $x \in \R^n$ and a permutation $\kappa : [n] \to [n]$, $\kappa(x)$ denotes the coordinates of $x$ permuted according to $\kappa$, so that $\kappa(x)_i = x_{\kappa(i)}$. Throughout this section, all probabilities $\P$ and expectations $\E$ are taken under $P\bayes$.

\underline{Main idea}: There are two main ideas behind Theorem \ref{thm::avgopt}. First, for any feature statistic $W$, we will compare the power of $W$ to the power of a ``soft" version of the SeqStep procedure, which depends only on the conditional expectation of $\sign(W)$ instead of the realized values of $\sign(W)$. Roughly speaking, if the coordinates of $\sign(W)$ obey a strong law of large numbers, the power of SeqStep and the power of the ``soft" version of SeqStep will be the same asymptotically. Second, we will show that MLR statistics $\MLR\uppi$ exactly maximize the power of the ``soft" version of the SeqStep procedure. Taken together, these two results imply that MLR statistics are asymptotically optimal.

To make this precise, for a feature statistic $W$, let $\sorted(W)$ denote $W$ sorted in decreasing order of its absolute values, and let $R = \I(\sorted(W) > 0) \in \{0,1\}^p$ be the vector indicating where $\sorted(W)$ has positive entries. The number of discoveries made by knockoffs only depends on $R$. Indeed, for any vector $\eta \in [0,1]^p$ and any desired FDR level $q \in (0,1)$, define
\begin{equation}\label{eq::psidef}
    \psi_q(\eta) \defeq \max_{k \in \N} \left\{k : \frac{k-k \bar \eta_k + 1}{k \bar \eta_k} \le q\right\} \text{ and } \tau_q(\eta) = \left\lceil \frac{\psi_q(\eta)+1}{1+q}\right\rceil,
\end{equation}
where by convention we set $\frac{x}{0} = \infty$ for any $x \in \R_{> 0}$ and we remind the reader that for $k > p$, $\bar \eta_k = \frac{p}{k} \bar \eta$. It turns out that knockoffs makes exactly $\tau_q(R)$ discoveries. For brevity, we refer the reader to Lemma B.3 of \cite{mrcknock2022} for a formal proof of this: however, to see this intuitively, note that $k - k \bar R_k$ (resp. $k \bar R_k$) counts the number of negative (resp. positive) entries in the first $k$ coordinates of $\sorted(W)$, so this definition lines up with the definition of the data-dependent threshold in Section \ref{subsec::knockreview}.

Now, let $\delta \defeq \E[R \mid D] \in [0,1]^p$ be the conditional expectation of $R$ given the masked data $D$ (defined in Equation \ref{eq::maskeddata}). The ``soft" version of SeqStep simply applies the functions $\psi_q$ and $\tau_q$ to the conditional expectation $\delta$ instead of the realized indicators $R$. Intuitively speaking, our goal will be to apply a law of large numbers to show the following asymptotic result:
\begin{equation*}
    |\tau_q(\delta) - \tau_q(R)| = o_p\left(\# \text{ of non-nulls}\right).
\end{equation*}
Once we have shown this, it will be straightforward to show that MLR statistics are asymptotically optimal, since MLR statistics maximize $\tau_q(\delta)$ in finite samples. 

We now begin to prove Theorem \ref{thm::avgopt} in earnest. In particular, the following pair of lemmas tells us that if $\bar R_k$ converges to $\bar \delta_k$ uniformly in $k$, then $\tau_q(\delta) \approx \tau_q(R)$.

\begin{lemma}\label{lem::ddt2slln} Let $W = w([\bX, \tbX], \bigY)$ be any feature statistic with $R, \delta, \psi_q, \tau_q$ as defined earlier. Fix any $k_0 \in [p]$ and sufficiently small $\epsilon > 0$ such that $\eta \defeq 3(1+q)\epsilon < q$. Define the event
\begin{equation*}
    A_{k_0,\epsilon} = \left\{\max_{k_0 \le k \le p} |\bar R_k - \bar \delta_k| \le \epsilon\right\}.
\end{equation*}
Then on the event $A_{k_0,\epsilon}$, we have that 
\begin{equation}\label{eq::bound_onA}
    \frac{1}{1+3\epsilon} \tau_{q-\eta}(R) - k_0 - 1 \le \tau_q(\delta) \le (1+3\epsilon) \tau_{q+\eta}(R) + k_0 + 1.
\end{equation}
This implies that
\begin{align}
        |\tau_q(R) - \tau_q(\delta)| 
    &\le
        p \I(A_{k_0,\epsilon}^c)  + \big[\tau_{q+\eta}(R) - \tau_{q-\eta}(R)\big] + k_0 + 1 + 3 \epsilon \tau_{q+\eta}(R). \label{eq::ddt2slln}
\end{align}
\begin{proof} Note the proof is entirely algebraic (there is no probabilistic content). We proceed in two steps, first showing Equation (\ref{eq::bound_onA}), then Equation (\ref{eq::ddt2slln}).

\underline{Step 1}: We now prove Equation (\ref{eq::bound_onA}). To start, define the sets
\begin{equation*}
    \mcR = \left\{k \in [p] : \frac{k - k \bar R_k + 1}{k \bar R_k} \le q + \eta \right\} \text{ and } \mcD = \left\{k \in [p] : \frac{k - k \bar \delta_k + 1}{k \bar \delta_k} \le q \right\}
\end{equation*}
and recall that by definition $\psi_{q + \eta}(R) = \max(\mcR)$, $\psi_{q}(\delta) = \max(\mcD)$. To analyze the difference between these quantities, fix any $k \in \mcD \setminus \mcR$. Then by definition of $\mcD$ and $\mcR$, we know
\begin{align*}
    \frac{k - k \bar \delta_k + 1}{k \bar \delta_k} \le q < q + \eta < \frac{k - k \bar R_k + 1}{k \bar R_k}.
 \end{align*}
However, Lemma \ref{lem::alg} (proved in a moment) tells us that this implies the following algebraic identity:
\begin{equation*}
    \bar \delta_k - \bar R_k \ge \frac{\eta}{3(1+q)} = \frac{3(1+q)\epsilon}{3(1+q)} = \epsilon.
\end{equation*}
However, on the event $A_{k_0, \epsilon}$ this cannot occur for any $k \ge k_0$. Therefore, on the event $A_{k_0, \epsilon}$, $\mcD \setminus \mcR \subset \{1, \dots, k_0-1\}$. This implies that
\begin{equation}\label{eq::psidif1}
    \psi_q(\delta) - \psi_{q+\eta}(R) = \max(\mcD) - \max(\mcR) \le \begin{cases}
        0 & \max(\mcD) \ge k_0 \\
        k_0 - 1 & \max(\mcD) < k_0.
    \end{cases}
\end{equation}
We can combine these conditions by writing that $\psi_q(\delta) - \psi_{q+\eta}(R) \le k_0 - 1$. Using the definition of $\tau_q(\cdot)$, we conclude
\begin{align*}
        \tau_q(\delta) - \tau_{q+\eta}(R) 
    &=
        \ceil{\frac{\psi_q(R)+1}{1+q}} - \ceil{\frac{\psi_{q+\eta}(\delta)+1}{1+q+\eta}} \\
    &\le
        1 + \frac{\psi_q(R)+1}{1+q} - \frac{\psi_{q+\eta}(\delta)+1}{1+q+\eta} \\
    &\le
        2 + \frac{\psi_q(R) - \psi_{q+\eta}(\delta)}{1+q} + \left(\frac{1}{1+q} - \frac{1}{1+q+\eta}\right) \psi_{q+\eta}(R) \\
    &=
        2 + \frac{\psi_q(R) - \psi_{q+\eta}(\delta)}{1+q} +  \frac{3 \epsilon}{1+q+\eta} \psi_{q+\eta}(R)& \text{ by def. of $\eta$}\\
    &\le
        2 + \frac{k_0 - 1}{1+q} + \frac{3 \epsilon}{1+q+\eta} \psi_{q+\eta}(R)&\text{ by Eq. (\ref{eq::psidif1})}\\
    &\le
        k_0 + 1 + 3 \epsilon \tau_{q+\eta}(R) & \text{ by def. of $\tau_{q}(R)$.}
\end{align*}
This proves the upper bound, namely that $\tau_q(\delta) \le (1+3\epsilon) \tau_{q+\eta}(R) + k_0 + 1$. To prove the lower bound, note that we can swap the role of $R$ and $\delta$ and apply the upper bound to $q' = q - \eta$. %(Note that it is possible that $q' < 0$, which is a slightly counter-intuitive scenario, but all of the algebra still goes through in this scenario.) 
Then if we take $\eta' = 3(1+q') \epsilon < \eta < 1$, applying the upper bound yields
\begin{equation*}
    \tau_{q'}(R) \le (1+3\epsilon) \tau_{q'+\eta'}(\delta) + k_0 + 1.
\end{equation*}
Observe that $\tau_q(\cdot)$ is nondecreasing in $q$. Furthermore, since $\eta' < \eta$, we have that $q' + \eta' = q - \eta + \eta' \le q$. Therefore, $\tau_{q' + \eta'}(\delta) \le \tau_q(\delta)$. Applying this result, we conclude
\begin{equation*}
    \tau_{q-\eta}(R) = \tau_{q'}(R) \le (1+3\epsilon) \tau_{q'+\eta'}(\delta) + k_0 + 1 \le (1+3\epsilon) \tau_{q}(\delta) + k_0 + 1.
\end{equation*}
This implies the lower bound $\frac{1}{1+3\epsilon} \tau_{q-\eta}(R) - k_0 -1 \le \tau_q(\delta)$.

\underline{Step 2}: Now, we show Equation (\ref{eq::ddt2slln}) follows from Equation (\ref{eq::bound_onA}). To see this, we consider the two cases where $\tau_q(\delta) \ge \tau_q(R)$ and vice versa and apply Equation (\ref{eq::bound_onA}). In particular, on the event $A_{k_0, \epsilon}$, then:
\begin{align*}
    |\tau_q(\delta) - \tau_q(R)|
    &=
    \begin{cases}
    \tau_q(\delta) - \tau_q(R) & \tau_q(\delta) \ge \tau_q(R) \\ 
    \tau_q(R) - \tau_q(\delta) & \tau_q(\delta) \le \tau_q(R)
    \end{cases} \\
    &\le 
    \begin{cases}
    (1+3\epsilon) \tau_{q+\eta}(R) + k_0 + 1 - \tau_q(R) & \tau_q(\delta) \ge \tau_q(R) \\
    \tau_q(R) - \frac{1}{1+3\epsilon} \tau_{q-\eta}(R) + k_0 + 1 & \tau_q(\delta) \le \tau_q(R) 
    \end{cases} & \text{ by Eq. (\ref{eq::bound_onA})}\\
    &=
    k_0 + 1 + \begin{cases}
     \tau_{q+\eta}(R) - \tau_q(R) - 3\epsilon \tau_{q+\eta}(R) & \tau_q(\delta) \ge \tau_q(R) \\
    \tau_q(R) - \tau_{q-\eta}(R) + \frac{3\epsilon}{1+3\epsilon} \tau_{q-\eta}(R)& \tau_q(\delta) \le \tau_q(R) 
    \end{cases} \\
    &\le k_0 + 1 + \tau_{q+\eta}(R) - \tau_{q-\eta}(R) + 3 \epsilon \tau_{q+\eta}(R),
\end{align*}
where the last line follows because $\tau_q(R)$ is monotone in $q$. This implies Equation (\ref{eq::ddt2slln}), because $|\tau_q(\delta) - \tau_q(R)| \le p$ trivially on the event $A_{k_0, \epsilon}^c$ because $\tau_q(\delta), \tau_q(R) \in [p]$.
\end{proof}
\end{lemma}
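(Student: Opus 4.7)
The lemma is purely deterministic on the event $A_{k_0,\epsilon}$, so the proof carries no probabilistic content; the probability is entirely encoded in the event itself. My plan has two layers: first establish the two-sided sandwich in Equation \ref{eq::bound_onA}, then deduce Equation \ref{eq::ddt2slln} by a short case split.

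For the sandwich, I would introduce the admissible index sets $\mcR_{\alpha} = \{k : (k - k\bar R_k + 1)/(k\bar R_k) \le \alpha\}$ and $\mcD_{\alpha}$ defined analogously in $\delta$, so that $\psi_\alpha(R) = \max \mcR_\alpha$ and $\psi_\alpha(\delta) = \max \mcD_\alpha$ by construction. The heart of the argument is to show $\mcD_q \setminus \mcR_{q+\eta} \subseteq \{1,\ldots,k_0-1\}$ on $A_{k_0,\epsilon}$: if $k$ lies in this set-difference, then
\[
\frac{k - k\bar\delta_k + 1}{k\bar\delta_k} \le q < q+\eta < \frac{k - k\bar R_k + 1}{k\bar R_k},
\]
and an algebraic identity (worth isolating as an auxiliary lemma, the Lemma \ref{lem::alg} referenced in the text) shows this forces $\bar\delta_k - \bar R_k \ge \eta/(3(1+q)) = \epsilon$, contradicting the definition of $A_{k_0,\epsilon}$ when $k \ge k_0$. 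Consequently $\psi_q(\delta) \le \psi_{q+\eta}(R) + k_0 - 1$. Passing from $\psi$ to $\tau$ is bookkeeping: drop the ceiling at a cost of $+1$, expand $1/(1+q) - 1/(1+q+\eta) = 3\epsilon/(1+q+\eta)$ via the identity $\eta = 3(1+q)\epsilon$, and bound $\psi_{q+\eta}(R)/(1+q+\eta) \le \tau_{q+\eta}(R)$ to pick up the $3\epsilon\,\tau_{q+\eta}(R)$ slack, yielding the upper bound in Equation \ref{eq::bound_onA}. The matching lower bound follows by symmetry: apply the upper bound with the roles of $R$ and $\delta$ reversed and with $q$ replaced by $q' = q - \eta$ (so that $\eta' = 3(1+q')\epsilon < \eta$), and then use monotonicity of $\tau_\alpha(\cdot)$ in $\alpha$ to simplify $\tau_{q'+\eta'}(\delta)$ to $\tau_q(\delta)$.

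To derive Equation \ref{eq::ddt2slln}, I would use the trivial bound $|\tau_q(R) - \tau_q(\delta)| \le p$ on $A_{k_0,\epsilon}^c$, and on the good event case-split on whether $\tau_q(\delta) \ge \tau_q(R)$. In the first case combine the upper bound of Equation \ref{eq::bound_onA} with the monotone inequality $\tau_q(R) \ge \tau_{q-\eta}(R)$; in the second use the lower bound together with $\tau_q(R) \le \tau_{q+\eta}(R)$. After absorbing $3\epsilon/(1+3\epsilon) \le 3\epsilon$, both cases collapse to the claimed bound $k_0 + 1 + [\tau_{q+\eta}(R) - \tau_{q-\eta}(R)] + 3\epsilon\,\tau_{q+\eta}(R)$.

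The only genuinely non-mechanical step is the algebraic identity inside the set-difference argument: converting the $\eta$-gap between the two BH-style ratios into a linear-in-$\epsilon$ gap between $\bar\delta_k$ and $\bar R_k$. The map $u \mapsto (k+1)/(ku) - 1$ is decreasing and convex, so a linear estimate requires a floor on the values of $\bar\delta_k, \bar R_k$ at which the comparison takes place. Fortunately, membership $k \in \mcD_q$ delivers $\bar\delta_k \ge 1/(1+q)$ for free, and a first-order expansion combined with $\eta = 3(1+q)\epsilon$ produces the constant $3$ in the denominator. Packaging this calculation separately keeps the main proof a clean algebraic rearrangement.
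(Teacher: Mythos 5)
Your proposal is correct and follows essentially the same route as the paper's proof: the same set-difference argument showing $\mcD_q \setminus \mcR_{q+\eta} \subseteq \{1,\dots,k_0-1\}$ via the isolated algebraic lemma, the same $\psi$-to-$\tau$ bookkeeping, the same role-swap at $q'=q-\eta$ for the lower bound, and the same case split for the final inequality. No gaps.
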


The following lemma proves a very simple algebraic identity used in the proof of Lemma \ref{lem::ddt2slln}.

\begin{lemma}\label{lem::alg} For any $x, y \in [0,1]$, $k \in \N,$ and any $\gamma \in (0,1)$, suppose that $\frac{1+k-kx}{kx} \le q < q + \gamma \le \frac{1+k-ky}{ky}$. Then
\begin{equation*}
    x - y \ge \frac{\gamma}{(1+q)(1+q+\gamma)} \ge \frac{\gamma}{3(1+q)}. 
\end{equation*}
\begin{proof} By assumption, $x \ne 0$, since otherwise $\frac{1+k-kx}{kx} = \infty$ by convention. For $x > 0$, we have that
\begin{equation}\label{eq::algxeq}
    \frac{1+k-kx}{kx} \le q \implies 1 + k - kx \le kqx \implies x \ge \frac{k+1}{k(1+q)}.
\end{equation}
Now, there are two cases. If $y = 0$, the inequality holds trivially:
\begin{equation*}
    x - y = x \ge \frac{k+1}{k} \cdot \frac{1}{1+q} \ge \frac{\gamma}{3(1+q)}.
\end{equation*}
Alternatively, if $y > 0$, we observe similarly to before that
\begin{equation}\label{eq::algyeq}
    \frac{1+k-ky}{ky} \ge q+\gamma \implies y \le \frac{k+1}{k(1+q+\gamma)}.
\end{equation}
Combining Equations (\ref{eq::algxeq})--(\ref{eq::algyeq}) yields the result:
\begin{align*}
        x - y 
    &\ge 
        \frac{k+1}{k} \left(\frac{1}{1+q} - \frac{1}{1+q+\gamma}\right) 
    =
        \frac{k+1}{k} \frac{\gamma}{(1+q+\gamma)(1+q)} 
    \ge
        \frac{\gamma}{3(1+q)}.
\end{align*}
\end{proof}
\end{lemma}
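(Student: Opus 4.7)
The plan is to treat this as a pure algebraic manipulation: invert each of the two hypothesis inequalities to obtain a lower bound on $x$ and an upper bound on $y$, then subtract and simplify. Concretely, rearranging $\frac{1+k-kx}{kx} \le q$ (which also forces $x > 0$, since otherwise the left-hand side is $+\infty$ by the convention $c/0 = \infty$) yields $x \ge \frac{k+1}{k(1+q)}$, and a symmetric manipulation of $\frac{1+k-ky}{ky} \ge q+\gamma$ yields $y \le \frac{k+1}{k(1+q+\gamma)}$ whenever $y > 0$.

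Subtracting these two bounds and pulling out the factor $\frac{k+1}{k} \ge 1$ gives
\begin{equation*}
x - y \ge \frac{k+1}{k}\left(\frac{1}{1+q} - \frac{1}{1+q+\gamma}\right) \ge \frac{\gamma}{(1+q)(1+q+\gamma)}.
\end{equation*}
The weaker bound $\frac{\gamma}{3(1+q)}$ then follows because $(1+q+\gamma) \le 3$ whenever $q,\gamma < 1$, which holds in the context where the lemma is invoked (via Lemma \ref{lem::ddt2slln}, where $\gamma = \eta < q < 1$).

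The degenerate case $y = 0$ must be separated out, because the rearrangement of the right-most inequality involves dividing by $ky$. However, this case is essentially trivial: the lower bound on $x$ already gives $x - y = x \ge \frac{1}{1+q} \ge \frac{\gamma}{3(1+q)}$ directly, since $\gamma < 1$.

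There is no serious obstacle --- the proof is effectively a two-line algebraic rearrangement. The only things to be careful about are (i) first establishing $x > 0$ so the clearing of denominators is legitimate, (ii) isolating the $y = 0$ case from the main subtraction argument, and (iii) noting that the constant $3$ in the denominator of the weaker bound implicitly requires $q + \gamma \le 2$, which is satisfied in context.
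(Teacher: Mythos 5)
Your proposal is correct and is essentially identical to the paper's own proof: both derive $x \ge \frac{k+1}{k(1+q)}$ and $y \le \frac{k+1}{k(1+q+\gamma)}$, subtract, drop the factor $\frac{k+1}{k}\ge 1$, and treat $y=0$ separately. Your added remark that the final constant $3$ implicitly uses $q+\gamma\le 2$ (satisfied in context) is a fair observation but does not change the argument.
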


We are now ready to prove Theorem \ref{thm::avgopt}. As a reminder, we consider an asymptotic regime with data $\bX^{(n)} \in \R^{n \times p_n}, \bigY^{(n)} \in \R^{n}$ and knockoffs $\tbX^{(n)}$, where $P\bayes_n$ is the Bayesian model. We let $D^{(n)}$ denote the masked data for knockoffs as defined in Section \ref{subsec::maskeddata} and let $\numnonnull$ denote the expected number of non-nulls under $P\bayes_n$. We will analyze the limiting normalized number of discoveries of feature statistics $W\upn = w_n([\bX\upn, \tbX\upn], \bigY)$ with rejection set $S\upn(q)$, defined as the expected number of discoveries divided by the expected number of non-nulls:
\begin{equation}
    \Power_q(w_n) = \frac{\E_{P\bayes_n}[|S\upn(q)|]}{\numnonnull}. \tag{\ref{eq::powerdef}}
\end{equation}

For convenience, we restate Theorem \ref{thm::avgopt} and then prove it.

\begingroup
\def\thetheorem{\ref{thm::avgopt}}
\begin{theorem} 
For each $n$, let $\MLR\uppi = \mlr_n\uppi([\bX^{(n)}, \tbX^{(n)}], \bigY^{(n)})$ denote the MLR statistics with respect to $P_n\bayes$ and let $W = w_n([\bX\upn, \tbX\upn], \bigY\upn)$ denote any other sequence of feature statistics. Assume the following:
\begin{itemize}
    \item $\lim_{n \to \infty} \Power_q(\mlr_n\uppi)$ and $\lim_{n \to \infty} \Power_q(w_n)$ exist for each $q \in (0,1)$.
    \item The expected number of non-nulls grows faster than $\log(p_n)^4$. Formally, assume that for some $\gamma > 0$, $\lim_{n \to \infty} \frac{\numnonnull}{\log(p_n)^{4+\gamma}} = \infty$.
    \item Conditional on $D\upn$, the covariance between the signs of $\MLR\uppi$ decays exponentially off the diagonal. That is, there exist constants $C\ge 0, \rho \in (0,1)$ such that
\begin{equation}
    |\cov_{P\bayes_n}(\I(\MLR\uppi_i > 0), \I(\MLR\uppi_j > 0) \mid D\upn)| \le C \rho^{|i-j|}. \tag{\ref{eq::expcovdecay}}
\end{equation}
\end{itemize}

Then for all but countably many $q \in (0,1)$,
\begin{equation}
    \lim_{n \to \infty} \Power_q(\mlr_n\uppi) \ge \lim_{n \to \infty} \Power_q(w_n).
\end{equation}
\begin{proof} Note that in this proof, all expectations and probabilities are taken over $P_n\bayes$.

The proof proceeds in three main steps, but we begin by introducing some notation and outlining the overall strategy. Following Lemma \ref{lem::ddt2slln}, let $R = \I(\sorted(W) > 0)$ and $R\uppi = \I(\sorted(\MLR\uppi) > 0)$, and let $\delta = \E[W \mid D\upn]$ and $\delta\uppi = \E[\MLR\uppi \mid D\upn]$ be their conditional expectations. (Note that $W, R, \delta, \MLR\uppi, R\uppi$ and $\delta\uppi$ all change with $n$---however, we omit this dependency to lighten the notation.) As in Equation (\ref{eq::psidef}), we can write the number of discoveries made by $W$ and $\MLR\uppi$ as a function of $R\uppi$ and $R$, so:
\begin{equation*}
    \Power_q(\mlr_n\uppi) - \Power_q(w_n) = \frac{\E\left[\tau_q(R\uppi)\right]}{\numnonnull} - \frac{\E\left[\tau_q(R)\right]}{\numnonnull}.
\end{equation*}
We will show that the limit of this quantity is nonnegative, and the main idea is to make the approximations $\tau_q(R\uppi) \approx \tau_q(\delta\uppi)$ and $\tau_q(R) \approx \tau_q(\delta)$. In particular, we can decompose
\begin{align}
        \Power_q(\mlr_n\uppi) - \Power_q(w_n)
    &=
        \frac{\E\left[\tau_q(R\uppi) - \tau_q(R)\right]}{\numnonnull} \nonumber \\
    &=
        \frac{\E[\tau_q(R\uppi) - \tau_q(\delta\uppi)]}{\numnonnull} + \frac{\E[\tau_q(\delta\uppi) - \tau_q(\delta)]}{\numnonnull} + \frac{\E\left[\tau_q(\delta) - \tau_q(R)\right]}{\numnonnull} \nonumber \\
    &\ge 
        \frac{\E[\tau_q(\delta\uppi) - \tau_q(\delta)]}{\numnonnull} - \frac{\E|\tau_q(R\uppi) - \tau_q(\delta\uppi)|}{\numnonnull} - \frac{\E|\tau_q(\delta) - \tau_q(R)|}{\numnonnull}. \label{eq::mainstrategy}
\end{align}
In particular, Step 1 of the proof is to show that $\tau_q(\delta\uppi) \ge \tau_q(\delta)$ holds deterministically, for fixed $n$. This implies that the first term of Equation (\ref{eq::mainstrategy}) is nonnegative for fixed $n$. In Step 2, we show that as $n \to \infty$, the second and third terms of Equation (\ref{eq::mainstrategy}) vanish. In Step 3, we combine these results and take limits to yield the final result.

\underline{Step 1}: In this step, we show that $\tau_q(\delta\uppi) \ge \tau_q(\delta)$ holds deterministically for fixed $n$. To do this, it suffices to show that $\bar \delta\uppi_k \ge \bar \delta_k$ for each $k \in [p_n]$. To see this, recall that $\tau_q(\delta\uppi)$ and $\tau_q(\delta)$ are increasing functions of $\psi_q(\delta\uppi)$ and $\psi_q(\delta)$, as defined below:
\begin{equation}\label{eq::taudefreminder}
    \psi_q(\delta\uppi) = \max_{k \in \N} \left\{\frac{k - k \bar \delta\uppi_k + 1}{k \bar \delta\uppi_k} \le q \right\} \text{ and } \psi_q(\delta) = \max_{k \in \N} \left\{\frac{k-k\bar\delta_k + 1}{k \bar \delta_k} \le q \right\}
\end{equation}
where for $k > n$ we use the convention of ``padding" $\delta$ and $\delta\uppi$ with extra zeros, so (e.g.) $\bar \delta_k = \frac{n}{k} \bar \delta$ for $k > n$.

Since the function $\gamma \mapsto \frac{k-k\gamma+1}{k\gamma}$ is decreasing in $\gamma$, $\bar\delta\uppi_k \ge \bar \delta_k$ implies that $\frac{k - k \bar \delta\uppi_k + 1}{k \bar \delta\uppi_k} \le \frac{k - k \bar \delta_k + 1}{k \bar \delta_k}$ for each $k$, and therefore $\psi_q(\delta\uppi) \ge \psi_q(\delta)$, which implies $\tau_q(\delta\uppi) \ge \tau_q(\delta)$. Thus, it suffices to show that $\bar\delta\uppi_k \ge \bar \delta_k$ holds for each $k$. 

Intuitively, it makes sense that $\bar \delta\uppi_k \ge \bar \delta_k$ for each $k$, since $\MLR\uppi$ maximizes $\P(\MLR_j\uppi > 0 \mid D)$ coordinate-wise and chooses $\MLR\uppi$ so that $\delta\uppi$ is sorted in decreasing order. To prove this formally, we first argue that conditional on $D\upn$, $R\uppi$ is a deterministic function of $R$. Recall that according to Corollary \ref{cor::determdiff}, the event $\sign(W_j) \ne \sign(\MLR_j\uppi)$ is completely determined by the masked data $D\upn$. Furthermore, since $R\uppi$ and $R$ are random permutations of the vectors $\I(W > 0)$ and $\I(\MLR\uppi > 0)$ where the random permutations only depend on $|W|$ and $|\MLR\uppi|$, this implies there exists a random vector $\xi \in \{0,1\}^{p_n}$ and a random permutation $\sigma \in S_{p_n}$ such that $R\uppi = \xi \oplus \sigma(R)$ and $\xi, \sigma$ are deterministic conditional on $D\upn$. Note that here, $\oplus$ denotes the generalized parity function, so 
\begin{equation}\label{eq::parity}
    b \oplus x \defeq \begin{cases} x & b = 0 \\ 1 - x & b = 1 \end{cases} \text{ for } x \in [0,1], b \in \{0,1\}
\end{equation}
which guarantees that $0 \oplus 0 = 1 \oplus 1 = 0$ and $0 \oplus 1 = 1 \oplus 0 = 1$, etc.

The intuition here is that following Proposition \ref{prop::mxdistinguish}, fitting a feature statistic $W$ is equivalent to observing $D\upn$, assigning an ordering to the features, and then guessing which one of $\{\bX_j,\tbX_j\}$ is the true feature and which is a knockoff, where $W_j > 0$ if and only if this ``guess" is correct. Since these decisions are made as deterministic functions of $D\upn$, $\MLR\uppi$ can only be different than $W$ in that (i) it may make different guesses, flipping the sign of $W$ (as represented by $\xi)$, and (ii) its absolute values may be sorted in a different order (as represented by $\sigma$).

Now, since $\xi$ and $\sigma$ are deterministic functions of $D\upn$, this implies that
\begin{equation*}
    \delta\uppi_i = \E[R_i\uppi \mid D\upn] = \E[\xi_i \oplus R_{\sigma(i)} \mid D\upn] = \begin{cases}
        1 - \delta_{\sigma(i)} & \xi_i = 1\\ 
        \delta_{\sigma(i)} & \xi_i = 0.
    \end{cases}
\end{equation*}
However, by definition, $\E[R_i\uppi \mid D\upn] = \P(\sorted(\MLR\uppi)_i > 0 \mid D\upn)$, and Proposition \ref{prop::intuit_opt} implies that $\P(\MLR\uppi_i > 0 \mid D\upn) \ge 0.5$ for all $i \in [p_n]$: since the ordering of $\MLR\uppi$ is deterministic conditional on $D\upn$, this also implies $\delta_i\uppi = \P(\sorted(\MLR\uppi)_i > 0 \mid D\upn) \ge 0.5$. Therefore,  $\delta\uppi_i \ge \delta_{\sigma(i)}$ for each $i \in [p_n]$. Additionally, by construction $\MLR\uppi$ ensures that $\delta\uppi_1 \ge \delta\uppi_2 \ge \dots \ge \delta\uppi_{p_n}$. If $\delta_{(1)}, \dots, \delta_{(p_n)}$ are the order statistics of $\delta$ in decreasing order, this implies that $\delta\uppi_i \ge \delta_{(i)}$ for all $i$. Therefore,
\begin{equation*}
    \bar \delta\uppi_k = \frac{1}{k} \sum_{i=1}^k \delta\uppi_i \ge \frac{1}{k} \sum_{i=1}^k \delta_{(i)} \ge \frac{1}{k} \sum_{i=1}^n \delta_i.
\end{equation*}
By the previous analysis, this proves that $\tau_q(\delta\uppi) \ge \tau_q(\delta)$.

\underline{Step 2}: In this step, we show that $\frac{\E|\tau_q(\delta\uppi) - \tau_q(R\uppi)|}{\numnonnull} \to 0$ for all but countably many $q \in (0,1)$, as well as the analogous result for $R$ and $\delta$. We first prove the result for $R\uppi$ and $\delta\uppi$, and in particular, for any fixed $v > 0$, we will show that $\limsup_{n \to \infty} \frac{\E|\tau_q(\delta\uppi) - \tau_q(R\uppi)|}{\numnonnull} \le v$. Since we will show this for any arbitrary $v > 0$, this implies $\frac{\E|\tau_q(\delta\uppi) - \tau_q(R\uppi)|}{\numnonnull} \to 0$.

We begin by applying Lemma \ref{lem::ddt2slln}. In particular, fix any $k_n \in [p_n]$, any $\epsilon > 0$, and define
\begin{equation*}
    A_n = \left\{\max_{k_n \le k \le p_n} |\bar R_k\uppi - \bar \delta_k\uppi| \le \epsilon\right\}.
\end{equation*}
Then by Lemma \ref{lem::ddt2slln},
\begin{equation*}
    |\tau_q(R\uppi) - \tau_q(\delta\uppi)| \le p_n \I(A_n^c)  + \tau_{q+\eta}(R\uppi) - \tau_{q-\eta}(R\uppi) + k_n + 1 + 3 \epsilon \tau_{q+\eta}(R\uppi).
\end{equation*}
where $\eta = 3(1+q)\epsilon$. Therefore,
\begin{equation}
    \frac{\E|\tau_q(R\uppi) - \tau_q(\delta\uppi)|}{\numnonnull} \le \frac{p_n \P(A_n^c)}{\numnonnull} + \frac{k_n + 1}{\numnonnull} + \frac{\E[\tau_{q+\eta}(R\uppi)] - \E[\tau_{q-\eta}(R\uppi)]}{\numnonnull} + \frac{3 \epsilon \E[\tau_{q+\eta}(R\uppi)]}{\numnonnull}.
\end{equation}
We now analyze these terms in order: while doing so, we will choose a sequence $\{k_n\}$ and constant $\epsilon > 0$ which guarantee the desired result. Note that eventually, our choice of $\epsilon$ will depend on $q$, so the convergence is not necessarily uniform, but that does not pose a problem for our proof.

\textit{First term}: To start, we will first apply a finite-sample concentration result to bound $\P(A_n^c)$. In particular, we show in Corollary \ref{cor::expdecay} that if $X_1, \dots, X_n$ are mean-zero,  $[-1,1]$-valued random variables satisfying the exponential decay condition from Equation (\ref{eq::expcovdecay}), then there exists a universal constant $C' > 0$ depending only on $C$ and $\rho$ such that
\begin{equation}\label{eq::permconcv1}
   \P\left(\max_{n_0 \le i \le n} |\bar X_i| \ge t \right) \le n \exp(-C' t^{2 } n_0^{1/4}).
\end{equation}
Furthermore, Corollary \ref{cor::expdecay} shows that this result holds even if we permute $X_1, \dots, X_n$ according to some arbitrary \textit{fixed} permutation $\sigma$. Now, observe that conditional on $D\upn$, $R_j\uppi - \delta_j\uppi$ is a zero-mean, $[-1,1]$-valued random variable which is a fixed permutation of $\I(\MLR\uppi > 0)$ minus its (conditional) expectation. Since $\I(\MLR\uppi > 0)$ obeys the conditional exponential decay condition in Equation (\ref{eq::expcovdecay}), we can apply Corollary \ref{cor::expdecay} to $R_j\uppi - \delta_j\uppi$:
\begin{equation}\label{eq::permconcapp}
    \P(A_n^c \mid D\upn) \le p_n \exp(-C' \epsilon^{2 } k_n^{1/4})
\end{equation}
which implies by the tower property that $p_n \P(A_n^c) \le p_n^2 \exp(-C' \epsilon^{2 } k_n^{1/4})$. Now, suppose we take
\begin{equation*}
    k_n = \ceil{\log(p_n)^{4 + \gamma}}.
\end{equation*}
Then observe that $\epsilon$ is fixed, so as $n \to \infty$, $k_n^{1/4} \epsilon^{2 } = \Omega(\log(p_n)^{1+\gamma/4})$. Thus
\begin{equation*}
    \log(p_n \P(A_n^c)) \le 2 \log(p_n) - \Omega\left(\log(p_n)^{1 + \gamma/4}\right) \to - \infty.
\end{equation*}
Therefore, for this choice of $k_n$, we have shown the stronger result that $p_n \P(A_n^c) \to 0$. Of course, this implies $\frac{p_n \P(A_n^c)}{\numnonnull} \to 0$ as well.

\textit{Second term}: This term is easy, as we assume in the statement that $\frac{k_n}{\numnonnull} \sim \frac{\log(p_n)^{4 + \gamma}}{\numnonnull} \to 0$.

\textit{Third term}: We will now show that for all but countably many $q \in (0,1)$, for any sufficiently small $\epsilon$ (and thus for any sufficiently small $\eta$), $\limsup_{n \to \infty} \frac{\E[\tau_{q+\eta}(R\uppi)] - \E[\tau_{q-\eta}(R\uppi)]}{\numnonnull} \le v/2$ for any fixed $v > 0$.

To do this, recall by assumption that for all $q \in (0,1)$, we have that $\lim_{n \to \infty} \frac{\E[\tau_q(R\uppi)]}{\numnonnull}$ exists and converges to some (extended) real number $L(q)$. Furthermore, we show in Lemma \ref{lem::pferbound} that $L(q)$ is always finite---this is intuitively a consequence of the fact that knockoffs controls the false discovery rate, and thus the expected number of discoveries cannot exceed the number of non-nulls by more than a constant factor. Importantly, since $\tau_q(R\uppi)$ is increasing in $q$, the function $L(q)$ is increasing in $q$ for all $q \in (0,1)$: therefore, it is continuous on $(0,1)$ except on a countable set.

Supposing that $q$ is a continuity point of $L(q)$, there exists some $\beta > 0$ such that $|q-q'| \le \beta \implies |L(q) - L(q')| \le v/4$. Take $\epsilon$ to be any positive constant such that $\epsilon \le \frac{\beta}{3(1+q)}$ and thus $\eta \le \beta$. Then we conclude
\begin{align*}
        \limsup_{n \to \infty} \frac{\E[\tau_{q+\eta}(R\uppi)] - \E[\tau_{q-\eta}(R\uppi)]}{\numnonnull} 
    % &= 
    %     \limsup_{n \to \infty} \frac{\E[\tau_{q+\beta}(R\uppi)] - \E[\tau_{q-\beta}(R\uppi)]}{\numnonnull}  \\
    &= 
        L(q+\eta) - L(q-\eta) & \text{ because } \frac{\E[\tau_{q}(R\uppi)]}{\numnonnull} \to L(q) \text{ pointwise} \\
    &\le 
        \frac{v}{2}. & \text{ by continuity}
\end{align*}

\textit{Fourth term}: We now show that for all but countably many $q \in (0,1)$, for any sufficiently small $\epsilon$, $\lim_{n \to \infty} \frac{3 \epsilon \E[\tau_{q+\eta}(R\uppi)]}{\numnonnull} = 3 \epsilon L(q+\eta) \le v / 2$. However, this is simple, since Lemma \ref{lem::pferbound} tells us that $L(q)$ is finite and continuous except at countably many points. Thus, we can take $\epsilon$ sufficiently small so that $L(q+\eta) = L(q+3(1+q)\epsilon) \le L(q) + 1$, and then also take $\epsilon < \frac{v}{6( L(q) + 1)}$ so that $3 \epsilon L(q+ \eta) \le v/2$.

Combining the results for all four terms, we see the following: for each $v > 0$, there exists a sequence $\{k_n\}$ and a constant $\epsilon$ guaranteeing that 
\begin{align*}
    \limsup_{n \to \infty} \frac{\E|\tau_q(R\uppi) - \tau_q(\delta\uppi)|}{\numnonnull} \le v.
\end{align*}
Since this holds for all $v > 0$, we conclude $\lim_{n \to \infty} \frac{\E|\tau_q(R\uppi) - \tau_q(\delta\uppi)|}{\numnonnull} = 0$ as desired. 

Lastly in this step, we need to show the same result for $R$ and $\delta$ in place of $R\uppi$ and $\delta\uppi$. However, the proof for $R$ and $\delta$ is identical to the proof for $R\uppi$ and $\delta\uppi$. The one subtlety worth mentioning is that we do not directly assume the exponential decay condition in Equation (\ref{eq::expcovdecay}) for $W$. However, as we argued in Step 1, we can write $\I(W > 0) = \xi \oplus \I(\MLR\uppi > 0)$ for some random vector $\xi \in \{0,1\}^{p_n}$ which is a deterministic function of $D\upn$. As a result, we have that
\begin{align*}%\label{eq::expcovdecay}
    |\cov(\I(W_i > 0), \I(W_j > 0) \mid D\upn)| 
    = 
    |\cov(\I(\MLR\uppi_i > 0), \I(\MLR\uppi_j > 0) \mid D\upn)|
    \le C \rho^{|i-j|}.
\end{align*}
Thus, we also conclude that $\lim_{n \to \infty} \frac{\E|\tau_q(R) - \tau_q(\delta)|}{\numnonnull} = 0$.

\underline{Step 3: Finishing the proof}. Recall Equation (\ref{eq::mainstrategy}), which states that
\begin{align}
        \Power_q(\mlr_n\uppi) - \Power_q(w_n)
    &\ge 
        \frac{\E[\tau_q(\delta\uppi) - \tau_q(\delta)]}{\numnonnull} - \frac{\E|\tau_q(R\uppi) - \tau_q(\delta\uppi)|}{\numnonnull} - \frac{\E|\tau_q(\delta) - \tau_q(R)|}{\numnonnull}. \tag{\ref{eq::mainstrategy}}
\end{align}
In Step 1, we showed that $\tau_q(\delta\uppi) \ge \tau_q(\delta)$ for fixed $n$. Furthermore, in Step 2, we showed that the second two terms vanish asymptotically. As a result, we take limits and conclude
\begin{equation*}
    \liminf_{n \to \infty} \Power_q(\mlr_n\uppi) - \Power_q(w_n) \ge 0.
\end{equation*}
Furthermore, since we assume that the limits $\lim_{n\to \infty} \Power_q(\mlr_n\uppi), \lim_{n\to \infty} \Power_q(w_n)$ exist, this implies that
\begin{equation*}
    \lim_{n \to \infty} \Power_q(\mlr_n\uppi) - \lim_{n \to \infty} \Power_q(w_n) \ge 0.
\end{equation*}
This concludes the proof.
\end{proof}
\end{theorem}
\endgroup

\subsection{Relaxing the assumptions in Theorem \ref{thm::avgopt}}\label{subsec::assumptions}

In this section, we discuss a few ways to relax the assumptions in Theorem \ref{thm::avgopt}.

First, we can easily relax the assumption that the limits $L(q) \defeq \lim_{n \to \infty} \Power_q(w_n)$ and $L\opt(q) \defeq \lim_{n \to \infty} \Power_n(\mlr_n\uppi)$ exist for each $q \in (0,1)$. Indeed, the proof of Theorem \ref{thm::avgopt} only uses this assumption to argue that 
there exists a sequence $\eta_n \to 0$ such that $L(q+\eta_n) \to L(q), L(q-\eta_n) \to L(q)$ (and similarly for $L\opt(q)$). Thus, we do not need the limits $L(q)$ to exist for every $q \in (0,1)$: in contrast, the result of Theorem \ref{thm::avgopt} will hold (e.g.) for any $q$ such that $L(\cdot), L\opt(\cdot)$ are continuous at $q$. Intuitively, this means that the result in Theorem \ref{thm::avgopt} holds except at points $q$ that delineate a ``phase transition," where the power of knockoffs jumps in a discontinuous fashion as $q$ increases.

Second, it is important to note that the precise form of the local dependency condition (\ref{eq::expcovdecay}) is not crucial. Indeed, the proof of Theorem \ref{thm::avgopt} only uses this condition to show that the partial sums of $\I(\MLR\uppi > 0)$ converge to their conditional mean given $D$. To be precise, fix any permutation $\kappa : [p_n] \to [p_n]$ and let $R = \I(\kappa(\MLR\uppi) > 0)$ where $\kappa(\MLR\uppi)$ permutes $\MLR\uppi$ according to $\kappa$. Let $\delta = \E[R \mid D]$. Then the proof of Theorem \ref{thm::avgopt} will go through exactly as written if we replace Equation (\ref{eq::expcovdecay}) with the following condition:
\begin{equation}\label{eq::stronglawcondition}
    \P\left(\max_{k_n \le k \le p_n} |\bar R_k - \bar \delta_k| \mid D \right) = o(p_n^{-1})
\end{equation}
where $k_n$ is some sequence satisfying $k_n \to \infty$ and $\frac{k_n}{\numnonnull} \to 0$. 

The upshot is this: under any condition where each permutation of $\I(\MLR\uppi > 0)$ obeys a certain strong law of large numbers, we should expect Theorem \ref{thm::avgopt} to hold. Although it is unusual to require that a strong law holds for any fixed permutation of a vector, in some cases there is a ``worst-case" permutation where if Equation (\ref{eq::stronglawcondition}) holds for some choice of $\kappa$, then it holds for every choice of $\kappa$. For example, in Corollary \ref{cor::expdecay}, we show that if the exponential decay condition holds, then it suffices to show Equation (\ref{eq::stronglawcondition}) in the case where $\kappa$ is the identity permutation, since the identity permutation places the most correlated coordinates of $\MLR\uppi$ next to each other.

\subsection{Proof of Propositions \ref{prop::exactopt}-\ref{prop::gaussian_cond_ind}}

\begingroup
\def\theproposition{\ref{prop::exactopt}}
\begin{proposition}  If $\{\I(\MLR_j\uppi > 0)\}_{j=1}^p$ are conditionally independent given $D$ under $P\bayes$, then \newline $\Powervtwo\uppi(\mlr\uppi) \ge \Powervtwo\uppi(w)$ for any valid feature statistic $w$.
%the MLR statistics with respect to $P\bayes$ exactly maximize the expected number of discoveries under $P\bayes$. 
\begin{proof} Note that the proof here is essentially the same argument used in Proposition 2 of \cite{whiteout2021}, but for completeness we restate it here. Let $W = w([\bX, \tbX], \bigY)$ denote any other feature statistic. Let $S \subset [p]$ and $S\uppi \subset [p]$ denote the discovery sets based on $W$ and $\MLR\uppi$. 

It suffices to  show that $\E_{P\bayes}[|S|] \le \E_{P\bayes}[|S\uppi|]$. The argument from proof of Theorem \ref{thm::avgopt} (the beginning of Appendix \ref{subsec::mainproof}) shows that the number of discoveries $|S|$ is a monotone function of $\psi_q(\I(\sorted(W) > 0))$, where $\sorted(W)$ denotes $W$ sorted in decreasing order of its absolute values. Therefore, it suffices to show that if $R = \I(\sorted(W) > 0)$ and $R\uppi = \I(\sorted(\MLR\uppi) > 0)$, 
\begin{equation}
    \E_{P\bayes}[\psi_q(R)] \le \E_{P\bayes}[\psi_q(R\uppi)],
\end{equation}
where as defined in Eq. (\ref{eq::psidef}), $\psi_q(\eta) \defeq \max_{k} \left\{ k : \frac{k - k \bar \eta_k + 1}{k \bar \eta_k} \le q \right\}$ for any $\eta \in \{0,1\}^p$. To do this, we recall from the proof of Theorem \ref{thm::avgopt} that there exists a $D$-measurable vector $\xi \in \{0,1\}^p$ and a $D$-measurable permutation $\sigma : [p] \to [p]$ such that
\begin{equation*}
    R = \sigma(\xi \oplus R\uppi)
\end{equation*}
where for any vector $x \in \R^p$, $\sigma(x) \defeq  (x_{\sigma(1)}, \dots, x_{\sigma(p)})$ and $\oplus$ denotes the parity function (see Eq. \ref{eq::parity}). We now make a few observations:
\begin{enumerate}[itemsep=0.5pt, topsep=0pt, leftmargin=*]
    \item We assume $\{\I(\MLR_j\uppi > 0)\}_{j=1}^p$ are conditionally independent. Since the magnitudes of $\MLR\uppi$ are $D$-measurable by Proposition \ref{prop::mxdistinguish}, $R\uppi$ is equal to a $D$-measurable permutation of $\I(\MLR\uppi > 0)$. Therefore, the entries of $R\uppi$ are conditionally independent given $D$.
    \item Since $\sigma$ and $\xi$ are $D$-measurable, the entries of $R$ are also conditionally independent given $D$.    
    \item Since $\MLR_j\uppi$ maximizes $P\bayes(\MLR_j\uppi > 0 \mid D)$ among all feature statistics by Proposition \ref{prop::intuit_opt}, this implies that for any $j$, $P\bayes(R_j\uppi > 0 \mid D) \ge \frac{1}{2}$. Thus, $P\bayes(R_j > 0 \mid D) = P\bayes(\xi_j \oplus R_j\uppi > 0 \mid D) \le P\bayes(R_j\uppi > 0 \mid D)$ for all $j$. 
\end{enumerate}
Thus, we can create a coupling $\tilde{R}$ such that $\tilde{R}$ has the same marginal law as $\sigma(R\uppi)$, but $\tilde{R} \ge R$ a.s. (by the third observation above). This implies that $\frac{k-k \bar{\tilde R}_k +1}{k \bar{\tilde R}_k} \le \frac{k-k \bar R_k +1}{k \bar R_k}$ for all $k$, and therefore $\psi_q(\tilde R) \ge \psi_q(R)$. Therefore
\begin{equation*}
    \E_{P\bayes}[\psi_q(R)] \le \E[\psi_q(\tilde R)] = \E_{P\bayes}[\psi_q(\sigma(R\uppi))].
\end{equation*}
Therefore, to complete the proof, it suffices to show that $\E_{P\bayes}[\psi_q(\sigma(R\uppi))] \le \E_{P\bayes}[\psi_q(R\uppi)]$---to simplify notation, take $R = \sigma(R\uppi)$, i.e., assume $\xi = 0$ without loss of generality. Note that by Proposition \ref{prop::intuit_opt}, the entries of $\delta\uppi \defeq \E[R\uppi \mid D] \in [0,1]^p$ are arranged in decreasing order. To show the desired result, let $\delta \defeq \E[R \mid D]$ and fix any $i < j$ such that $\delta_i < \delta_j$ are ``misordered" (i.e. not in decreasing order). It is sufficient to show that $\E[\psi_q(R) \mid D] \le \E[\psi_q(R_{\swap{\{i,j\}}}) \mid D]$, since conditional on $D$, $R\uppi = \sigma^{-1}(R)$ is simply the result of iteratively swapping elements of $R$ to sort $\delta$ in decreasing order. 

To show this, for any $r_i, r_j \in \{0,1\}$, let $(R_1, \dots, r_i, \dots, r_j, \dots, R_p)$ denote the vector which replaces the $i$th and $j$th entries of $R$ with $r_i$ and $r_j$, respectively. 
Since the entries of $R \mid D$ are conditionally independent, after conditioning on $R_{-\{i,j\}}$, we can write out the relevant conditional expectations:
\begin{align*}
    &
        \E[\psi_q(R) - \psi_q(R_{\swap{\{i,j\}}}) \mid D, R_{-\{i,j\}}] \\
    =& 
        \sum_{r_i, r_j \in \{0,1\}} [\psi_q((R_1, \dots, r_i, \dots, r_j, \dots, R_p)) - \psi_q(R_1, \dots, r_i, \dots, r_j, \dots, R_p)] \delta_i^{r_i} (1-\delta_i)^{r_i} \delta_j^{r_j} (1-\delta_j)^{r_j} \\
    =&
        \left[\psi_q((R_1, \dots, 1, \dots, 0, \dots, R_p)) - \psi_q((R_1, \dots, 0, \dots, 1, \dots, R_p))\right]\left[\delta_i (1-\delta_j) - \delta_j (1 - \delta_i) \right] \\
    =& 
        \left[\psi_q((R_1, \dots, 1, \dots, 0, \dots, R_p)) - \psi_q((R_1, \dots, 0, \dots, 1, \dots, R_p))\right]\left[\delta_i - \delta_j \right] \\
    \le& 
        0.
\end{align*}
where the first equality uses conditional independence and the definition of expectation, the second equality cancels relevant terms, the third equality is simple algebra, and the final inequality uses the fact that $\delta_i < \delta_j$ by assumption but $\psi_q((R_1, \dots, 1, \dots, 0, \dots, R_p)) - \psi_q((R_1, \dots, 0, \dots, 1, \dots, R_p)) \ge 0$. In particular, to see this latter point, define $r^{(1,0)} = (R_1, \dots, 1, \dots, 0, \dots, R_p), r^{(0,1)} \defeq (R_1, \dots, 0, \dots, 1, \dots, R_p)$ and note that the partial averages obey $\bar r_k^{(1,0)} \ge \bar r_k^{(0,1)}$ for every $k \in [p]$, which implies $\psi_q(r^{(1,0)} ) \ge \psi_q(r^{(0,1)})$ by definition of $\psi_q$.

Thus, by the tower property, $\E[\psi_q(R) \mid D] \le \E[\psi_q(R_{\swap{\{i,j\}}}) \mid D]$, which completes the proof.
\end{proof}
\end{proposition}
\endgroup

\begingroup
\def\theproposition{\ref{prop::gaussian_cond_ind}}
\begin{proposition} Suppose that (i) $\bX$ are FX knockoffs or Gaussian conditional MX knockoffs \citep{condknock2019} and (ii) under $P\opt$, $\bigY \mid \bX \sim \mcN(\bX \beta, \sigma^2 I_n)$. Then under $P\opt$, $\{\I(\MLR_j\oracle > 0)\}_{j=1}^p \mid D$ are conditionally independent.
\begin{proof} This result is already proved for the fixed-X case by \cite{whiteout2021}, so we only prove it for the case where $\tbX$ are conditional Gaussian MX knockoffs. In particular, define $\widehat{\bX}_j\opt \defeq \argmax_{\bx \in \{\bX_j, \tbX_j\}} P_j\opt(\bx \mid D)$ and recall that $\MLR_j\oracle> 0$ if and only if $\widehat{\bX}_j\opt = \bX_j$. Therefore, to show that $\{\MLR_j\oracle > 0\}_{j=1}^p \mid D$ are conditionally independent under $P\opt$, it suffices to show that $\{\bX_j\}_{j=1}^p \mid D$ are conditionally independent under $P\opt$.

Fix any value $d = (\bigY^{(0)}, \{\bx_j, \tbx_j\}_{j=1}^p)$ and let $\bX^{(0)} \in \R^{n \times p}$ denote a possible value for the design matrix which is consistent with observing $D = d$ in the sense that $\bX_j^{(0)} \in \{\bx_j, \tbx_j\}$ for all $j \in [p]$. It suffices to show the factorization
\begin{equation*}
    P\opt(\bX = \bX^{(0)} \mid D = d) \propto \prod_{j=1}^p q_j(\bX_j^{(0)})
\end{equation*}
for some functions $q_1, \dots, q_p : \R^n \to \R_{\ge 0}$ which may depend on the value $d$. To do this, observe that
\begin{align*}
        P\opt(\bX = \bX^{(0)} \mid D = d) 
    &\propto
        P\opt_{\bigY \mid \bX}(\bigY^{(0)} \mid \bX^{(0)}) \cdot P\opt(\bX = \bX^{(0)} \mid \{\bX_j, \tbX_j\} = \{\bx_j, \tbx_j\}_{j=1}^p \text{ for } j = 1, \dots, p) \\
    &\propto
        \exp\left(-\frac{1}{2\sigma^2} \|\bigY^{(0)} - \bX^{(0)} \beta\|_2^2\right) \frac{1}{2^p}.
\end{align*}
where the last line uses the Gaussian linear model assumption that under $P\opt$, $\bigY \mid \bX \sim \mcN(\bX \beta, \sigma^2 I_n)$ for some fixed $\beta \in \R^p, \sigma^2 \ge 0$ as well as the pairwise exchangeability of $\{\bX_j, \tbX_j\}$. Continuing yields,
\begin{align*}
        P\opt(\bX = \bX^{(0)} \mid D = d) 
    &\propto
        \exp\left(-\frac{1}{2\sigma^2} \left(\beta^T {\bX^{(0)}}^T \bX^{(0)} \beta - 2 {\bigY^{(0)}}^T \bX^{(0)} \beta \right) \right) \\
    &\propto 
        \exp\left(\frac{{\bigY^{(0)}}^T \bX^{(0)} \beta}{\sigma^2} \right).
\end{align*}
Here, the last step uses the key assumption that $\tbX$ are conditional Gaussian MX knockoffs, in which case $\bX^T \bX = \tbX^T \tbX$ and $\bX^T \tbX = \bX^T \bX - S$ for some diagonal matrix $S \in \R^{p \times p}$. In other words, the value of $\bX^T \bX$ is $D$-measurable, and thus conditional on $D = d$, the value of $\beta^T {\bX^{(0)}}^T \bX^{(0)} \beta $ is a constant. At this point, we conclude that
\begin{align*}
        P\opt(\bX = \bX^{(0)} \mid D = d) 
    &\propto
        \prod_{j=1}^p \exp\left(\frac{{\bigY^{(0)}}^T \bX_j^{(0)} \beta_j}{\sigma^2 }\right).
\end{align*}
which completes the proof by the factorization argument above.

\end{proof}
\end{proposition}
\endgroup
\addtocounter{proposition}{-1}

\subsection{Maximizing the expected number of true discoveries}\label{appendix::tpr}

Theorem \ref{thm::avgopt} shows that MLR statistics maximize the (normalized) expected number of discoveries, but not necessarily the expected number of \textit{true} discoveries. In this section, we give a sketch of the derivation of AMLR statistics and prove that they maximize the expected number of true discoveries.

This section uses the notation introduced in Section \ref{subsec::mainproof}. All probabilities and expectations are taken over $P\bayes$. As a reminder, for any feature statistic $W$, let $R = \I(\sorted(W) > 0)$, let $\delta = \E[R \mid D]$, and let $\psi_q(\cdot)$ be as defined in Equation (\ref{eq::psidef}) so that knockoffs makes $\tau_q(R) = \ceil{\frac{\psi_q(R) + 1}{1+q}}$ discoveries.

\subsubsection{Proof of Corollary \ref{cor::amlr_valid}}

\begingroup
\def\thecorollary{\ref{cor::amlr_valid}}
\begin{corollary} AMLR statistics from Definition \ref{def::amlr} are valid knockoff statistics.
\begin{proof} The signs of the AMLR statistics are identical to the signs of the MLR statistics. Therefore, by Propositions \ref{prop::mxdistinguish} and \ref{prop::fxdistinguish} (in the MX and FX case, respectively), it suffices to show that the absolute values of the AMLR statistics are functions of the masked data. However, the AMLR statistics magnitudes are purely a function of (i) the magnitudes of the MLR statistics and (ii) $\nu_j$, which is the ratio of conditional probabilities given the masked data $D$. These conditional probabilities by definition are functions of $D$, and since MLR statistics are valid knockoff statistics by Lemma \ref{lem::mlrvalid}, the MLR magnitudes are also a function of the masked data $D$ by Proposition \ref{prop::mxdistinguish}. Thus, the AMLR statistic magnitudes are a function of $D$, which concludes the proof.
\end{proof}
\end{corollary}
\endgroup

\subsubsection{Proof sketch and intuition}\label{subsubsec::tpr_intuition} 

\textbf{Proof sketch}: The key idea behind the proof of Theorem \ref{thm::avgopt} is to observe that:
\begin{enumerate}[topsep=0pt]
    \item The number of discoveries $\tau_q(R)$ only depends on cumulative averages of $R$, denoted $\bar R_k = \frac{1}{k} \sum_{i=1}^k R_i$.
    \item As $p \to \infty$, $\bar R_k \toas \bar \delta_k$ under suitable assumptions. Thus, $\tau_q(R) \approx \tau_q(\delta)$.
    \item If $R\uppi = \I(\sorted(\MLR\uppi) > 0)$ are MLR statistics with $\delta\uppi = \E[R\uppi \mid D]$, then $R\uppi$ is asymptotically optimal because $\tau_q(\delta\uppi) \ge \tau_q(\delta)$ holds in finite samples for any choice of $\delta$. Thus we conclude:
    \begin{equation}
        \tau_q(R\uppi) \approx \tau_q(\delta\uppi) \ge \tau_q(\delta) \approx \tau_q(R).
    \end{equation}
    In particular, this holds because MLR statistics ensure $\delta\uppi$ is in descending order.
\end{enumerate}

To show a similar result for the number of \textit{true} discoveries, we repeat the three steps used in the proof of Theorem \ref{thm::avgopt}. To do this, let $I_j$ be the indicator that the feature corresponding to the $j$th coordinate of $R_j$ is non-null, and let $B_j = I_j R_j$ be the indicator that $\sorted(W)_j > 0$ \textit{and} that the corresponding feature is non-null. Let $b = \E[B \mid D]$. Then:

\begin{enumerate}[topsep=0pt]
    \item Let $T_q(R, B)$ denote the number of \textit{true} discoveries. We claim that $T_q(R, B)$ is a function of the successive partial means of $R$ and $B$. To see this, recall from Section \ref{subsec::mainproof} that knockoffs will make $\tau_q(R)$ discoveries, and in particular it will make discoveries corresponding to any of the first $\psi_q(R)$ coordinates of $R$ which are positive. Therefore,
    \begin{equation}\label{eq::tqrb_definition}
        T_q(R, B) = \sum_{j=1}^{\psi_q(R)} B_j = \psi_q(R) \cdot \frac{1}{\psi_q(R)} \sum_{j=1}^{\psi_q(R)} B_j.
    \end{equation}
    Since $\psi_q(R)$ only depends on the successive averages of $R$ and the second term is itself a successive average of $\{B_j\}$, this finishes the first step.
    
    \item The second step is to show that as $p \to \infty$, $\bar B_k \toas \bar b_k, \bar R_k \toas \bar \delta_k$ and therefore $T_q(R, B) \approx T_q(\delta, b)$. This is done using the same techniques as Theorem \ref{thm::avgopt}, although it requires an extra assumption that $B$ also obeys the local dependency condition (\ref{eq::expcovdecay}). Like the original condition, this condition also only depends on the posterior of $B \mid D$, so it can be diagnosed using the data at hand.
    
    \item To complete the proof, we define the adjusted MLR statistic $\AMLR\uppi \in \R^n$ with corresponding $\tilde{R}, \tilde{\delta}, \tilde{b}$ such that $T_q(\tilde{\delta}, \tilde{b}) \ge T_q(\delta, b)$ holds in finite samples for any other feature statistic $W$. It is easy to see that $\AMLR\uppi$ must have the same \textit{signs} as the original MLR statistics $\MLR\uppi$, since the signs of $\MLR\uppi$ maximize $\delta\uppi$ and $b\uppi$ coordinatewise. However, the \textit{absolute values} of $\AMLR\uppi$ may differ from those of $\MLR\uppi$, since it is not always true that sorting $\delta$ in decreasing order maximizes $T_q(\delta, b)$. 
\end{enumerate}

It turns out that the absolute values of the AMLR statistics in Eq. \eqref{eq::amlr} yield vectors $\tilde\delta, \tilde b$ which maximize $T_q(\tilde\delta, \tilde b)$ up to an $O(1)$ additive constant. Theorem \ref{thm::amlr_power_technical} formally proves this, but we now give some intuition. 

\textbf{Intuition}: To determine the optimal absolute values for AMLR statistics, assume WLOG by relabelling the variables that $|\MLR_1\uppi| \ge |\MLR_2\uppi| \ge \dots \ge |\MLR_p\uppi|$. Let $S \subset [p]$ denote an optimization variable representing the set of variables with the $K$ largest absolute values for the AMLR statistics, for some $K$. We will try to design $S$ such that we can make as many true discoveries within $S$ as possible.

As argued above, AMLR and MLR statistics should have the same signs. Thus, roughly speaking, we can discover all features with positive signs among $S$ whenever
\begin{equation*}
    \frac{1}{|S|} \sum_{j \in S} \I(\MLR_j\uppi > 0) \ge (1+q)^{-1}.
\end{equation*}
Making our usual approximation $\I(\MLR_j\uppi > 0) \approx \E[\MLR_j\uppi \mid D] = \delta_j\uppi$, this is equivalent to the constraint
\begin{equation}\label{eq::costs_heuristic}
    \sum_{j \in S} (1+q)^{-1} - \delta_j\uppi \ge 0. 
\end{equation}
Furthermore, if we can discover all of the features with positive signs in $S$, we make exactly $\sum_{j \in S} B_j\uppi \approx \sum_{j \in S} b_j\uppi$ true discoveries, where $B_j\uppi$ is the indicator that the $j$th MLR statistic is positive and the $j$th feature is nonnull. Maximizing this approximate objective subject to the constraint in Eq. \eqref{eq::costs_heuristic} yields the optimization problem:
\begin{equation}
    \max_{S \subset [p]} \sum_{j \in S} b_j\uppi \text{ s.t. } \sum_{j \in S} (1+q)^{-1} - \delta_j\uppi \ge 0.
\end{equation}
In other words, including $j \in S$ has ``benefit" $b_j\uppi$ and ``cost" $(1+q)^{-1} - \delta_j\uppi$. This is a simple integer linear program with one constraint, often known as a ``knapsack" problem. An approximately optimal solution to this problem is to do the following:
\begin{itemize}
    \item Include all variables with ``negative" costs, meaning $\delta_j\uppi = P\bayes(\MLR_j\uppi > 0 \mid D) \ge (1+q)^{-1}$. This is accomplished by ensuring that these features have the largest absolute values.
    \item Prioritize all other variables in descending order of the ratio of the benefit to the cost, $\frac{b_j\uppi}{(1+q)^{-1} - \delta_j\uppi}$.
\end{itemize}
This solution is indeed accomplished by the AMLR formula (Eq. \eqref{eq::amlr}), which gives the highest absolute values to features with negative costs; then, all other absolute values have the same order as the benefit-to-cost ratios $\frac{b_j\uppi}{(1+q)^{-1} - \delta_j\uppi} = \nu_j = \frac{P\uppi(\MLR_j\uppi > 0, j \in \mcH_1(\theta\opt) \mid D)}{(1+q)^{-1} - P\uppi(\MLR_j\uppi > 0 \mid D)}$.

\subsubsection{Theorem statement and proof}\label{subsubsec::amlr_formal}

We now show that AMLR statistics asymptotically maximize the true positive rate. To do this, we require two additional regularity conditions beyond those assumed in Theorem \ref{thm::avgopt}. First, we need a condition that the number of non-nulls under $P_n\bayes$ is not too heavy-tailed; namely, that its coefficient of variation is uniformly bounded.

\begin{assumption}\label{assump::sparsity_second_moment_bound} There exists a constant $C \in \R$ such that as $n \to \infty$, 
$$\frac{\sqrt{\mathrm{Var}_{P_n\bayes}(|\mcH_1(\theta\opt)|)}}{\numnonnull} \le C,$$
where $\numnonnull \defeq \E_{P_n\bayes}[|\mcH_1(\theta\opt)|]$ is the expected number of non-nulls under $P_n\bayes$.
\end{assumption}

Assumption \ref{assump::sparsity_second_moment_bound} is needed for a technical reason. As we will see in Step 3 of the proof, combining this assumption with Lemma \ref{lem::pferbound} ensures that the normalized number of discoveries is uniformly integrable, which is necessary to show that certain error terms converge in $L^1$ to zero. Nonetheless, this assumption is already quite mild, and it is satisfied in previously studied linear and polynomial sparsity regimes \citep{donohojin2004, weinstein2017, ke2020}.

Next, we need an additional local dependence condition.

\begin{assumption}[Additional local dependence condition]\label{assump::localdep_v2} Let $I_j^+ = \I(\MLR_j\bayes > 0, j \in \mcH_1(\theta\opt))$ indicate the event that $j$ is non-null and $\MLR_j\bayes$ is positive. Let $I_j^- = \I(\MLR_j\bayes < 0, j \in \mcH_1(\theta\opt))$ indicate the event that $j$ is non-null and $\MLR_j\bayes$ is negative. We assume that there exist constants $C\ge 0, \rho \in (0,1)$ such that for all $i, j \in [p]$:

\begin{equation}
    |\cov_{P\bayes_n}(I_i^+, I_j^+ \mid D\upn)| \le C \rho^{|i-j|}.
\end{equation}
\begin{equation}
    |\cov_{P\bayes_n}(I_i^-, I_j^+ \mid D\upn)| \le C \rho^{|i-j|}.
\end{equation}
\begin{equation}
    |\cov_{P\bayes_n}(I_i^-, I_j^- \mid D\upn)| \le C \rho^{|i-j|}.
\end{equation}
\end{assumption}

Assumption \ref{assump::localdep_v2} is needed because it implies that for any feature statistic $W$, $\{\I(W_i > 0, i \in \mcH_1(\theta\opt))\}_{i=1}^p$ obey the same local dependence condition.

\begin{lemma}\label{lem::localdep_v2} Assume Assumption \ref{assump::localdep_v2}. Then for any feature statistic $W$ and all $i, j \in [p]$,
\begin{equation*}
    \cov_{P_n\bayes}(\I(W_i > 0, i \in \mcH_1(\theta\opt)), \I(W_j > 0, j \in \mcH_1(\theta\opt)) \mid D\upn) \le C \rho^{|i-j|}.
\end{equation*}
\begin{proof} By Corollary \ref{cor::determdiff}, the event $\sign(W_j) \ne \sign(\MLR_j\pi)$ is $D\upn$-measurable. This implies that for each $j \in [p]$, there is a deterministic (conditional on $D\upn$) choice of $I_j^+, I_j^-$ such that either $\I(W_j > 0, j \in \mcH_1(\theta\opt)) = I_j^+$ or $\I(W_j > 0, j \in \mcH_1(\theta\opt)) = I_j^-$. As a result, we have that
\begin{align*}
        &|\cov_{P_n\bayes}(\I(W_i > 0, i \in \mcH_1(\theta\opt)), \I(W_j > 0, j \in \mcH_1(\theta\opt)) \mid D\upn)| \\
    \le 
        &\max(|\cov_{P\bayes_n}(I_i^+, I_j^+ \mid D\upn)|, |\cov_{P\bayes_n}(I_i^+, I_j^- \mid D\upn)|, (|\cov_{P\bayes_n}(I_i^-, I_j^+ \mid D\upn)|, |\cov_{P\bayes_n}(I_i^-, I_j^- \mid D\upn)|) \\
    \le 
        &C \rho^{|i-j|}
\end{align*}
where the last step follows by Assumption \ref{assump::localdep_v2}.
\end{proof}
\end{lemma}

\begin{theorem}\label{thm::amlr_power_technical} Suppose the conditions of Theorem \ref{thm::avgopt} plus Assumptions \ref{assump::sparsity_second_moment_bound} and \ref{assump::localdep_v2} hold. Let $\amlr_n\uppi$ denote the AMLR statistics with respect to $P_n\bayes$. Then for any sequence of feature statistics $\{w_n\}_{n \in \N}$, the following holds for all but countably many $q \in [0,1]$:
\begin{equation}
    \liminf_{n \to \infty} \frac{\TP_q\bayes(\amlr_n\uppi) - \TP_q\bayes(w_n)}{s_n} \ge 0,
\end{equation}
where $s_n$ is the expected number of non-nulls under $P_n\bayes$ as defined in Assumption \ref{assump::sparsity}, and $\TP_q\bayes(w_n)$ is the expected number of \textit{true} discoveries made by feature statistic $w_n$ under $P_n\bayes$ as defined in Section \ref{appendix::tpr}.
\begin{proof} The proof is in three steps.

\underline{Step 1: Notation and setup.}  Throughout, we use the notation and ideas from Section \ref{subsubsec::tpr_intuition} and the proof of Theorem \ref{thm::avgopt} (Section \ref{subsec::mainproof}), although to ease readability, we will try to give reminders about notation when needed. In particular:
\begin{itemize}
    \item Define $W = w_n([\bX\upn, \tbX\upn], \bigY\upn) \in \R^{p_n}$ and $\AMLR\uppi =$ 
    $\amlr_n\uppi([\bX\upn, \tbX\upn], \bigY\upn) \in \R^{p_n}$. For simplicity, we suppress the dependence on $n$.
    \item Define $R = \I(\sorted(W) > 0)$ and $\tilde R = \I(\sorted(\AMLR\uppi) > 0)$. 
    \item Let $\sigma, \tilde\sigma : [p_n] \to [p_n]$ denote random the permutations such that $\sigma(W)$ and $\tilde\sigma(\AMLR\uppi)$ are sorted in descending order of absolute values; with this notation, note that $R_j \defeq \I(W_{\sigma(j)} > 0), \tilde R_j \defeq \I(\AMLR\uppi_{\tilde\sigma(j)} > 0)$.
    \item Let $I_j = \I(\sigma(j) \in \mcH_1(\theta\opt))$ and $\tilde I_j \defeq  \I(\tilde \sigma(j) \in \mcH_1(\theta\opt))$ be the indicators that the feature statistic with the $j$th largest absolute value of $W$ (resp. $\AMLR\uppi$) represents a non-null feature.
    \item Let $B_j \defeq R_j I_j$ be the indicator that the feature with the $j$th largest absolute value among $W$ is non-null \textit{and} that its feature statistic is postive. Similarly, $\tilde B_j \defeq \tilde R_j \tilde I_j$ is the indicator that the feature with the $j$th largest AMLR statistic is non-null \textit{and} its AMLR statistic is positive. Let $B, \tilde B \in \R^{p_n}$ denote the vectors of these indicators.
    \item Define $\tilde\delta = \E[\tilde R \mid D\upn], \delta = \E[R \mid D\upn], \tilde b = \E[\tilde B \mid D\upn]$ and $b = \E[B \mid D\upn]$ to be the conditional expectations of these quantities given the masked data.
    \item Throughout, we only consider feature statistics whose values are nonzero a.s., because one can provably increase the power of knockoffs by ensuring that each coordinate of $W$ is nonzero.
\end{itemize}

Equation \eqref{eq::tqrb_definition} shows that we can write
\begin{equation}
\TP_q\uppi(w_n) \defeq \E_{P_n\bayes}[|S_{w_n} \cap \mcH_1(\theta\opt)|] = \E_{P_n\bayes}[T_q(R,B)] \text{ where } T_q(R,B) \defeq \psi_q(R) \bar B_{\psi_q(R)} \defeq \sum_{j=1}^{\psi_q(R)} B_j
\end{equation}
and similarly $\TP_q\uppi(\amlr_n\uppi) = \E_{P_n\bayes}[T_q(\tilde R, \tilde B)]$. Therefore it suffices to show that
\begin{equation}
    \liminf_{n \to \infty} \numnonnull^{-1}\left(T_q(\tilde R, \tilde B) - T_q(R, B) \right) \ge 0.
\end{equation}
To do this, we make the following approximation using the triangle inequality:
\begin{align}
        \numnonnull^{-1}\left(T_q(\tilde R, \tilde B) - T_q(R, B) \right) 
    \ge 
        \numnonnull^{-1} \left(\underbrace{T_q(\tilde \delta, \tilde b) - T_q(\delta, b)}_{\text{Term 1}} - \underbrace{|T_q(R, B) - T_q(\delta, b)|}_{\text{Term 2}} - \underbrace{|T_q(\tilde R, \tilde B) - T_q(\tilde \delta, \tilde b)|}_{\text{Term 3}}\right). \label{eq::amlr_threeterms} 
\end{align}
Step 2 of the proof shows that Term 1 is asymptotically positive, and Step 3 shows that Terms 2 and 3 are asymptotically negligible in expectation (i.e., of order $o(\numnonnull)$). This is sufficient to complete the proof.

\underline{Step 2: Analyzing Term 1.} In this step, we show that 
\begin{equation}\label{eq::amlr_step2_goal}
    \liminf_{n \to \infty} \E\left[\frac{T_q(\tilde\delta, \tilde b) - T_q(\delta, b)}{\numnonnull}\right] \ge 0.
\end{equation}
To do this, Step 2a shows that we may assume $\sign(\AMLR\uppi) = \sign(W)$ and therefore $\tilde\delta = \tilde\sigma(\sigma^{-1}(\delta))$ and $\tilde b = \tilde\sigma(\sigma^{-1}(\delta))$ (with the usual notation that $\sigma(\delta) \defeq (\delta_{\sigma(1)}, \dots, \delta_{\sigma(p)})$). Step 2b then proves Eq. \eqref{eq::amlr_step2_goal}.

\underline{Step 2a}: Define $\hat W = |W| \cdot \sign(\AMLR\uppi)$ to have the absolute values of $W$ and the signs of the AMLR statistics. We claim that if $\hat \delta, \hat b$ are defined analogously for $\hat W$ instead of $W$, then $T_q(\delta, b) \le T_q(\hat\delta, \hat b)$. 

To see this, we prove that (i) $\delta_j \le \hat\delta_j$ holds elementwise and (ii) $b_j \le \hat b_j$ holds elementwise. Results (i) and (ii) complete the proof of Step 2(a) since $T_q(\delta, b) = \sum_{j=1}^{\psi_q(\delta)} b_{j}$ is nondecreasing in its inputs (namely because $b$ is nonnegative and $\psi_q(\delta)$ is nondecreasing in its inputs).

To show (i),  Proposition \ref{prop::intuit_opt} shows that $\delta_j \defeq P_n\bayes(W_{\sigma(j)} > 0 \mid D\upn) \le P_n\bayes(\AMLR_{\sigma(j)} > 0 \mid D\upn) = P_n\bayes(\hat W_{\sigma(j)} > 0 \mid D\upn) = \hat \delta_j$, where this argument also uses the facts that (a) MLR and AMLR statistics have the same signs and (b) the permutation $\sigma$ depends only on $|W|$ and thus is $D\upn$-measurable. 

To show (ii), we note that the law of total probability yields
\begin{equation*}
    \delta_j = P_n\bayes(W_{\sigma(j)} > 0 \mid I_j = 1, D\upn) P_n\bayes(I_j = 1 \mid D\upn) + \frac{1}{2} P_n\bayes(I_j = 0 \mid D\upn)
\end{equation*}
where above we use the fact that $P_n\bayes(W_{\sigma(j)} > 0 \mid I_j = 0, D\upn) = \frac{1}{2}$---i.e., under the null, $W_{\sigma(j)}$ is conditionally symmetric. Thus, $\delta_j \le \hat \delta_j$ holds iff $P_n\bayes(W_{\sigma(j)} > 0 \mid I_j = 1, D\upn) \le P_n\bayes(\hat W_{\sigma(j)} > 0 \mid I_j = 1, D\upn)$. Using this result, we conclude
\begin{align*}
        b_j 
    &= 
        P_n\bayes(W_{\sigma(j)} > 0, I_j = 1 \mid D\upn) \\
    &=
        P_n\bayes(W_{\sigma(j)} > 0 \mid I_j = 1, D\upn) P_n\bayes(I_j = 1 \mid D\upn) \\
    &\le 
        P_n\bayes(\hat W_{\sigma(j)} > 0 \mid I_j = 1, D\upn) P_n\bayes(I_j = 1 \mid D\upn) & \text{ since } \delta_j \le \hat \delta_j \text{ by result (i)}\\
    &=
        \hat b_j.
\end{align*}
This proves that $T(\hat\delta, \hat b) \ge T(\delta, b)$. Yet in this step, we seek to show that $T(\tilde \delta, \tilde b) \ge T(\delta, b)$; thus, we may assume that $W = \hat W$ and thus $\sign(W) = \sign(\AMLR\uppi)$. This implies that $\delta, b$ and $\tilde\delta, \tilde b$ take the same values but in different orders;  formally, $\tilde\delta = \tilde\sigma(\sigma^{-1}(\delta))$ and $\tilde b = \tilde\sigma(\sigma^{-1}(\delta))$.

\underline{Step 2b}: Now we show Eq. \eqref{eq::amlr_step2_goal}. Recall that $T_q(\delta, b) = \sum_{j=1}^{\psi_q(\delta)} b_j$ is the partial sum of the first $\psi_q(\delta)$ elements of $b$, where $\psi_q(\delta) \defeq \max \left\{k : \frac{k - k \bar \delta_k + 1}{k \bar \delta_k} \le q \right\}$ is the maximum integer such that $\frac{k - k \bar \delta_k + 1}{k \bar \delta_k} \le q$. It follows that $T_q(\delta, b)$ is bounded by the following quantity:
\begin{equation*}
    T_q(\delta, b) \le \max_{S \subset [p]} \sum_{j \in S} b_j \text{ s.t. } \frac{|S| - |S| \bar \delta_S +1}{|S| \bar \delta_S} \le q,
\end{equation*}
where the notation $\bar \delta_S = \frac{1}{|S|} \sum_{j \in S} \delta_j$ defines the average value of $\delta_j$ averaged over the set $S$. Indeed, this inequality follows because $T_q(\delta, b)$ is precisely the solution to this optimization problem when $S$ is constrained to be a contiguous set of the form $\{1, \dots, k\}$ for some $k$. Relaxing this constraint to allow $S$ to be an arbitrary subset of $[p]$ should only increase the value of the objective. Manipulating this optimization problem yields:
\begin{align*}
    \max_{S \subset [p]} \sum_{j \in S} b_j \text{ s.t. } \frac{|S| - |S| \bar \delta_S +1}{|S| \bar \delta_S} \le q  
    =&
    \max_{S \subset [p]} \sum_{j \in S} b_j \text{ s.t. } |S| - |S| \bar \delta_S +1 \le (|S| \bar \delta_S) q \\
    =&
    \max_{S \subset [p]} \sum_{j =1}^p \I(j \in S) b_j \text{ s.t. } \sum_{j=1}^p \I(j \in S) \left(1 - (1+q) \delta_j \right) \le -1 \\
    =&
    \max_{S \subset [p]} \sum_{j =1}^p \I(j \in S) b_j \text{ s.t. } \sum_{j=1}^p \I(j \in S) \left(\frac{1}{1+q} - \delta_j \right) \le -\frac{1}{1+q}.
\end{align*}
This is an integer linear program with $p$ integer decision variables $x_j \defeq \I(j \in S)$ and one constraint:
\begin{equation*}
    = \max_{x_1, \dots, x_p \in \{0,1\}} \sum_{j=1}^p b_j x_j \text{ s.t. } \sum_{j=1}^p \left(\frac{1}{1+q} - \delta_j \right) x_j \le - \frac{1}{1+q}.
\end{equation*}
Such problems---commonly referred to as knapsack problems---are well studied. The maximum value is bounded by the following greedy strategy:
\begin{itemize}[topsep=0pt]
    \item Let $S\obvious = \{j \in [p] : \delta_j \ge \frac{1}{1+q}\}$ be the set of coordinates such that the constraint coefficient on $x_j$ is negative. This is an ``obvious" set because for $j \in S\obvious$, setting $x_j = 1$ never decreases the objective value (since $b_j \ge 0$) and never increases the constraint value (since $(1+q)^{-1} - \delta_j \le 0$). Thus, any optimal solution must have $x_j = 1$ for all $j \in S\obvious$.
    \item Let $n\obvious = |S\obvious|$ denote the number of obvious coordinates, and let $S\nonobvious = [p] \setminus S\obvious$ denote the non-obvious coordinates. 
    \item After setting $x_j = 1$ for all coordinates $j \in S\obvious$, we should sort the coordinates in $S\nonobvious$ in descending order of the ratio $\frac{b_j}{(1+q)^{-1} - \delta_j}$ and include as many coordinates of $S\nonobvious$ as possible until we hit the constraint that $\sum_{j=1}^p x_j ((1+q)^{-1} - \delta_j) \le - \frac{1}{1+q}$. Then, if we include one additional coordinate, the value of this solution (which violates the constraint by a small amount) bounds the optimal value to the overall objective problem. Indeed, this is because this solution actually has a higher objective value than the solution to the relaxed LP which only requires $x_1, \dots, x_p \in [0,1]$ instead of $x_1, \dots, x_p \in \{0,1\}$ \citep{knapsack_book}.
\end{itemize}
To make this strategy precise, let $\gamma : [p] \to [p]$ denote any permutation with the following properties:
\begin{itemize}[topsep=0pt]
    \item $\gamma(1), \dots, \gamma(n\obvious) \in S\obvious$. I.e., the first $n\obvious$ coordinates specified by $\gamma$ are the set $S\obvious$.
    \item For $i, j > n\obvious$, $\gamma(i) \ge \gamma(j)$ if and only if $\frac{b_i}{(1+q)^{-1} - \delta_i} \ge \frac{b_j}{(1+q)^{-1} - \delta_j}$. I.e., $\gamma$ orders the rest of the coordinates by the ratio $\frac{b_j}{(1+q)^{-1} - \delta_j}$.
\end{itemize}
Then, let $k\opt \defeq \max\left\{k \in [p] : \sum_{j=1}^k ((1+q)^{-1} - \delta_{\gamma(j)}) \le - \frac{1}{1+q} \right\}$ denote the maximum value of $k$ such that setting $x_{\gamma(1)}, \dots, x_{\gamma(k)} = 1$ yields a feasible solution to the integer LP. Then we have that
$$T_q(\delta, b) \le \max_{x_1, \dots, x_p \in \{0,1\}} \left[\sum_{j=1}^p b_j x_j \text{ s.t. } \sum_{j=1}^p \left(\frac{1}{1+q} - \delta_j \right) x_j \le - \frac{1}{1+q}\right] \le \sum_{j=1}^{k\opt + 1} b_{\gamma(j)}.$$
\begin{remark}\label{rem::sigma_not_unique_but_its_fine} $\gamma$ is not uniquely specified by the construction above, but the bound holds for any such $\gamma$.
\end{remark}

Step 2a shows that for some permutation $\kappa : [p] \to [p]$, we can write $\tilde\delta = \kappa(\delta)$ and $\tilde b = \kappa(b)$. We note that $\kappa$ must satisfy the following:
\begin{itemize}[topsep=0pt]
    \item $\kappa(1), \dots, \kappa(n\obvious) \in S\obvious$. This is because the AMLR statistics with the top absolute values are constructed to be the AMLR statistics such that $P_n\bayes(\AMLR_j\uppi > 0 \mid D) = P_n\bayes(\MLR_j\uppi > 0 \mid D) \ge (1+q)^{-1}$ (see Definition \ref{def::amlr}), which exactly coincides with the definition of the set $S\obvious$.
    \item For $i, j > n\obvious$, $\kappa(i) \ge \kappa(j)$ if and only if $\frac{b_i}{(1+q)^{-1} - \delta_i} \ge \frac{b_j}{(1+q)^{-1} - \delta_j}$. This again follows from Definition \ref{def::amlr}, as the absolute values of AMLR statistics are explicitly chosen to guarantee this.
\end{itemize}
In other words, $\kappa$ satisfies the same properties as $\gamma$ above. Thus, we may take $\gamma = \kappa$. This yields the bound $T_q(\delta, b) \le \sum_{j=1}^{k\opt + 1} b_{\gamma(j)}$. However, we also know that
\begin{align*}
        T_q(\tilde \delta, \tilde b) 
    = 
        \sum_{j=1}^{\psi_q(\tilde \delta)} \tilde b_j
    =
        \sum_{j=1}^{\psi_q(\tilde \delta)} b_{\gamma(j)}
    =
        \sum_{j=1}^{k\opt} b_{\gamma(j)},
\end{align*}
where the last step follows because
\begin{align*}
    \psi_q(\tilde\delta) = \psi_q(\kappa(\delta)) 
    &\defeq \max\left\{k \in [p] : \frac{k - \sum_{j=1}^k \delta_{\gamma(j)} + 1}{\sum_{j=1}^k \delta_{\gamma(j)}} \le q \right\} & \text{ by definition} \\ 
    &= \max\left\{k \in [p] : \sum_{j=1}^k ((1+q)^{-1} - \delta_{\gamma(j)}) \le - \frac{1}{1+q}\right\} & \text{ by algebraic manipulation}  \\
    &\defeq k\opt. & \text{ by definition}
\end{align*}
To summarize, this implies that
\begin{equation*}
    T_q(\delta, b) \le \sum_{j=1}^{k\opt + 1} b_{\gamma(j)} \le \sum_{j=1}^{k\opt} b_{\gamma(j)} + 1 = T_q(\tilde \delta, \tilde b) + 1.
\end{equation*}
This completes the proof of Step 2, since as a consequence,
\begin{equation}
    \E\left[\frac{T_q(\tilde\delta, \tilde b) - T_q(\delta, b)}{\numnonnull}\right] \ge - \frac{1}{\numnonnull} \to 0.
\end{equation}

\underline{Step 3}: In this step, we show that $\frac{\E[|T_q(\delta, b) - T_q(R,B)|]}{\numnonnull} \to 0$ holds for all but countably many $q \in [0,1]$. The same logic applies to the term involving $T_q(\tilde R, \tilde B) - T_q(\tilde\delta, \tilde b)$.

To do this, we will essentially bound $|T_q(\delta, b) - T_q(R, B)|$ by the quantities $\max_{k \ge k_0} |\bar B_k - \bar b_k|$ and $|\psi_q(R) - \psi_q(\delta)|$; as we shall see, these are both $o_{L_1}(\numnonnull)$. To begin, the triangle inequality yields:
\begin{align}
        |T_q(R, B) - T_q(\delta, b)|
    &=
        \left|\psi_q(R) \bar B_{\psi_q(R)} - \psi_q(\delta) \bar b_{\psi_q(\delta)} \right| & \text{ by definition} \nonumber\\
    &=
        \left|\psi_q(R) \bar B_{\psi_q(R)} - \psi_q(R) \bar b_{\psi_q(\delta)} + \psi_q(R) \bar b_{\psi_q(\delta)} - \psi_q(\delta) \bar b_{\psi_q(\delta)} \right| \nonumber\\
    &\le 
        \psi_q(R) |\bar B_{\psi_q(R)} - \bar b_{\psi_q(\delta)}| + \bar b_{\psi_q(\delta)} |\psi_q(R) - \psi_q(\delta)| & \text{ triangle inequality} \nonumber\\
    &\le
        \underbrace{\psi_q(R) |\bar B_{\psi_q(R)} - \bar b_{\psi_q(\delta)}|}_{\text{Term (a)}} + \underbrace{|\psi_q(R) - \psi_q(\delta)|}_{\text{Term (b)}} & \text{ since } \bar b_{\psi_q(\delta)} \in [0,1]. \label{eq::true_power_int_result_1}
\end{align}
First, we bound Term (a). Applying the triangle inequality (again) plus a simple algebraic lemma yields:
\begin{align*}
    \psi_q(R) |\bar B_{\psi_q(R)} - \bar b_{\psi_q(\delta)}|
    &\le 
    \psi_q(R) |\bar B_{\psi_q(R)} - \bar b_{\psi_q(R)}| + \psi_q(R) |\bar b_{\psi_q(R)} - \bar b_{\psi_q(\delta)}| \\
    &\le 
    \psi_q(R) |\bar B_{\psi_q(R)} - \bar b_{\psi_q(R)}| + 2 |\psi_q(R) - \psi_q(\delta)| & \text{ by Lemma \ref{lem::algebra4powerresult}}.
\end{align*}
The second line is not obvious, but it follows because Lemma \ref{lem::algebra4powerresult} shows that $|\bar b_{\psi_q(R)} - \bar b_{\psi_q(\delta)}| \le \frac{2|\psi_q(R) - \psi_q(\delta)|}{\max(\psi_q(R), \psi_q(\delta))}$. This algebraic result follows intuitively because $\bar b_k \in [0,1]$ holds for all $k$; thus, $|\bar b_{\psi_q(R)} - \bar b_{\psi_q(\delta)}|$ cannot be large unless $|\psi_q(R) - \psi_q(\delta)|$ is also large.

Combining this bound on Term (a) with the initial result in Eq. \eqref{eq::true_power_int_result_1} yields:
\begin{equation*}
    {\numnonnull}^{-1} \E\left[|T_q(R, B) - T_q(\delta, b)|\right] \le 3\underbrace{ {\numnonnull}^{-1} \E\left[|\psi_q(R) - \psi_q(\delta)|\right]}_{\text{Term (b)}} + \underbrace{{\numnonnull}^{-1} \E\left[\psi_q(R) |\bar B_{\psi_q(R)} - \bar b_{\psi_q(R)}|\right]}_{\text{Term (c)}}.
\end{equation*}
To show that Term (b) vanishes, recall that Step 2 of Theorem \ref{thm::avgopt} shows that $\numnonnull^{-1} \E[|\tau_q(R) - \tau_q(\delta)|] \to 0$ for all but countably many $q \in [0,1]$. However, $\tau_q(R) = \ceil{\frac{\psi_q(R) + 1}{1+q}} = \frac{1}{1+q} \psi_q(R) + O(1)$ and similarly $\tau_q(\delta) = \frac{1}{1+q} \psi_q(\delta) + O(1)$. Therefore, ${\numnonnull}^{-1} \E\left[|\psi_q(R) - \psi_q(\delta)|\right] \to 0$ for all but countably many $q \in [0,1]$.

Thus, it suffices to show that Term (c) vanishes. To do this, fix a sequence of integers $\{k_n\}_{n=1}^{\infty}$ such that $k_n \sim \log(p_n)^5$. Separately considering the cases where $\psi_q(R) < k_n$ and $\psi_q(R) \ge k_n$ yields
\begin{align*}
    \E\left[{\numnonnull}^{-1} \psi_q(R) |\bar B_{\psi_q(R)} - \bar b_{\psi_q(R)}|\right] 
    &\le
    \E[|\bar B_{\psi_q(R)} - \bar b_{\psi_q(R)}| \numnonnull^{-1} k_n] + \E[\numnonnull^{-1} \psi_q(R) \max_{k \ge k_n} |\bar B_k - \bar b_k|] \\
    &\le 
    2 \numnonnull^{-1} k_n + \underbrace{\E[\numnonnull^{-1} \psi_q(R) \max_{k \ge k_n} |\bar B_k - \bar b_k|]}_{\text{Term (d)}}
\end{align*}
where the second line follows because  $\bar B_{\psi_q(R)}, \bar b_{\psi_q(R)} \in [0,1]$.  Assumption \ref{assump::sparsity} guarantees that that $\numnonnull^{-1} k_n \to 0$, so it suffices to show that Term (d) vanishes. To do this, we observe:
\begin{itemize}[topsep=0pt, leftmargin=*]
    \item Assumption \ref{assump::localdep_v2} plus Lemma \ref{lem::localdep_v2} shows that $B$ satisfies an exponential decay condition, so we may apply Lemma \ref{cor::expdecay}. Namely, for $k_n \sim \log(p_n)^5$, $\max_{k \ge k_n} |\bar B_k - \bar b_k| \toprob 0$ (see Theorem \ref{thm::avgopt}, Step 2 for more details).
    \item Lemma \ref{lem::pferbound} shows that there exists a universal constant $C$ such that $\E[\tau_q(R)^2] \le C \E_{P_n\bayes}[|\mcH_1(\theta\opt)|^2]$. Note that the coefficient of variation of $|\mcH_1(\theta\opt)|$ is bounded under Assumption \ref{assump::sparsity_second_moment_bound}, so there exists another universal constant $C'$ such that $\E_{P_n\bayes}[\psi_q(R)^2] \sim \E_{P_n\bayes}[\tau_q(R)^2] \le C' \numnonnull^2$. This implies that ${\numnonnull}^{-1} \psi_q(R)$ is uniformly integrable since it has a uniformly bounded second moment.
    \item The latter two observations imply that $\E[\numnonnull^{-1} \psi_q(R) \max_{k \ge k_n} |\bar B_k - \bar b_k|] \to 0$, since $\numnonnull^{-1} \psi_q(R)$ is uniformly integrable, and  $\max_{k \ge k_n} |\bar B_k - \bar b_k| \le 2$ is also uniformly bounded and vanishes in probability.
\end{itemize}
Together, this proves that for all but countably many $q \in [0,1]$, $\frac{\E[|T_q(\delta, b) - T_q(R, B)|]}{\numnonnull} \to 0$.
\end{proof}
\end{theorem}

The following algebraic lemma is used at the end of Step 3 of the proof of Theorem \ref{thm::amlr_power_technical}.

\begin{lemma}\label{lem::algebra4powerresult} For $b = (b_1, \dots, b_p) \in [0,1]^p$, fix $k, \ell \in [p]$. Then
\begin{equation*}
    \max(k, \ell) \cdot |\bar b_k - \bar b_\ell| \le 2 |k - \ell|.
\end{equation*}
\begin{proof} Define $m = \min(k, \ell)$ and $M = \max(k, \ell)$. The lemma holds trivially when $m = M$, so we may assume $m < M$. Then we have that
\begin{align*}
        \left|\bar b_k - \bar b_{\ell}\right|
    &\defeq 
        \left|\frac{1}{k} \sum_{i=1}^k b_i - \frac{1}{\ell} \sum_{i=1}^{\ell} b_i\right| \\
    &=
        \left|\sum_{i=1}^{m} b_i \left(\frac{1}{m} - \frac{1}{M}\right) - \frac{1}{M} \sum_{i=m+1}^{M} b_i \right| & \text{ by definition of } m, M \\
    &\le
        \sum_{i=1}^{m} b_i \left(\frac{1}{m} - \frac{1}{M}\right)
        + \frac{1}{M} \sum_{i=m+1}^{M} b_i & \text{ since } b_i \ge 0, m < M  \\
    &\le 
        m \left(\frac{1}{m} - \frac{1}{M}\right) + \frac{1}{M} \left(M - m\right) & \text{ since } b_i \le 1.
\end{align*}
Therefore we conclude that
\begin{equation*}
    M |\bar b_k - \bar b_{\ell}| \le M m \left(\frac{1}{m} - \frac{1}{M} \right) + \frac{M}{M} (M - m) = 2(M-m) = 2 |k - \ell|
\end{equation*}
which completes the proof.
\end{proof}
\end{lemma}

\subsection{Verifying the local dependence assumption in a simple setting}\label{subsec::unhappy}

We now verify the local dependency condition (\ref{eq::expcovdecay}) in the setting where $\bX^T \bX$ is block-diagonal and $\sigma^2$ is known. 

\begin{proposition}\label{prop::unhappy} Suppose $\bX^T \bX$ is a block-diagonal matrix with maximum block size $M \in \N$. Suppose $P\bayes$ is any Bayesian model such that (i) the model class $\mcP$ is the class of all Gaussian models of the form $\bigY \mid \bX \sim \mcN(\bX \beta, \sigma^2 I_n)$ for $\theta = (\beta, \sigma^2) \in \Theta \defeq \R^p \times \R_{\ge 0}$, (ii) the coordinates of $\beta$ are marginally independent under $P\bayes$ and (iii) $\sigma^2$ is a constant under $P\bayes$. Then if $\tbX$ are either fixed-X knockoffs or conditional Gaussian model-X knockoffs \citep{condknock2019}, the coordinates of $\sign(\MLR\uppi)$ are $M$-dependent conditional on $D$ under $P\bayes$, implying that Equation (\ref{eq::expcovdecay}) holds, e.g., with $C = 2^M$ and $\rho = \frac{1}{2}$.
\end{proposition}
\begin{proof} Define $R \defeq \I(\MLR\uppi > 0)$. We will prove the stronger result that if $J_1, \dots, J_m \subset [p]$ are a partition of $[p]$ corresponding to the blocks of $\bX^T \bX$, then $R_{J_1}, \dots, R_{J_m}$ are jointly independent conditional on $D$. As notation, suppose without loss of generality that $J_1, \dots, J_m$ are contiguous subsets and $\bX^T \bX = \diag{\Sigma_1, \dots, \Sigma_m}$ for $\Sigma_i \in \R^{|J_i| \times |J_i|}$.  All probabilities and expectations are taken under $P\bayes$.

We give the proof for model-X knockoffs; the proof for fixed-X knockoffs is quite similar. Recall by Proposition \ref{prop::mxdistinguish} that we can write $R_j = \I(W_j > 0) = \I(\bX_j = \widehat{\bX}_j)$ where $\widehat{\bX}_j$ is a function of the masked data $D$. Therefore, to show $R_{J_1}, \dots, R_{J_m}$ are independent conditional on $D$, it suffices to show $\bX_{J_1}, \dots, \bX_{J_m}$ are conditionally independent given $D$. To do this, it will first be useful to note that the likelihood is
\begin{align*}
        P_{\bigY \mid \bX}^{(\beta,\sigma)}(\bigY \mid \bX) 
    &\propto 
        \exp\left(-\frac{1}{2 \sigma^2}||\bigY - \bX \beta||_2^2 \right) \\
    &\propto
        \exp\left(\frac{2 \beta^T \bX^T \bigY - \beta^T \bX^T \bX \beta}{2 \sigma^2}\right) \\
    &\propto
        \prod_{i=1}^m \exp\left(\frac{2 \beta_{J_i}^T \bX_{J_i}^T \bigY - \beta_{J_i}^T \Sigma_i \beta_{J_i}}{2 \sigma^2}\right),
\end{align*}
where above, we only include terms depending on $\bX$, since these are the only terms relevant to the later stages of the proof. A subtle but important observation in the calculation above is that we can verify that $\bX^T \bX = \diag{\Sigma_1, \dots, \Sigma_m}$ having only observed $D$ without observing $\bX$. Indeed, this follows because for conditional Gaussian MX knockoffs, $\tbX^T \tbX = \bX^T \bX$ and $\tbX^T \bX$ only differs from $\bX^T \bX$ on the main diagonal (just like in the fixed-X case). With this observation in mind, we now abuse notation slightly and let $p(\cdot \mid \cdot)$ denote an arbitrary conditional density under $P\bayes$. Observe that
\begin{align*}
        p(\bX \mid D)
    &\propto 
        p(\bX, \bigY \mid \{\bX_j, \tbX_j\}_{j=1}^p) \\
    &=
        p(\bX \mid \{\bX_j, \tbX_j\}_{j=1}^p)
        p(\bigY \mid \bX) \,\,\,\,\,\,\,\,\,\, \text{ since } \bigY \Perp \tbX \mid \bX \\
    &=
        \frac{1}{2^p} p(\bigY \mid \bX) \,\,\,\,\,\,\,\,\,\,\,\,\,\,\,\,\,\,\,\, \text{ by pairwise exchangeability} \\
    &\propto
        \int_{\beta} p(\beta) p(\bigY \mid \bX, \beta) d\beta  \\
    &\propto
        \int_{\beta} \prod_{i=1}^m p(\beta_{J_i}) \exp\left(\frac{- \beta_{J_i}^T \Sigma_i \beta_{J_i}}{2 \sigma^2} \right) \exp\left(\frac{\beta_{J_i}^T \bX_{J_i}^T \bigY}{\sigma^2} \right) d\beta  \\
    &\propto
    \int_{\beta_{J_1}} \dots, \int_{\beta_{J_m}}  \prod_{i=1}^m p(\beta_{J_i}) \exp\left(\frac{- \beta_{J_i}^T \Sigma_i \beta_{J_i}}{2 \sigma^2} \right) \exp\left(\frac{\beta_{J_i}^T \bX_{J_i}^T \bigY}{\sigma^2} \right) d\beta_{J_1} \, \beta_{J_2}  \dots \, d\beta_{J_m}.
\end{align*}
At this point, we can iteratively pull out parts of the product. In particular, define the following function:
\begin{equation*}
    q_i(\bX_{J_i}) \defeq \int_{\beta_{J_i}} p(\beta_{J_i}) \exp\left(\frac{- \beta_{J_i}^T \Sigma_i \beta_{J_i}}{2 \sigma^2} \right) \exp\left(\frac{\beta_{J_i}^T \bX_{J_i}^T \bigY}{\sigma^2} \right) d \beta_{J_i}.
\end{equation*}
Since $\bigY, \sigma^2$ and $\Sigma_i$ are fixed, $q_i(\bX_{J_i})$ is a deterministic function of $\bX_{J_i}$ that does not depend on $\beta_{-J_i}$. Therefore, we can iteratively integrate as below:
\begin{align*}
        p(\bX \mid D)
    &\propto
        \int_{\beta_{J_1}} \dots, \int_{\beta_{J_m}} \prod_{i=1}^m p(\beta_{J_i}) \exp\left(\frac{- \beta_{J_i}^T \Sigma_i \beta_{J_i}}{2 \sigma^2} \right) \exp\left(\frac{\beta_{J_i}^T \bX_{J_i}^T \bigY}{\sigma^2} \right) d\beta_{J_1} \, d\beta_{J_2}  \dots \, d\beta_{J_m} \\
    &=
        \int_{\beta_{J_1}} \dots \int_{\beta_{J_{m-1}}} \prod_{i=1}^{m-1} p(\beta_{J_i}) \exp\left(\frac{- \beta_{J_i}^T \Sigma_i \beta_{J_i}}{2 \sigma^2} \right) \exp\left(\frac{\beta_{J_i}^T \bX_{J_i}^T \bigY}{\sigma^2} \right) q_m(\bX_{j_m}) d\beta_{J_1} \, d\beta_{J_2} \dots \, d\beta_{J_{m-1}} \\
    &=
        \prod_{i=1}^m q_i(\bX_{J_i}).
\end{align*}
This shows that $\bX_{J_1}, \dots, \bX_{J_m} \mid D$ are jointly (conditionally) independent since their density factors, thus completing the proof. For fixed-X knockoffs, the proof is very similar as one can show that the density of $\bX^T \bigY \mid D$ factors into blocks.
\end{proof}

\subsection{Intuition for the local dependency condition and Figure \ref{fig::dependence}}\label{appendix::dependencediscussion}

In Figure \ref{fig::dependence}, we see that even when $\bX$ is very highly correlated, $\cov_{P\bayes}(\sign(\MLR\uppi) \mid D)$ looks similar to a diagonal matrix, indicating that the local dependency condition (\ref{eq::expcovdecay}) holds well empirically. The empirical result is striking and may be surprising initially, this section offers some explanation.

For the sake of intuition, suppose that we are fitting model-X knockoffs and using the Bayesian model $P\bayes$ from Example \ref{ex::sparsegam} with the original features. Suppose we observe that the masked data $D$ is equal to some fixed value $d = (\smally, \{\bx_j, \tbx_j\}_{j=1}^p)$. After observing $D = d$, Appendix \ref{appendix::gibbs} shows how to sample from the posterior distribution $\bX \mid D = d$ via the following Gibbs sampling approach:
\begin{itemize}
    \item For each $j \in [p]$, initialize $\beta_j^{(0)}$ and $\bX_j^{(0)}$ to some value.
    \item For $i = 1, \dots, n_\sample$:
    \begin{enumerate}
        \item Set $\beta^{(i)} = \beta^{(i-1)}$ and $\bX^{(i)} = \bX^{(i-1)}$.
        \item For $j \in [p]$:
            \begin{enumerate}[(a)]
                \item Resample $\bX_j^{(i)}$ from the law of $\bX_j \mid \bX_{-j} = \bX_{-j}^{(i)}, \beta_{-j} = \beta_{-j}^{(i)}, D = d$. It may be helpful to recall $\bX_j^{(i)} \in \{\bx_j, \tbx_j\}$ holds deterministically.
                \item Resample $\beta_j^{(i)}$ from the law of $\beta_j \mid \bX = \bX^{(i)}, \beta_{-j} = \beta_{-j}^{(i)}, D = d$.
            \end{enumerate}
        \end{enumerate}

    \item Return samples $\bX^{(1)}, \dots, \bX^{(n_\sample)}$.
\end{itemize}

Now, recall that $\MLR_j\uppi > 0$ if and only if the Gibbs sampler consistently chooses $\bX_j^{(i)} = \bx_j$ instead of $\bX_j^{(i)} = \tbx_j$. Thus, to analyze $\cov_{P\bayes}(\I(\MLR_j\uppi > 0), \I(\MLR_k\uppi > 0) \mid D)$ for some fixed $j \ne k$, we must ask the following question: does the value of $\bX_k^{(i)} \in \{\bx_k, \tbx_k\}$ strongly affect how we resample $\bX_j^{(i)}$?

To answer this question, the following key fact (reviewed in Appendix \ref{appendix::gibbs}) is useful. At iteration $i$, step 2(a), for any $j$, let $r_j = \smally - \bX_{-j}^{(i)} \beta_{-j}^{(i)}$ be the residuals excluding feature $j$ for the current value of $\bX_{-j}$ and $\beta_{-j}$ in the Gibbs sampler. Then, standard calculations show that $P\bayes(\bX_j = \bx_j \mid D=d, \bX_{-j}=\bX_{-j}^{(i)}, \beta_{-j}=\beta_{-j}^{(i)})$ only depends on $\bX_{-j}^{(i)}$ and $\beta_{-j}^{(i)}$ through the following inner products:

\begin{equation*}
    \alpha_j \defeq \bx_j^T r_j = \bx_j^T \smally - \sum_{\ell \ne j} \bx_j^T \bX_\ell^{(i)} \beta_\ell^{(i)}
\end{equation*}
\begin{equation*}
    \tilde{\alpha}_j \defeq \tbx_j^T r_j = \tbx_j^T \smally - \sum_{\ell \ne j} \tbx_j^T \bX_\ell^{(i)} \beta_\ell^{(i)}.
\end{equation*}
Thus, the question we must answer is: how does the choice of $\bX_k^{(i)} \in \{\bx_k, \tbx_k\}$ affect the value of $\alpha_j, \tilde{\alpha}_j$? Heuristically, the answer is ``not very much," since $\bX_k^{(i)}$ only appears above through inner products of the form $\bx_j^T \bX_k^{(i)}$ and $\tbx_j^T \bX_k^{(i)}$, and by definition of the knockoffs we know that $\bx_j^T \bx_k \approx \bx_j^T \tbx_k$ and $\tbx_j^T \bx_k \approx \tbx_j^T \tbx_k$. Indeed, for fixed-X knockoffs, we know that this actually holds exactly, and for model-X knockoffs, the law of large numbers should ensure that these approximations are very accurate.

The main way that the choice of $\bX_k^{(i)}$ can significantly influence the choice of $\bX_j^{(i)}$ is that the choice of $\bX_k^{(i)}$ may change the value of $\beta_k^{(i)}$. In general, we expect this effect to be rather small, since in many highly-correlated settings, $\bx_k$ and $\tbx_k$ are necessarily highly correlated and thus the choice of $\bX_k^{(i)} \in \{\bx_k, \tbx_k\}$ should not affect the choice of $\beta_k^{(i)}$ too much. That said, there are a few known pathological settings where the choice of $\bX_k^{(i)} \in \{\bx_k, \tbx_k\}$ does substantially change the estimated value of $\beta_k$ (\cite{altsign2017, mrcknock2022}), and in these settings, the coordinates of $\sign(\MLR\uppi)$ may be strongly conditionally dependent. The good news is that using MVR knockoffs instead of SDP knockoffs should ameliorate this problem (see \cite{mrcknock2022}).

Overall, we recognize that this explanation is purely heuristic and does not fully explain the results in Figure \ref{fig::dependence}. However, it may provide some intuitive insight. A more rigorous theoretical analysis of $\cov_{P\bayes}(\I(\MLR\uppi > 0) \mid D)$ would be interesting; however, we leave this to future work.

\section{Technical proofs}\label{appendix::techproofs}

\subsection{Key concentration results}

The proof of Theorem \ref{thm::avgopt} relies on the fact that the successive averages of the vector $\I(\sorted(W) > 0) \in \R^p$ converge uniformly to their conditional expectation given the masked data $D\upn$. In this section, we give a brief proof of this result, which is essentially an application of Theorem 1 from \cite{doukhan2007}. For convenience, we first restate a special case of this theorem (namely, the case where the random variables in question are bounded and we have bounds on pairwise correlations) before proving the corollary we use in Theorem \ref{thm::avgopt}.

\begin{theorem}[\cite{doukhan2007}]\label{thm::doukhan2007} Suppose that $X_1, \dots, X_n$ are mean-zero random variables taking values in $[-1,1]$ such that $\var(\bar X_n) \le C_0 n$ for a constant $C_0 > 0$. Let $L_1, L_2 < \infty$ be constants such that for any $i \le j$,
\begin{equation*}
    |\cov(X_i, X_j)| \le 4\varphi(j-i) 
\end{equation*}
where $\{\varphi(k)\}_{k \in \N}$ is a nonincreasing sequence satisfyng
\begin{equation*}
    \sum_{s=0}^{\infty} (s+1)^k \varphi(s) \le L_1 L_2^k k! \text{ for all } k \ge 0.
\end{equation*}
Then for all $t \in (0,1)$, there exists a universal constant $C_1 > 0 $ only depending on $C_0, L_1$ and $L_2$ such that
\begin{equation*}
    \P\left(\bar X_n \ge t \right) \le \exp\left(-\frac{t^2}{C_0 n + C_1 t^{7/4} n^{7/4}}\right) \le \exp\left(-C' t^{2 } n^{1/4}\right),
\end{equation*}
where $C'$ is a universal constant only depending on $C_0, L_1, L_2$.
\end{theorem}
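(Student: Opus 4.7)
The plan is to apply Theorem~1 of \cite{doukhan2007} essentially as a black box and then simplify to the advertised form. First I would verify the three standing hypotheses of the Doukhan--Neumann inequality: the $X_i$ take values in $[-1,1]$ (immediate), the variance bound $\mathrm{var}(\bar X_n) \le C_0 n$ is assumed directly, and the pairwise covariance control $|\mathrm{cov}(X_i, X_j)| \le 4\varphi(j-i)$ together with the factorial-type summability $\sum_{s \ge 0}(s+1)^k \varphi(s) \le L_1 L_2^k k!$ supplies exactly the weak-dependence moment input that drives their Bernstein-type argument. A small bit of translation is required here, since \cite{doukhan2007} state their inequality in terms of their own weak-dependence coefficients; I would note that under our pairwise covariance bound plus boundedness of the $X_i$ in $[-1,1]$, all higher joint centered moments are controlled by products of pairwise covariances via a standard combinatorial identity (Isserlis/diagram-style), and the factorial summability then yields the cumulant-type bound that the Doukhan--Neumann proof consumes.

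With the hypotheses verified, invoking their theorem gives the first inequality
\begin{equation*}
    \P(\bar X_n \ge t) \le \exp\!\left(-\frac{t^2}{C_0 n + C_1 t^{7/4} n^{7/4}}\right)
\end{equation*}
for a universal constant $C_1 > 0$ depending only on $C_0, L_1, L_2$. The second inequality then follows by elementary simplification: in the regime of $(t,n)$ that will be fed into this bound via Corollary~\ref{cor::expdecay} inside the proof of Theorem~\ref{thm::avgopt}, one can control the denominator by a single power of $n$ and absorb all the multiplicative constants into one universal $C' > 0$ depending only on $C_0, L_1, L_2$. I would simply chase the algebra to verify that the exponent is bounded above by $-C' t^2 n^{1/4}$ and choose $C'$ small enough uniformly in the range of interest.

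The main obstacle here is not probabilistic but purely notational, namely translating between the weak-dependence language used by \cite{doukhan2007} and our streamlined covariance-decay hypothesis, and then keeping track of which constants depend on what. Once the translation is executed and the constants are collapsed, the claim reduces to invocation plus algebraic bookkeeping. The only care required is to ensure that the constants $L_1, L_2, C_0$ genuinely only depend on the ingredients listed in the hypothesis, so that the resulting $C'$ can be treated as universal by downstream users of the bound (in particular by Corollary~\ref{cor::expdecay}, where the bound is applied to successive partial averages after a fixed permutation and $C'$ must not depend on the permutation).
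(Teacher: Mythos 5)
The paper offers no proof of this statement at all: it is presented verbatim as a restated special case of Theorem~1 of \cite{doukhan2007}, so your decision to treat that theorem as a black box matches the paper's own treatment exactly. The issue is with the one substantive step you do supply, namely the ``translation'' from the pairwise covariance hypothesis to the hypotheses that Doukhan and Neumann actually require. Their Theorem~1 assumes a bound of the form $|\cov(X_{s_1}\cdots X_{s_u},\, X_{t_1}\cdots X_{t_v})| \le K^2 M^{u+v-2}((u+v)!)^{\mu}\rho(t_1 - s_u)$ for \emph{all} tuples, i.e., decay of covariances of \emph{products} of the variables, and this is the input their Bernstein-type cumulant expansion consumes. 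Your claim that ``all higher joint centered moments are controlled by products of pairwise covariances via a standard combinatorial identity (Isserlis/diagram-style)'' is false for the variables at hand: the Isserlis/Wick identity holds only for jointly Gaussian vectors, and for general bounded (here, sign-indicator) variables the higher-order joint cumulants are not determined by, nor bounded by, the pairwise covariances. Boundedness in $[-1,1]$ gives you $|\cov(\text{products})| \le 4$ but supplies no decay in $t_1 - s_u$, which is exactly what the theorem needs. So as written, your hypothesis-verification step does not go through; the product-covariance decay must either be assumed directly (which is presumably what the authors intend by calling this a ``special case'') or verified in the concrete application, e.g., via the $M$-dependence structure established in Proposition~\ref{prop::unhappy}, where conditional independence across blocks controls all joint moments, not just pairwise ones.

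A secondary, smaller point: your passage from the first displayed inequality to the second should not be deferred to ``the regime fed in by Corollary~\ref{cor::expdecay}.'' The reduction is a self-contained piece of algebra (split into the cases $t^{7/4} n^{7/4} \ge n$ and $< n$, bound the denominator by a constant times the larger term, and use $t^{1/4} \ge t^2$ for $t \in (0,1)$), and it is worth writing out because, taken literally with $\bar X_n$ a sample mean in the numerator, the displayed first bound does not imply the second; one has to work with the sum $S_n = n\bar X_n$ as Doukhan and Neumann do for the exponent $t^2 n^{1/4}$ to emerge. Your instinct to ``chase the algebra'' is right, but the chase is where the $n^{1/4}$ actually comes from, so it cannot be waved through.
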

If we take $\varphi(s) = c \rho^s$, this yields the following corollary.

\begin{corollary}\label{cor::expdecay} Suppose that $X_1, \dots, X_n$ are mean-zero random variables taking values in $[-1,1]$. Suppose that for some $C \ge 0, \rho \in (0,1)$, the sequence satisfies
\begin{equation}\label{eq::expdecay}
    |\cov(X_i, X_j)| \le C \rho^{|i-j|}.
\end{equation}
Then there exists a universal constant $C'$ depending only on $C$ and $\rho$ such that
\begin{equation}\label{eq::expconc}
    \P(\bar X_n \ge t) \le \exp\left(-C' t^{2 } n^{1/4}\right).
\end{equation}
Furthermore, let $\pi : [n] \to [n]$ be any permutation. For $k \le n$, define $\bar X_k^{(\pi)} \defeq \frac{1}{k} \sum_{i=1}^k X_{\pi(i)}$ to be the sample mean of the first $k$ random variables after permuting $(X_1, \dots, X_n)$ according to $\pi$. Then for any $n_0 \in \N, t \ge 0$, 
\begin{equation}\label{eq::permconc}
    \sup_{\pi \in S_n} \P\left(\max_{n_0 \le i \le n} |\bar X_k^{(\pi)}| \ge t \right) \le n \exp(-C' t^{2 } n_0^{1/4}). 
\end{equation}
where $S_n$ is the symmetric group.

\begin{proof} The proof of Equation (\ref{eq::expconc}) follows an observation of \cite{doukhan2007}, where we note $\varphi(s) = C \exp(-a s)$ for $a = - \log(\rho)$. Then
\begin{equation*}
\sum_{s=0}^{\infty} (s+1)^k \exp(-as) \le \sum_{s=0}^{\infty} \prod_{i=1}^k (s+i) \exp(-as) = \frac{d^k}{dp^k} \left(\frac{1}{1-p}\right) \bigg|_{p=\exp(-a)} = k! \frac{1}{(1-\exp(-a))^k}.
\end{equation*}
As a result, $\sum_{s=0}^{\infty} (s+1)^k \varphi(s) \le C \left(\frac{1}{(1-\exp(-a))}\right)^k k!$, so we take $L_1 = \frac{1}{(1-\exp(-a))}$ and $L_2 = C$. Lastly, we observe that another geometric series argument yields
\begin{equation*}
    \var(\bar X_n) = \sum_{i=1}^n \sum_{j=1}^n \cov(X_i, X_j) \le \sum_{i=1}^n C \sum_{j=1}^n \rho^{|i-j|} \le n C \frac{2}{1-\rho}.
\end{equation*}
Thus, we take $C_0 = \frac{2C}{1-\rho}$ and apply Theorem \ref{thm::doukhan2007}, which yields the first result. To prove Equation (\ref{eq::permconc}), the main idea is that we can apply Equation (\ref{eq::expconc}) to each sample mean $|\bar X_k^{(\pi)}|$, at which point the Equation (\ref{eq::permconc}) follows from a union bound.

To prove this, note that if we rearrange $(X_{\pi(1)}, \dots, X_{\pi(k)})$ into their ``original order," then these variables satisfy the condition in Equation (\ref{eq::expdecay}). Formally, let $A = \{\pi(1), \dots,\pi(k)\}$ and let $\nu : A \to A$ be the permutation such that $\nu(\pi(i)) > \nu(\pi(j))$ if and only if $i > j$, for $i,j \in [k]$. Then define $Y_i = X_{\nu(\pi(i))}$ for $i \in [k]$, and note that
\begin{equation*}
    |\cov(Y_i, Y_j)| = |\cov(X_{\nu(\pi(i))}, X_{\nu(\pi(j))})| \le C \rho^{|\nu(\pi(i)) - \nu(\pi(j))|} \le C \rho^{|i-j|},
\end{equation*}
where in the last step, $|i-j| \le |\nu(\pi(i)) - \nu(\pi(j))|$ follows by construction of $\nu$. Since $\bar Y_k = \bar X_k^{(\pi)}$ by construction, this means we may apply Equation \eqref{eq::expconc} to $\bar X_k^{(\pi)}$ for each $k$.

Thus, by Equation (\ref{eq::expconc}), for any $\pi \in S_n$,
\begin{equation*}
    \P\left(\max_{n_0 \le k \le n} |\bar X_k^{(\pi)}| \ge t \right) \le \sum_{k=n_0}^n \P(|\bar X_k^{(\pi)}| \ge t) \le \sum_{k=n_0}^n \exp(-C' t^{2 } k^{1/4}) \le n \exp(-C' t^{2 } n_0^{1/4}).
\end{equation*}
This completes the proof.
\end{proof}
\end{corollary}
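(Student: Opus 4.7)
The statement has two parts, and both are essentially bookkeeping on top of the cited Doukhan–Neumann result (Theorem \ref{thm::doukhan2007}). For Equation (\ref{eq::expconc}), my plan is to verify the hypotheses of Theorem \ref{thm::doukhan2007} with the choice $\varphi(s) = (C/4)\rho^s$. The boundedness and mean-zero assumptions are given. The variance bound $\var(\bar X_n) \le C_0/n$ (or equivalently $\var(\sum_{i=1}^n X_i) \le C_0 n$) follows from a one-line geometric series calculation: $\sum_{i,j} |\cov(X_i,X_j)| \le n \cdot C \cdot \sum_{s \in \mathbb{Z}} \rho^{|s|} = n \cdot \tfrac{2C}{1-\rho}$. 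The nontrivial step is checking the combinatorial moment condition $\sum_{s \ge 0}(s+1)^k \varphi(s) \le L_1 L_2^k k!$. For this I would differentiate the generating function $\sum_{s \ge 0}\rho^s = 1/(1-\rho)$ $k$ times to obtain $\sum_{s \ge 0}(s+1)(s+2)\cdots(s+k)\rho^s = k!/(1-\rho)^{k+1}$, and then use the elementary bound $(s+1)^k \le (s+1)(s+2)\cdots(s+k)$ to conclude the desired inequality with $L_1 = C/(4(1-\rho))$ and $L_2 = 1/(1-\rho)$. Plugging into Theorem \ref{thm::doukhan2007} yields (\ref{eq::expconc}).

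For Equation (\ref{eq::permconc}), my plan rests on the observation that a permuted partial sum $\sum_{i=1}^k X_{\pi(i)}$ is invariant under reordering its summands, so we can freely put the indices back into their original order. Concretely, let $A = \{\pi(1),\ldots,\pi(k)\}$ and let $a_1 < a_2 < \cdots < a_k$ be its elements in increasing order; then $\bar X_k^{(\pi)} = \tfrac{1}{k}\sum_{j=1}^k X_{a_j}$. Setting $Y_j \defeq X_{a_j}$, we have $|\cov(Y_i, Y_j)| \le C\rho^{|a_i - a_j|} \le C\rho^{|i-j|}$ because the $a_i$'s are distinct integers arranged in increasing order, so $|a_i - a_j| \ge |i-j|$ and $\rho < 1$. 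Thus the sequence $(Y_j)$ satisfies the hypothesis of part 1, and so Equation (\ref{eq::expconc}) applies to $\bar Y_k = \bar X_k^{(\pi)}$ for each individual $k$.

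With concentration in hand for each $k$, a straightforward union bound over $k \in \{n_0, \ldots, n\}$ completes the proof:
\[
\P\Big(\max_{n_0 \le k \le n} |\bar X_k^{(\pi)}| \ge t \Big) \le \sum_{k=n_0}^n \exp(-C' t^2 k^{1/4}) \le n \exp(-C' t^2 n_0^{1/4}),
\]
and this bound holds uniformly over $\pi \in S_n$ since the constants depend only on $C, \rho$. The main (mild) obstacle I anticipate is purely computational — verifying the combinatorial moment condition in part 1 — and the only conceptual subtlety in part 2 is recognizing that reordering the summands back to the natural integer order can only tighten the covariance decay (rather than loosen it), so applying part 1 to the reordered sequence is valid.
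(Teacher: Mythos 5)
Your proposal is correct and follows essentially the same route as the paper's proof: verifying the Doukhan--Neumann moment condition by differentiating the geometric series, bounding the variance of the partial sum by a geometric series, and then handling the permuted partial sums by sorting the selected indices back into increasing order (noting this only tightens the covariance decay) before applying a union bound over $k$.
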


\subsection{Bounds on the expected number of false discoveries}\label{subsec::pferbound}

The proof of Theorem \ref{thm::avgopt} relied on the fact that $\lim_{n \to \infty} \Power_q(w_n)$ is finite whenever it exists. This is a consequence of the lemma below. The lemma below also proves a second moment bound which is needed when making a uniform integrability argument in Step 3 of the proof of Theorem \ref{thm::amlr_power_technical}.

\begin{lemma}\label{lem::pferbound} Fix any $q \in (0,1)$. Then there exist universal constants $C(q), C^{(2)}(q) \in \R$ such that for any Bayesian model $P\bayes$ and any valid knockoff statistic $W = w([\bX, \tbX], \bigY)$ with discovery set $S \subset [p]$:
\begin{enumerate}
    \item $\Power_q(w) \le C(q)$ where $C(q)$ is a finite constant depending only on $q$.
    \item $\E_{P\bayes}[|S|^2] \le C^{(2)}(q) \E[|\mcH_1(\theta\opt)|^2]$.
\end{enumerate}
Note that above, $\mcH_1(\theta\opt)$ denotes the random set of non-nulls under $P\bayes$.
\begin{proof} Recall that $P\bayes$ denotes the joint law of $(\bX, \bigY, \theta\opt)$. Throughout the proof, all expectations and probabilities are taken over $P\bayes$. Our strategy is to condition on the nuisance parameters $\theta\opt$. In particular, let $M(\theta\opt) = |\mcH_1(\theta\opt)|$ denote the number of non-nulls. To show the first result, it suffices to show
\begin{equation}\label{eq::pferbound}
    \E\left[|S| \mid \theta\opt \right] \le C(q) M(\theta\opt).
\end{equation}
Proving Equation (\ref{eq::pferbound}) proves the first result because it implies by the tower property that $\E[|S|] \le C(q) \E[M(\theta\opt)]$, and therefore $\Power_q(w) = \frac{\E[|S|]}{\E[M(\theta\opt)]} \le C(q)$.
For the second result, by the tower property it also suffices to show that
\begin{equation}\label{eq::secondmoment_bound}
    \E[|S|^2 \mid \theta\opt] \le C^{(2)}(q) M(\theta\opt)^2.
\end{equation}
The rest of the proof proceeds conditionally on $\theta\opt$, so we are essentially in the fully frequentist setting. Thus, for the rest of the proof, we will abbreviate $M(\theta\opt)$ as $M$. We will also assume the ``worst-case" values for the non-null coordinates of $W_j$: in particular, let $W'$ denote $W$ but with all of the non-null coordinates replaced with the value $\infty$, and let $S' \subset [p]$ be the discovery set made when applying SeqStep to $W'$. These are the ``worst-case" values in the sense that $|S'| \ge |S|$ deterministically (see \cite{mrcknock2022}, Lemma B.4), so it suffices to show that $\E[|S'|] \le C(q) M$ and $\E[|S'|^2] \le C(q) M$. 

As notation, let $U = \I(\sorted(W') > 0)$ denote the signs of $W'$ when sorted in descending order of absolute value. Following the notation in Equation (\ref{eq::psidef}), let $\psi(U) = \max \left\{k : \frac{k - k \bar U_k + 1}{k \bar U_k} \le q\right\}$, where $\bar U_k = \frac{1}{k} \sum_{i=1}^{\min(k,p)} U_i$. This ensures that $|S'| = \ceil{\frac{\psi(U)+1}{1+q}} \le \psi(U)$ is the number of discoveries made by knockoffs (\cite{mrcknock2022}, Lemma B.3). To prove the first result, it thus suffices to show $\E[\psi(U)] \le C(q) M$. To do this, let $K = \ceil{\frac{M+1}{1+q}}$ and fix any integer $c \in \N$ (we will pick a specific value for $c$ later). Observe that
\begin{align}
        \E[\psi(U)] 
    &\le 
        cK \P(\psi(U) \le c K) + \sum_{k=cK}^{\infty} k \P(\psi(U) = k) \\
    &\le
        cK + \sum_{k=cK}^{\infty} k \P\left(\Bin(k-M, 1/2) \ge \ceil{\frac{k+1}{1+q}}-M\right). \label{eq::bound_moments_of_psi}
\end{align}
where the second line follows because (i) the event $\psi(U) = k$ implies that at least $\ceil{\frac{k+1}{1+q}}$ of the first $k$ coordinates of $U$ are positive and (ii) the knockoff flip-sign property guarantees that conditional on $\theta\opt$, the null coordinates of $U_j$ are i.i.d. random signs conditional on the values of the non-null coordinates of $U$.\footnote{Without loss of generality we may assume that the absolute values of $W'$ are nonzero with probability one, since again, this only increases the number of discoveries made by knockoffs.} Thus, doing simple arithmetic, in the first $k$ coordinates of $U$, there are $k-M$ null i.i.d. signs, of which at least $\ceil{\frac{k+1}{1+q}} - M$ must be positive, yielding the expression above with the Binomial probability.

We now apply Hoeffding's inequality. To do so, we must ensure $\ceil{\frac{k+1}{1+q}} - M$ is larger than the mean of a $\Bin(k-M, 1/2)$ random variable. It turns out that it suffices to pick the value of $c$ to satisfy $c > \left(\frac{1}{1+q} - \frac{1}{2}\right)^{-1}$. To see why, fix any $k \ge cK$, so we may write $k = cK + \ell$ for some $\ell \ge 0$. Then for all such $k$, we have
\begin{align*}
        \frac{k+1}{1+q} - M - \frac{k-M}{2}
    &\ge 
        k \left(\frac{1}{1+q} - \frac{1}{2} \right) - \frac{M}{2} \\
    &= 
        (c K + \ell) \left(\frac{1}{1+q} - \frac{1}{2} \right) - \frac{M}{2} & \text{ since } k = cK + \ell \\
    &\ge 
        \frac{2 K - M}{2} + \ell \left(\frac{1}{1+q} - \frac{1}{2} \right) & \text{ since } c \ge \left(\frac{1}{1+q} - \frac{1}{2} \right)^{-1} \\
    &\ge 
        \ell \left(\frac{1}{1+q} - \frac{1}{2} \right) & \text{ since } K \ge \frac{M}{1+q} \ge \frac{M}{2} \text{ by definition.}
\end{align*}
Thus, we may apply Hoeffding's inequality for $k \ge cK$. Indeed, for any $\ell \ge 0$, the previous result  yields that for $k \ge cK$:
\begin{align*}
        \P\left(\Bin(k-M, 1/2) \ge \ceil{\frac{k+1}{1+q}}-M\right) 
    &\le
       \P\left(\Bin(k-M, 1/2) - \frac{k-M}{2} \ge (k - cK) \left(\frac{1}{1+q} - \frac{1}{2} \right)  \right) \\
    &\le 
        \exp\left(-2 (k - cK)^2 \left(\frac{1}{1+q} - \frac{1}{2} \right)^2 \right).
\end{align*}
As notation, set $\alpha_q = \frac{1}{1+q} - \frac{1}{2}$. Combining the previous equation with Eq. (\ref{eq::bound_moments_of_psi}), we obtain
\begin{align*}
        \E[\psi(U)] 
    &\le
        cK + \sum_{k=cK}^{\infty} k \exp\left(- 2 (k - cK)^2 \alpha_q^2 \right) \\
    &=
        cK + \sum_{\ell=0}^{\infty} (\ell + cK) \exp\left(-2\ell^2 \alpha_q^2\right) \\
    &=
        cK + cK \sum_{\ell=0}^{\infty} \exp(-2 \ell^2 \alpha_q^2) + \sum_{\ell=0}^{\infty} \ell \exp(-2 \ell^2 \alpha_q^2).
\end{align*}
Note that the sums $\sum_{\ell=0}^{\infty} \ell \exp(-2\ell^2\alpha_q^2)$ and $\sum_{\ell=0}^{\infty} \exp(-2\ell^2 \alpha_q^2)$ are both convergent. As a result, $\E[\psi(U)]$ is bounded by a constant multiple of $cK \sim \frac{c}{1+q} M$, where the constant depends on $q$ but nothing else. Since $\psi(U) \ge |S'| \ge |S|$ as previously argued, this completes the proof.

For the second statement, we note that by the same argument as above, we have that
\begin{align*}
        \E[\psi(U)^2] 
    &\le 
        (cK)^2 + \sum_{k=cK}^{\infty} k^2 \P\left(\Bin(k-M, 1/2) \ge \ceil{\frac{k+1}{1+q}}-M\right) \\
    &\le 
        (cK)^2 + \sum_{\ell=0}^{\infty} (\ell + cK)^2 \exp\left(-2\ell^2 \alpha_q^2\right) \\
    &=
        (cK)^2 \left(1 + \sum_{\ell=0}^{\infty} \exp\left(-2\ell^2 \alpha_q^2\right)\right) + cK \sum_{\ell=0}^{\infty} \ell \exp\left(-2\ell^2 \alpha_q^2\right) + \sum_{\ell=0}^{\infty} \ell^2 \exp\left(-2\ell^2 \alpha_q^2\right). 
\end{align*}
Once again, the three series above are convergent and the value they converge to depends only on $q$. Since $cK \sim M$ asymptotically, this implies that there exists some constant $C^{(2)}(q)$ such that $\E[\psi(U)^2] \le C^{(2)}(q) M^2$. This completes the proof since $\psi(U)^2 \ge |S|^2$.
\end{proof}
\end{lemma}

\section{Additional comparison to prior work}\label{appendix::priorwork}

\subsection{Comparison to the unmasked likelihood ratio}\label{appendix::unmasked_lr}

In this section, we compare MLR statistics to the earlier \textit{unmasked} likelihood statistic introduced by \cite{katsevichmx2020}, which this work builds upon. The upshot is that unmasked likelihood statistics give the most powerful ``binary $p$-values,'' as shown by \cite{katsevichmx2020}, but do not yield jointly valid knockoff feature statistics in the sense required for the FDR control proof in \cite{fxknock} and \cite{mxknockoffs2018}.

In particular, we call a statistic $T_j([\bX, \tbX], \bigY)$ a \textit{marginally symmetric knockoff statistic} if $T_j$ satisfies $T_j([\bX, \tbX]_{\swap{j}}, \bigY) = - T_j([\bX, \tbX], \bigY)$. Under the null, $T_j$ is marginally symmetric, so the quantity $k_j = \frac{1}{2} + \frac{1}{2} \I(T_j \le 0)$ is a valid \textit{``binary $p$-value"} which only takes values in $\{1/2, 1\}$. Theorem 5 of  \cite{katsevichmx2020} shows that for any marginally symmetric knockoff statistic, $P\opt(k_j = 1/2) = P\opt(T_j > 0)$ is maximized if $T_j > 0 \Leftrightarrow p\opt_{\bigY \mid \bX}(\bigY \mid [\bX_j, \bX_{-j}]) > p\opt_{\bigY \mid \bX}(\bigY \mid [\tbX_j, \bX_{-j}])$, where $p\opt_{\bigY \mid \bX}$ denotes the density of $\bigY \mid \bX$ under the true law $P\opt$ of the data. As such, one might initially hope to use the unmasked likelihood ratio as a knockoff statistic:
\begin{equation*}
    W_j^{\mathrm{unmasked}} = \log\left(\frac{p\opt_{\bigY \mid \bX}(\bigY \mid [\bX_j, \bX_{-j}])}{p\opt_{\bigY \mid \bX}(\bigY \mid [\tbX_j, \bX_{-j}])}\right).
\end{equation*}
However, a marginally symmetric knockoff statistic is not necessarily a valid knockoff feature statistic, which must satisfy the following stronger property \citep{fxknock, mxknockoffs2018}:
\begin{equation*}
    W_j([\bX, \tbX]_{\swap{J}}, \bigY) = \begin{cases}
       W_j([\bX, \tbX], \bigY) & j \not \in J \\
       - W_j([\bX, \tbX], \bigY) & j \in J,
    \end{cases}
\end{equation*}
for any $J \subset [p]$. This flip-sign property guarantees that the signs of the null coordinates of $W$ are \textit{jointly} i.i.d. and symmetric. However, the unmasked likelihood statistic does not satisfy this property, as changing the observed value of $\bX_i$ for $i \ne j$ will typically change the value of the likelihood $p\opt_{\bigY \mid \bX}(\bigY \mid [\bX_j, \bX_{-j}])$.

\subsection{Comparison to the adaptive knockoff filter}\label{appendix::adaknock}

In this section, we compare our methodological contribution, MLR statistics, to the adaptive knockoff filter described in \cite{RenAdaptiveKnockoffs2020}, namely their approach based on Bayesian modeling. The main point is that although MLR statistics and the procedure from \cite{RenAdaptiveKnockoffs2020} have some intuitive similarities, the procedures are different and in fact complementary, since one could use the Bayesian adaptive knockoff filter from \cite{RenAdaptiveKnockoffs2020} in combination with MLR statistics.

As review, recall from Section \ref{subsec::mlr} that valid knockoff feature statistics $W$ as initially defined by \cite{fxknock, mxknockoffs2018} must ensure that $|W|$ is a function of the masked data $D$, and thus $|W|$ cannot explicitly depend on $\sign(W)$. (It is also important to remember that $|W|$ determines the order and ``prioritization" of the SeqStep hypothesis testing procedure.) The key innovation of \cite{RenAdaptiveKnockoffs2020} is to relax this restriction: in particular, they define a procedure where the analyst sequentially reveals the signs of $\sign(W)$ in reverse order of their prioritization, and after each sign is revealed, the analyst may arbitrarily reorder the remaining hypotheses. The advantage of this approach is that revealing the sign of (e.g.) $W_1$ may reveal information that can be used to more accurately prioritize the hypotheses while still guaranteeing provable FDR control.

This raises the question: how should the analyst reorder the hypotheses after each coordinate of $\sign(W)$ is revealed? One proposal from \cite{RenAdaptiveKnockoffs2020} is to introduce an auxiliary Bayesian model for the relationship between $\sign(W)$ and $|W_j|$ (the authors also discuss the use of additional side information, although for brevity we do not discuss this here). For example, \cite{RenAdaptiveKnockoffs2020} suggest using a two-groups model where
\begin{equation}\label{eq::adaknockmodel}
    H_j \simind \Bern(k_j) \text{ and } W_j \mid H_j \sim \begin{cases} \mcP_1(W_j) & H_j = 1 \\ \mcP_0(W_j) & H_j = 0. \end{cases}
\end{equation}
Above, $H_j$ is the indicator of whether the $j$th hypothesis is non-null, and $\mcP_1$ and $\mcP_0$ are (e.g.) unknown parametric distributions that the analyst fits as they observe $\sign(W)$. With this notation, the proposal from \cite{RenAdaptiveKnockoffs2020} can be roughly summarized as follows:
\begin{enumerate}
    \item Fit an initial feature statistic $W$, such as an LCD statistic, and observe $|W|$.
    \item Fit an initial version of the model in Equation (\ref{eq::adaknockmodel}) and use it to compute $\gamma_j \defeq \P(W_j > 0, H_j = 1 \mid |W_j|)$.
    \item Observe $\sign(W_j)$ for $j = \argmin_j \{\gamma_j : \sign(W_j) \text{ has not yet been observed }\}$.
    \item Using $\sign(W_j)$, update the model in Equation (\ref{eq::adaknockmodel}), update $\{\gamma_j\}_{j=1}^p$, and return to Step 3.
    \item Terminate when all of $\sign(W)$ has been revealed, at which point $\sign(W)$ is passed to SeqStep in the reverse of the order that the signs were revealed.
\end{enumerate}

Note that in Step 3, the reason \cite{RenAdaptiveKnockoffs2020} choose $j$ to be the index minimizing $\P(W_j > 0, H_j = 1 \mid |W_j|)$ is that in Step 5, SeqStep observes $\sign(W)$ in reverse order of the analyst. Thus, the analyst should observe the least important hypotheses first so that SeqStep can observe the most important hypotheses first.

The main similarity between this procedure and MLR statistics is that both procedures, roughly speaking, attempt to prioritize the hypotheses according to $\P(W_j > 0)$, although we condition on the full masked data to maximize power. That said, there are two important differences. First, we and \cite{RenAdaptiveKnockoffs2020} both use an auxiliary Bayesian model---however, we take probabilities over a Bayesian model of the full dataset, whereas \cite{RenAdaptiveKnockoffs2020} only fit a working model of the law of $W$. Using the full data as opposed to only the statistics $W$ should lead to much higher power---for example, if $W$ are poor feature statistics which do not contain much relevant information about the dataset, the procedure from \cite{RenAdaptiveKnockoffs2020} will have low power. Thus, despite their initial similarity, these procedures are quite different.

The second and more important difference is that the procedure above is not a feature statistic. Rather, it is an extension of SeqStep that wraps on top of any initial feature statistic. This ``adaptive knockoffs" procedure augments the power of any feature statistic, although if the initial feature statistic $W$ has many negative signs to begin with or its absolute values $|W|$ are truly uninformative of its signs, the procedure may still be powerless. Since MLR statistics have provable optimality guarantees---namely, they maximize $P\bayes(W_j > 0 \mid D)$ and make $|W_j|$ a monotone function of $P\bayes(W_j > 0 \mid D)$---one might expect that using MLR statistics in place of a lasso statistic could improve the power of the adaptive knockoff filter. Similarly, using the adaptive knockoff filter in combination with MLR statistics could be more powerful than using MLR statistics alone.

\section{Gibbs sampling for MLR statistics}\label{appendix::gibbs}

\subsection{Proof of Eq. (\ref{eq::resamplexj})}\label{appendix::gibbs_sampler_result}

\begin{lemma}\label{lem::gibbs_sampler_result}
Fix any constants $\bx, \tbx \in \R^{n \times p}, \smally \in \R^n$, and $\theta \in \Theta$, and define $\bd = (\mathbf{y}, \{\bx_j, \tbx_j\}_{j=1}^p)$. Then
\begin{align*}
        \frac{P\bayes(\bX_j = \bx_j \mid \bX_{-j} = \bx_{-j}, \theta\opt = \theta, D = \bd)}{P\bayes(\bX_j = \tbx_j \mid \bX_{-j} = \bx_{-j}, \theta\opt = \theta, D = \bd)} 
    &=
        \frac{P_{\bigY \mid \bX}\uptheta(\smally \mid \bX_j = \bx_j, \bX_{-j} = \bx_{-j})}{P_{\bigY \mid \bX}\uptheta(\smally \mid \bX_j = \tbx_j, \bX_{-j} = \bx_{-j})}.%\label{eq::resamplexj}.
\end{align*}
as long as $(\bx, \tbx, \smally, \theta)$ lies in the support of $(\bX, \tbX, \bigY, \theta\opt)$ under $P\bayes$.
\begin{proof} Throughout this proof, we abuse notation and let probabilities of the form (e.g.) $P\uppi(\bigY = \smally, \bX = \bx)$ denote the density of this event with respect to the base measure of $P\uppi$. 
The definition of conditional probability yields
\begin{align*}
    \frac{P\bayes(\bX_j = \bx_j \mid \bX_{-j} = \bx_{-j}, \theta\opt = \theta, D = \bd)}{P\bayes(\bX_j = \tbx_j \mid \bX_{-j} = \bx_{-j}, \theta\opt = \theta, D = \bd)} 
    = 
    \frac{P\bayes(\bX_j = \bx_j, \bX_{-j} = \bx_{-j}, \theta\opt = \theta, D = \bd)}{P\bayes(\bX_j = \tbx_j, \bX_{-j} = \bx_{-j}, \theta\opt = \theta, D = \bd)}.
    %\frac{P\bayes(\bX_j = \bx_j, \bX_{-j} = \bx_{-j}, \theta\opt = \theta)}{P\bayes(\bX_j = \tbx_j, \bX_{-j} = \bx_{-j}, \tbX_{-j} = \tbx_{-j} \theta\opt = \theta)} 
\end{align*}
By definition of $D = \bd$, the event in the numerator is equivalent to the event $[\bX, \tbX] = [\bx, \tbx], \bigY = \smally, \theta\opt=\theta$ and the event in the denominator is equivalent to $[\bX, \tbX] = [\bx, \tbx]_{\mathrm{swap}(j)}, \bigY = \smally, \theta\opt=\theta$. Plugging this in yields
\begin{align*}
    =& 
    \frac{P\bayes([\bX, \tbX] = [\bx, \tbx], \theta\opt = \theta, \bigY = \smally)}{P\bayes([\bX, \tbX] = [\bx, \tbx]_{\mathrm{swap}(j)}, \theta\opt = \theta, \bigY = \smally)} \\
    =&
    \frac{\pi(\theta) P\bayes([\bX, \tbX] = [\bx, \tbx] \mid \theta\opt = \theta) P\bayes(\bigY = \smally \mid \theta\opt = \theta, [\bX, \tbX] = [\bx, \tbx])}{\pi(\theta) P\bayes([\bX, \tbX] = [\bx, \tbx]_{\mathrm{swap}(j)} \mid \theta\opt = \theta) P\bayes(\bigY = \smally \mid \theta\opt = \theta, [\bX, \tbX] = [\bx, \tbx]_{\mathrm{swap}(j)})}.
    %\frac{P\bayes(\bX_j = \bx_j, \bX_{-j} = \bx_{-j}, \theta\opt = \theta)}{P\bayes(\bX_j = \tbx_j, \bX_{-j} = \bx_{-j}, \tbX_{-j} = \tbx_{-j} \theta\opt = \theta)} 
\end{align*}
where the second line uses the chain rule of conditional probability and the fact that $\theta\opt$ has marginal density $\pi$ under $P\bayes$. Recall that by definition of $P\bayes$ (see Section \ref{subsec::problemstatement}), the law the data given $\theta\opt = \theta$ is simply $P\uptheta$. Furthermore, for all $\theta \in \Theta$, under $P\uptheta$, $[\bX, \tbX]$ are assumed to be pairwise exchangeable because they are valid knockoffs (see footnote 3). Therefore, cancelling terms, we conclude
\begin{align*}
    =&
    \frac{P\uptheta(\bigY = \smally \mid [\bX, \tbX] = [\bx, \tbx])}{P\uptheta(\bigY = \smally \mid [\bX, \tbX] = [\bx, \tbx]_{\mathrm{swap}(j)})} \\
    &=
    \frac{P_{\bigY \mid \bX}\uptheta(\smally \mid \bX_j = \bx_j, \bX_{-j} = \bx_{-j})}{P_{\bigY \mid \bX}\uptheta(\smally \mid \bX_j = \tbx_j, \bX_{-j} = \bx_{-j})}.%\label{eq::resamplexj}.
\end{align*}
where the last line follows since
$\bigY \Perp \tbX \mid \bX$, as $\tbX$ are valid knockoffs by assumption under $P\uptheta$.
\end{proof}
\end{lemma}

\subsection{Derivation of Gibbs sampling updates}\label{subsec::gibbs}

In this section, we derive the Gibbs sampling updates for the class of MLR statistics defined in Section \ref{subsec::gams}. First, for convenience, we restate the model and choice of $\pi$. 

\subsubsection{Model and prior}

First, we consider the model-X case. For each $j \in [p]$, let $\phi_j(\bX_j) \in \R^{n \times k_j}$ denote any vector of prespecified basis functions applied to $\bX_j$. We assume the following additive model:
\begin{equation*}
    \bigY \mid \bX, \beta, \sigma^2 \sim \mcN\left(\sum_{j=1}^p \phi_j(\bX_j) \beta\upj, \sigma^2 I_n \right)
\end{equation*}
with the following prior on $\beta\upj \in \R^{k_j}$:
\begin{equation*}
    \beta\upj \simind \begin{cases}
        0 \in \R^{k_j} & \text{w.p. } p_0 \\ 
        \mcN(0, \tau^2 I_{k_j}) & \text{w.p. } 1-p_0.
    \end{cases}
\end{equation*}
with the usual hyperpriors
\begin{equation*}
     \tau^2 \sim \invGamma(a_{\tau}, b_{\tau}), \sigma^2 \sim \invGamma(a_{\sigma}, b_{\sigma}) \text{ and } p_0 \sim \Beta(a_0, b_0).
\end{equation*} 

This is effectively a \textit{group} spike-and-slab prior on $\beta\upj$ which ensures group sparsity of $\beta\upj$, meaning that either the whole vector equals zero or the whole vector is nonzero. We use this group spike-and-slab prior for two reasons. First, it reflects the intuition that $\phi_j$ is meant to represent only a single feature and thus $\beta\upj$ will likely be entirely sparse (if $\bX_j$ is truly null) or entirely non-sparse. Second, and more importantly, the group sparsity will substantially improve computational efficiency in the Gibbs sampler. 

Lastly, for the fixed-X case, we assume exactly the same model but with the basis functions $\phi_j(\cdot)$ chosen to be the identity. Thus, this model is a typical spike-and-slab Gaussian linear model in the fixed-X case \citep{mcculloch1997}. It is worth noting that our implementation for the fixed-X case actually uses a slightly more general Gaussian mixture model as the prior on $\beta_j$, where the density $p(\beta_j) = \sum_{k=1}^m p_k \mcN(\beta_j; 0, \tau_k^2)$ for hyperpriors $\tau_0 = 0, \tau_k \simind \invGamma(a_k, b_k)$, and $(p_0, \dots, p_m) \sim \Dir(\alpha)$. However, for brevity, we only derive the Gibbs updates for the case of two mixture components.

\subsubsection{Gibbs sampling updates}\label{subsubsec::gibbs}

Following Section \ref{subsec::computation}, we now review the details of the MLR Gibbs sampler which samples from the posterior of $(\bX, \beta)$ given the masked data $D = \{\bigY, \{\bx_j, \tbx_j\}_{j=1}^p\}$.\footnote{This is a standard derivation, but we review it here for the reader's convenience.} As notation, let $\beta$ denote the concatenation of $\{\beta\upj\}_{j=1}^p$, let $\beta\negupj$ denote all of the coordinates of $\beta$ except those of $\beta\upj$, let $\gamma_j$ denote the indicator that $\beta\upj \ne 0$, and let $\phi(\bX) \in \R^{n \times \sum_j k_j}$ denote all of the basis functions concatenated together. Also note that although this section mostly uses the language of model-X knockoffs, when the basis functions $\phi_j(\cdot)$ are the identity, the Gibbs updates we are about to describe satisfy the sufficiency property required for fixed-X statistics, and indeed the resulting Gibbs sampler is actually a valid implementation of the fixed-X MLR statistic.

To improve the convergence of the Gibbs sampler, we slightly modify the meta-algorithm in Algorithm \ref{alg::mlrgibbs} to marginalize over the value of $\beta\upj$ when resampling $\bX_j$. To be precise, this means that instead of sampling $\bX_j \mid \bX_{-j}, \beta, \sigma^2$, we sample $\bX_j \mid \bX_{-j}, \beta\upnegj$. We derive this update in three steps, and along the way we derive the update for $\beta\upj \mid \bX, \beta\negupj, D$.

\underline{Step 1}: First, we derive the update for $\gamma_j \mid \bX, \beta\upnegj, D$. Observe
\begin{align*}
        \frac{\P(\gamma_j = 0 \mid \bX, \beta\negupj, D)}{\P(\gamma_j = 1 \mid \bX, \beta\negupj, D)}
    &=
        \frac{p_0 p(\bigY \mid \bX, \beta\upnegj, \beta\upj = 0)}{(1-p_0) p(\bigY \mid \bX, \beta\upnegj, \beta\upj \ne 0)}.
\end{align*}
Analyzing the numerator is easy, as the model specifies that if we let $\br = \bigY - \phi(\bX_{-j}) \beta\negupj$, then
\begin{equation*}
    p(\bigY \mid \bX, \beta\upnegj, \beta\upj = 0) \propto \det(\sigma^2 I_n)^{-1/2} \exp\left(-\frac{1}{2 \sigma^2} \|\br\|_2^2 \right).
\end{equation*}
For the denominator, observe that $\br, \beta\upj \mid \bX, \beta\negupj, \beta\upj \ne 0$ is jointly Gaussian: in particular,
\begin{equation}\label{eq::joint_r_betaj}
    (\beta\upj, \br) \mid \bX, \beta\negupj, \beta\upj \ne 0 \sim \mcN\left(0, \begin{bmatrix} \tau^2 I_{k_j} & \tau \phi_j(\bX_j)^T \\ \tau \phi_j(\bX_j) & \tau^2 \phi_j(\bX_j) \phi_j(\bX_j)^T + \sigma^2 I_n \end{bmatrix} \right).
\end{equation}
To lighten notation, let $Q_j \defeq I_{k_j} + \frac{\tau^2}{\sigma^2} \phi(\bX_j)^T \phi(\bX_j)$. Using the above expression plus the Woodbury identity applied to the density of $\bigY \mid \bX, \beta\negupj, \beta\upj \ne 0$, we conclude
\begin{equation*}
    \frac{\P(\gamma_j = 0 \mid \bX, \beta\negupj, D)}{\P(\gamma_j = 1 \mid \bX, \beta\negupj, D)} = \frac{p_0}{1-p_0} \det(Q_j)^{1/2} \exp\left(- \frac{\tau^2}{2 \sigma^4} \br^T \phi_j(\bX_j) Q_j^{-1} \phi_j(\bX_j)^T \br \right).
\end{equation*}
Since $Q_j$ is a $k_j \times k_j$ matrix, this quantity can be computed relatively efficiently.

\underline{Step 2}: Next, we derive the distribution of $\beta\upj \mid \bigY, \bX, \beta\negupj, \gamma_j$. Of course, the case where $\gamma_j = 0$ is trivial since then $\beta\upj = 0$ by definition: in the alternative case, note from Equation (\ref{eq::joint_r_betaj}) that we have
\begin{equation*}
    \beta\upj \mid \bigY, \bX, \beta\negupj, \gamma_j = 1 \sim \mcN\left(\frac{\tau^2}{\sigma^2} \phi_j^T \br - \frac{\tau^4}{\sigma^4} \phi_j^T \phi_j Q_j^{-1} \phi_j^T \br, \tau^2 I_{k_j} - \frac{\tau^4}{\sigma^2} \phi_j^T \phi_j + \frac{\tau^6}{\sigma^4} \phi_j^T \phi_j Q_j^{-1} \phi_j^T \phi_j  \right),
\end{equation*}
where above, we use $\phi_j$ as shorthand for $\phi_j(\bX_j)$ to lighten notation.

\underline{Step 3}: Lastly, we derive the update for $\bX_j$ given $\bX_{-j}, \beta\negupj, D$. In particular, for any vector $\bx$, let $\kappa(\bx) \defeq \P(\gamma = 0 \mid \bX_j = \bx, \bX_{-j}, \beta\negupj)$. Then by the law of total probability and the same Woodbury calculations as before,
\begin{align*}
        \P(\bX_j = \bx \mid \bX_{-j}, \beta\negupj, D) 
    \propto& 
        p(\bigY \mid \bX_j = \bx, \bX_{-j}, \beta\upnegj) \\
    &=
        \kappa(\bx) p(\bigY \mid \bX_j = \bx, \bX_{-j}, \beta\upnegj, \beta\upj = 0) \\
    &+ 
        (1-\kappa(\bx)) p(\bigY \mid \bX_j = \bx, \bX_{-j}, \beta\upnegj, \beta\upj \ne 0) \\
    &\propto
        \kappa(\bx) \exp\left(-\frac{1}{2 \sigma^2} \|\br\|_2^2 \right) \\
    &+ 
        (1-\kappa(\bx)) \det(Q_j(\bx))^{-1/2} \exp\left(-\frac{1}{2 \sigma^2} \|\br\|_2^2 + \frac{\tau^2}{2\sigma^4} \br^T \phi_j(\bx)^T Q_j(\bx)^{-1} \phi_j(\bx)^T \br \right) \\
    &\propto 
        \kappa(\bx) + (1-\kappa(\bx)) \det(Q_j(\bx))^{-1/2} \exp\left(\frac{\tau^2}{2\sigma^4} \br^T \phi_j(\bx)^T Q_j(\bx)^{-1} \phi_j(\bx)^T \br\right)
\end{align*}
where above $Q_j(\bx) = I_{k_j} + \frac{\tau^2}{\sigma^2} \phi_j(\bx)^T \phi_j(\bx)$ as before.

The only other sampling steps required in the Gibbs sampler are to sample from the conditional distributions of $\sigma^2, \tau^2$ and $p_0$; however, this is straightforward since we use conjugate hyperpriors for each of these parameters.

\subsubsection{Extension to binary regression}

We can easily extend the Gibbs sampler in the preceding section to handle the case where the response is binary via a latent variable approach. Indeed, let us start by considering the case of Probit regression, which means we observe $\bz = \I(\bigY \ge 0) \in \{0,1\}^n$ instead of the continuous outcome $\bigY$. Following \cite{albertchib1993}, we note that distribution of $\bigY \mid \bz, \bX, \beta$ is truncated normal, namely
\begin{equation}\label{eq::dataaugment}
    \bigY_i \mid \bz, \bX, \beta \simind \begin{cases}
        \TruncNorm(\mu_i, \sigma^2; (0,\infty)) & \bz_i = 1 \\
        \TruncNorm(\mu_i, \sigma^2; (-\infty, 0) & \bz_i = 0,
    \end{cases}
\end{equation}
where $\mu = \phi(\bX) \beta = \E[\bigY \mid \bX, \beta]$. Thus, when we observe a binary response $\bz$ instead of the continuous response $\bigY$, we can employ the same Gibbs sampler as in Section \ref{subsubsec::gibbs} except that after updating $\beta\upj \mid \bX, \beta\negupj, \bigY$, we resample the latent variables $\bigY$ according to Equation (\ref{eq::dataaugment}), which takes $O(n)$ computation per iteration (since we can continuously update the value of $\mu$ whenever we update $\bX$ or $\beta$ in $O(n)$ operations as well). As a result, the computational complexity of this algorithm is the same as that of the algorithm in Section \ref{subsubsec::gibbs}.  A similar formulation based on PolyGamma random variables is available for the case of logistic regression (see \cite{polygamma2013}).

\subsection{Proof and discussion of Lemma \ref{lem::gibbs}}\label{subsec::gibbsproof}

\begingroup
\def\thetheorem{\ref{lem::gibbs}}
\begin{lemma} Suppose that under $P\bayes$, (i) $p_j^{(i)} \in (0,1)$ a.s. for $j \in [p]$ and (ii) the support of $\theta\opt \mid \bX, \bigY$ equals the support of the marginal law of $\theta\opt$. Then as $n_\sample \to \infty$,
\begin{equation*}
        \log\left(\sum_{i=1}^{n_\sample} p_j^{(i)}\right) 
        - \log\left(\sum_{i=1}^{n_\sample} 1 - p_j^{(i)}\right)
    \toprob
        \MLR_j\uppi
    \defeq 
        \log\left(\frac{P_j\bayes(\bX_j \mid D)}{P_j\bayes(\tbX_j \mid D)}\right).
\end{equation*}
\begin{proof} This proof follows because by the derivations in Section \ref{subsec::computation}, Algorithm \ref{alg::mlrgibbs} is a standard Gibbs sampler as defined in Algorithm A.40 of \cite{robertcasella2004mcmc}, i.e., at each step it samples from the conditional law of one unknown variable given all the others (where the unknown variables are $\bX_1, \dots, \bX_p$ and $\theta\opt$), where everything is done conditional on $D$.
Furthermore, the condition (i) implies that the support of $\bX_j \mid D, \bX_{-j}, \theta\opt$ does not depend on $(\theta\opt, \bX_{-j})$. As a result, this theorem result is a direct consequence of Corollary 10.12 of \cite{robertcasella2004mcmc}, applied conditional on $D$. In particular, Corollary 10.12 proves that
\begin{equation*}
    \frac{1}{n_\sample} \sum_{i=1}^{n_\sample} p_j^{(i)} \toprob P_j\bayes(\bX_j \mid D)
\end{equation*}
The result then follows from the continuous mapping theorem (note that the first assumption for the lemma ensures $P_j\bayes(\bX_j \mid D) \in (0,1)$ so the continuous mapping theorem applies).
\end{proof}
\end{lemma}

We note also that the two assumptions of Lemma \ref{lem::gibbs} are satisfied in Example \ref{ex::sparsegam}. In particular, to show that the support of $\bX_j \mid D, \theta\opt, \bX_{-j}$ does not depend on $\theta\opt$, observe that Eq. (\ref{eq::resamplexj}) tells us that conditional on $D = \bd$ for $\bd = (\smally, \{\bx_j, \tbx_j\}_{j=1}^p)$, and for any $\theta \in \Theta$,
\begin{equation*}
    p_j^{(i)} = \frac{P\bayes(\bX_j = \bx_j \mid D = \bd, \theta\opt=\theta, \bX_{-j})}{P\bayes(\bX_j = \tbx_j \mid D = \bd, \theta\opt=\theta, \bX_{-j})} = \frac{P^{(\theta)}_{\bigY \mid \bX}(\smally \mid \bX_j = \bx_j, \bX_{-j})}{P^{(\theta)}_{\smally \mid \bX}(\smally \mid \bX_j = \tbx_j, \bX_{-j})}.
\end{equation*}
In Example \ref{ex::sparsegam}, the numerator and denominator of the above equation are Gaussian likelihoods, so the likelihood ratio is always finite and thus $p_j^{(i)} \in (0, 1)$. Similarly, Example \ref{ex::sparsegam} is a conjugate Gaussian (additive) spike-and-slab linear model. Well--known results for these models establish that the support of the Gibbs distributions for the linear coefficients $\beta$ and hyperparameters $\sigma^2, \tau^2, p_0$ is equal to the support of the marginal distribution for these parameters \citep{mcculloch1997}---see Appendix \ref{subsec::gibbs} for examples of detailed derivations showing this result.

There may, of course, be other Bayesian models $P\bayes$ where the assumptions of Lemma \ref{lem::gibbs} do not hold. In these settings, the assumptions in Lemma \ref{lem::gibbs} can be relaxed---see \cite{robertcasella2004mcmc}.

\subsection{Computing AMLR statistics}\label{subappendix::computing_amlr}

The AMLR statistics are a deterministic function of the MLR statistics $\{\MLR_j\uppi\}_{j=1}^p$ and $\{\nu_j\}_{j=1}^p$ where
\begin{equation*}
    \nu_j \defeq \frac{P\bayes(\MLR_j\uppi > 0, j \in \mcH_1(\theta\opt) \mid D)}{(1+q)^{-1} - P\bayes(\MLR_j\uppi > 0 \mid D)}.
\end{equation*}
By Proposition \ref{prop::intuit_opt},  $P\bayes(\MLR_j\uppi > 0 \mid D)$ is a function of $|\MLR_j\uppi|$, so computing the denominator of $\nu_j$ is straightforward since we have already established how to compute $\{\MLR_j\uppi\}_{j=1}^p$. To compute the numerator, recall that Algorithm \ref{alg::mlrgibbs} samples from the \textit{joint} posterior of $\theta\opt, \bX, \tbX \mid D$. Therefore, we can use the empirical mean of the samples from Algorithm \ref{alg::mlrgibbs} to compute this quantity:
\begin{equation*}
    P\bayes(\MLR_j\uppi > 0, j \in \mcH_1(\theta\opt) \mid D) \approx \frac{1}{n_{\sample}} \sum_{i=1}^{n_{\sample}} \I(\widehat{\bX}_j = \bX_j^{(i)}, j \in \mcH_1(\theta^{(i)}))
\end{equation*}
where $\widehat{\bX}_j = \argmax_{\bx \in \{\bX_j, \tbX_j\}} P_j(\bx \mid D)$ is the MLR ``guess" of the value of $\bX_j$.

\section{MLR statistics for group knockoffs}\label{appendix::groupmlr}

In this section, we describe how MLR statistics extend to the setting of group knockoffs \citep{daibarber2016}. In particular, for a partition $G_1, \dots, G_m \subset [p]$ of the features, group knockoffs allow analysts to test the \textit{group} null hypotheses $H_{G_j} : \bX_{G_j} \Perp \bigY \mid \bX_{-G_j}$, which can be useful in settings where $\bX$ is highly correlated and there is not enough data to discover individual null variables. % and it is clear that at least one feature in a group $G_j$ is non-null, but it is not clear which one is non-null due to the high correlations. 
In particular, knockoffs $\tbX$ are model-X \textit{group} knockoffs if they satisfy the \textit{group} pairwise-exchangeability condition $[\bX, \tbX]_{\swap{G_j}} \disteq [\bX, \tbX]$ for each $j \in [m]$. Similarly, $\tbX$ are fixed-X group knockoffs if (i) $\bX^T \bX = \tbX^T \tbX$ and (ii) $S = \bX^T \bX - \tbX^T \bX$ is block-diagonal, where the blocks correspond to groups $G_1, \dots, G_m$. Given group knockoffs, one computes a single knockoff feature statistic for each group.%, where the group knockoff feature statistics $W \in \R^m$

MLR statistics extend naturally to the group knockoff setting because we can treat each group of features $X_{G_j}$ as a single compound feature. In particular, the masked data for group knockoffs is 
\begin{equation}\label{eq::groupmaskeddata}
    D = \begin{cases}
        (\bigY, \{\bX_{G_j}, \tbX_{G_j}\}_{j=1}^m) & \text{ for model-X knockoffs} \\
        (\bX, \tbX, \{\bX_{G_j}^T \bigY, \tbX_{G_j}^T \bigY\}_{j=1}^m) & \text{ for fixed-X knockoffs,}
    \end{cases}
\end{equation}
and the corresponding MLR statistics are
\begin{equation*}
    \MLR_j\uppi = \log\left(\frac{P_{G_j}\bayes(\bX_{G_j} \mid D)}{P_{G_j}\bayes(\tbX_{G_j} \mid D)}\right) \text{ for model-X knockoffs,}
\end{equation*}
where $P_{G_j}\bayes$ above denotes the law of $\bX_{G_j} \mid D$ under $P\bayes$. For fixed-X knockoffs, we have
\begin{equation*}
    \MLR_j\uppi = \log\left(\frac{P\bayes_{G_j,\mathrm{fx}}(\bX_{G_j}^T \bigY \mid D)}{P\bayes_{G_j,\mathrm{fx}}(\tbX_{G_j}^T \bigY \mid D)}\right) \text{ for fixed-X knockoffs,}
\end{equation*}
where $P\bayes_{G_j,\mathrm{fx}}$ denotes the law of $\bX_{G_j}^T \bigY \mid D$ under $P\bayes$.

Throughout the paper, we have proved several optimality properties of MLR statistics, and if we treat $\bX_{G_j}$ as a single compound feature, all of these theoretical results (namely Proposition \ref{prop::intuit_opt} and Theorem \ref{thm::avgopt}) immediately apply to group MLR statistics as well.

To compute group MLR statistics, we can use exactly the same Gibbs sampling strategy as in Section \ref{subsec::gibbs}---indeed, one can just treat $X_{G_j}$ as a basis representation of a single compound feature and use exactly the same equations as derived previously. This method is implemented in \texttt{knockpy}.

\section{Additional details for the simulations}\label{appendix::simdetails}

\begin{figure}[!ht]
    \centering
    \includegraphics[width=\linewidth]{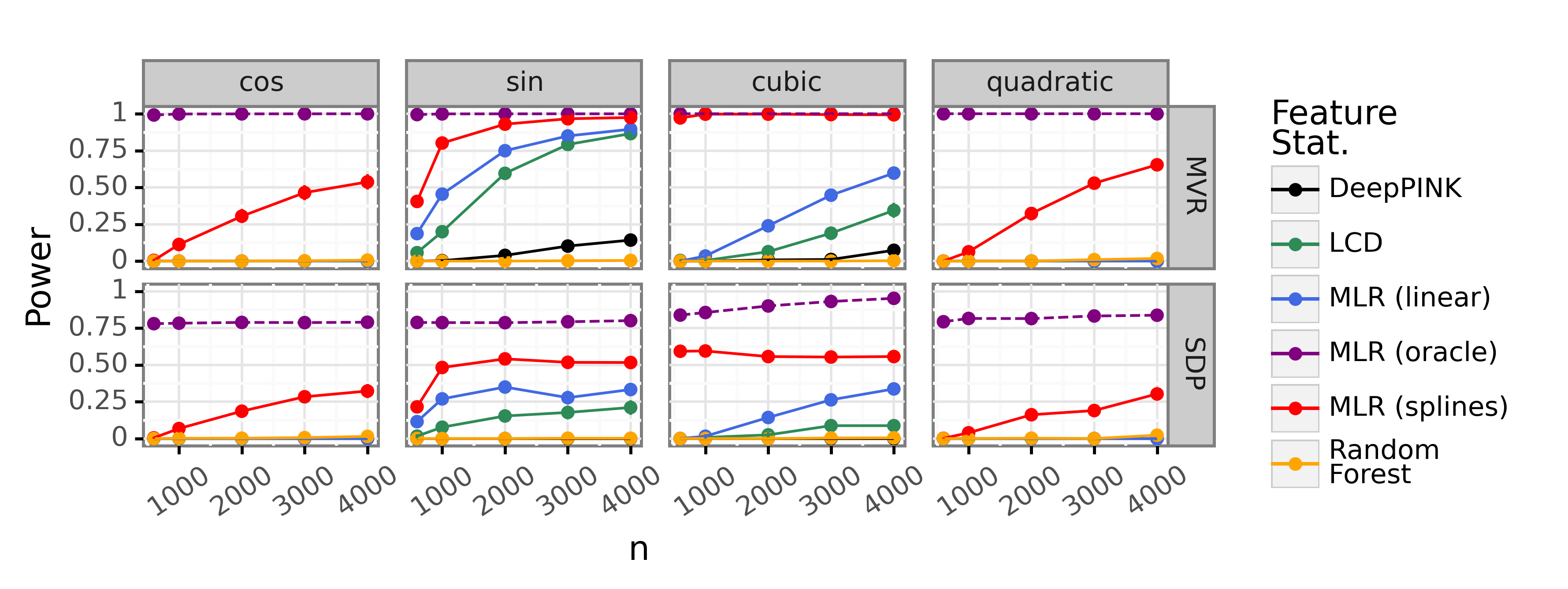}
    \caption{This plot is identical to Figure \ref{fig::nonlin} except it shows the results for $q=0.05$.}
    \label{fig::nonlin2}
\end{figure}

In this section, we describe the simulation settings in Section \ref{sec::sims}, and we also give the corresponding plot to Figure \ref{fig::nonlin} which shows the results when $q=0.05$. To start, we describe the simulation setting for each plot.

\begin{enumerate}
    \item \underline{Sampling $\bX$}: We sample each row of $\bX$ as an i.i.d. $\mcN(0, \Sigma)$ random vector in all simulations, with two choices of $\Sigma$. First, in the ``AR(1)" setting, we take $\Sigma$ to correspond to a nonstationary AR(1) Gaussian Markov chain, so $\bX$ has i.i.d. rows satisfying $X_j \mid X_1, \dots, X_{j-1} \sim \mcN(\rho_j X_{j-1}, 1)$ with $\rho_j \iid \min(0.99, \Beta(5,1))$. Note that the AR(1) setting is the default used in any plot where the covariance matrix is not specified. Second, in the ``ErdosRenyi" (ER) setting, we sample a random matrix $V$ such that $80\%$ of its off-diagonal entries (selected uniformly at random) are equal to zero; for the remaining entries, we sample $V_{ij} \iid \Unif((-1, -0.1) \cup (0.1, 1))$. To ensure the final covariance matrix is positive definite, we set $\Sigma = (V + V^T) + (0.1 + \lambda_{\min}(V^T + V)) I_p$ and then rescale $\Sigma$ to be a correlation matrix.
    
    \item \underline{Sampling $\beta$}: Unless otherwise specified in the plot, we randomly choose $s=10\%$ of the entries of $\beta$ to be nonzero and sample the nonzero entries as i.i.d. $\Unif([-\tau, -\tau/2] \cup [\tau/2, \tau])$ random variables with $\tau = 0.5$ by default. The exceptions are: (1) in Figure \ref{fig::amlr}, we set $\tau = 0.5$ and $s = 0.1$, (2) in Figure \ref{fig::misspec}, we set $\tau = 0.3$, vary $s$ between $0.05$ and $0.4$ as shown in the plot, and in some panels sample the non-null coefficients as $\Laplace(\tau)$ random variables, (3) in Figure \ref{fig::nonlin} we take $\tau = 2$ and $s=0.3$, (4) in Figure \ref{fig::logistic} we take $\tau = 1$.
    
    \item \underline{Sampling $\bigY$}: Throughout we sample $\bigY \mid \bX \sim \mcN(\bX \beta, I_n)$, with only two exceptions. First, in Figure \ref{fig::nonlin}, we sample $\bigY \mid \bX \sim \mcN(h(\bX) \beta, I_n)$ where $h$ is a nonlinear function applied elementwise to $\bX$, for $h(x)=\sin(x), h(x)=\cos(x), h(x)=x^2$ and $h(x)=x^3$. Second, in Figure \ref{fig::logistic}, $\bigY$ is binary and $\P(Y = 1 \mid X) = \frac{\exp(X^T \beta)}{1 + \exp(X^T \beta)}$.
    
    \item \underline{Sampling knockoffs}: We sample MVR and SDP Gaussian knockoffs using the default parameters from \texttt{knockpy} version 1.3, both in the fixed-X and model-X case. Note that in the model-X case, we use the true covariance matrix $\Sigma$ to sample knockoffs, thus guaranteeing finite-sample FDR control.
    
    \item \underline{Fitting feature statistics}: We fit the following types of feature statistics throughout the simulations: LCD statistics, LSM statistics, a random forest with swap importances \citep{knockoffsmass2018}, DeepPINK \citep{deeppink2018}, MLR statistics (linear variant), MLR statistics with splines, and the MLR oracle. In all cases we use the default hyperparameters from \text{knockpy} version 1.3, and we do not adjust the hyperparameters, so that the MLR statistics do not have well-specified priors. The exception is that the MLR oracle has access to the underlying data-generating process and the true coefficients $\beta$, which is why it is an ``oracle."
\end{enumerate}

Now, recall that in Figure \ref{fig::nonlin}, we showed the results for $q=0.1$ because several competitor feature statistics made no discoveries at $q=0.05$. Figure \ref{fig::nonlin2} is corresponding plot for $q=0.05$.

\section{Additional results for the real data applications}\label{appendix::realdata}

\subsection{HIV drug resistance}\label{appendix::hiv}

For the HIV drug resistance application,  Figures \ref{fig::hiv_pi_sdp}-\ref{fig::hiv_nnrti_mvr} show the same results as in Figure \ref{fig::wstatplot} but for all drugs in the protease inhibitor (PI) class; broadly, they show that MLR statistics have higher power because they ensure that the feature statistics with high absolute values are consistently positive, as discussed in Section \ref{subsec::hiv_motiv}. Note that in these plots, for the lasso-based statistics, we plot the normalized statistics $\frac{W_j}{\max_i|W_i|}$ so that the absolute value of each statistic is less than one. Similarly, for the MLR statistics, instead of directly plotting the masked likelihood ratio as per Equation (\ref{eq::mlr}), we plot 
\begin{equation*}
    W_j^{\star\star} \defeq 2\left(\logit^{-1}(|\MLR_j\uppi|) - 0.5\right) = 2\left(\P(\MLR_j\uppi > 0 \mid D) - 0.5\right)
\end{equation*}
because we find this quantity easier to interpret than a log likelihood ratio. In particular, $W_j^{\star\star} \approx 0$ if and only if $\MLR_j\uppi$ is roughly equally likely to be positive or negative under $P\bayes$, and $W_j^{\star\star} = 1$ when $\MLR_j\uppi$ is always positive under $P\bayes$.% The results in Figures \ref{fig::hiv_pi_sdp}-\ref{fig::hiv_pi_mvr} show that the MLR statistics are roughly well calibrated, if slightly conservative}  

Additionally, Figures \ref{fig::hiv_sdp} and \ref{fig::hiv_mvr} show the number of discoveries made by each feature statistic for SDP and MVR knockoffs, respectively, stratified by the drug in question. Note that the specific data analysis is identical to that of \cite{fxknock} and \cite{dbh2020} other than the choice of feature statistic---see either of those papers or \url{https://github.com/amspector100/mlr_knockoff_paper} for more details.

\begin{figure}
    \centering
    \includegraphics{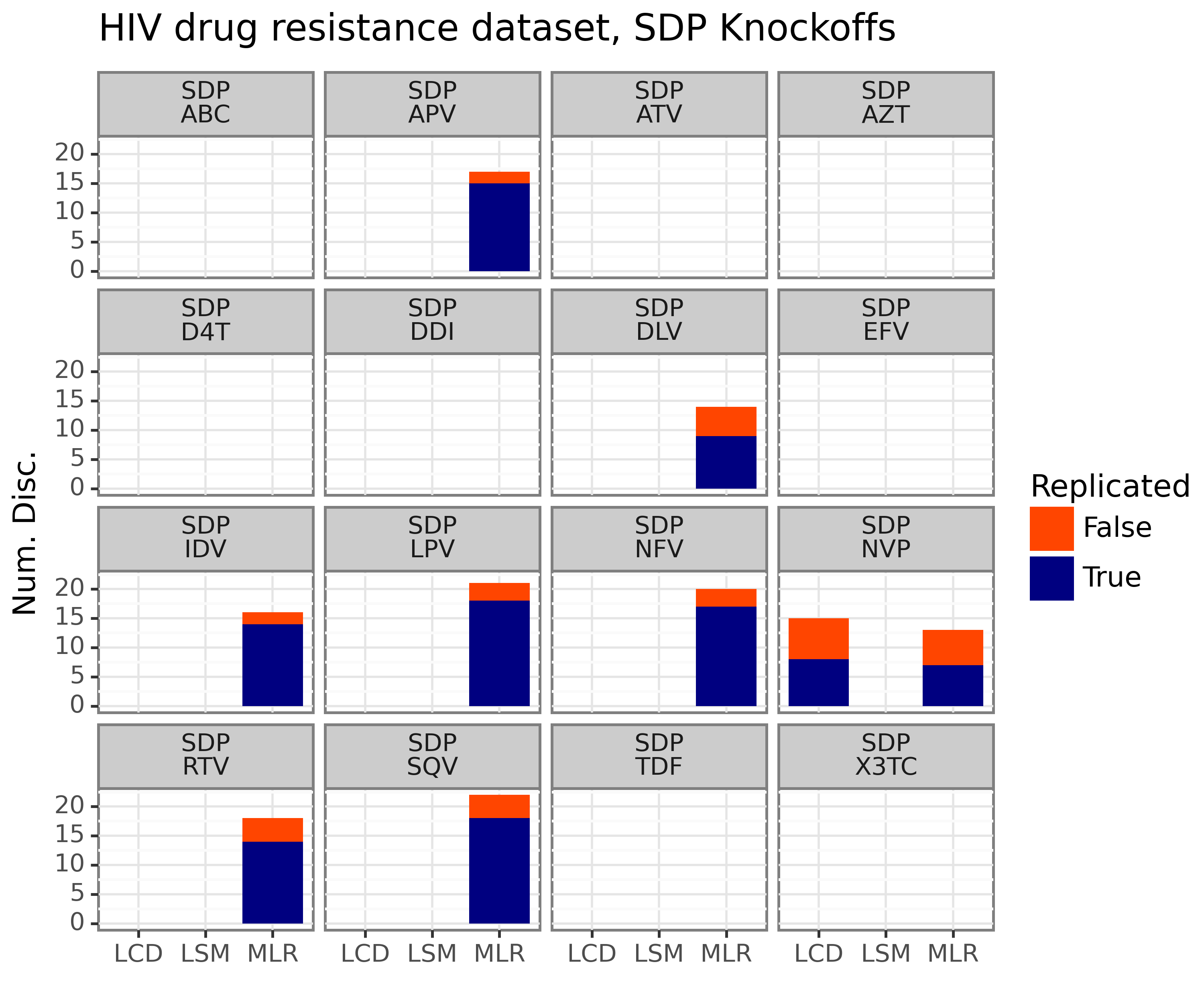}
    \caption{This figure shows the number of discoveries made by each feature statistic for each drug in the HIV drug resistance dataset.}
    \label{fig::hiv_sdp}
\end{figure}

\begin{figure}
    \centering
    \includegraphics{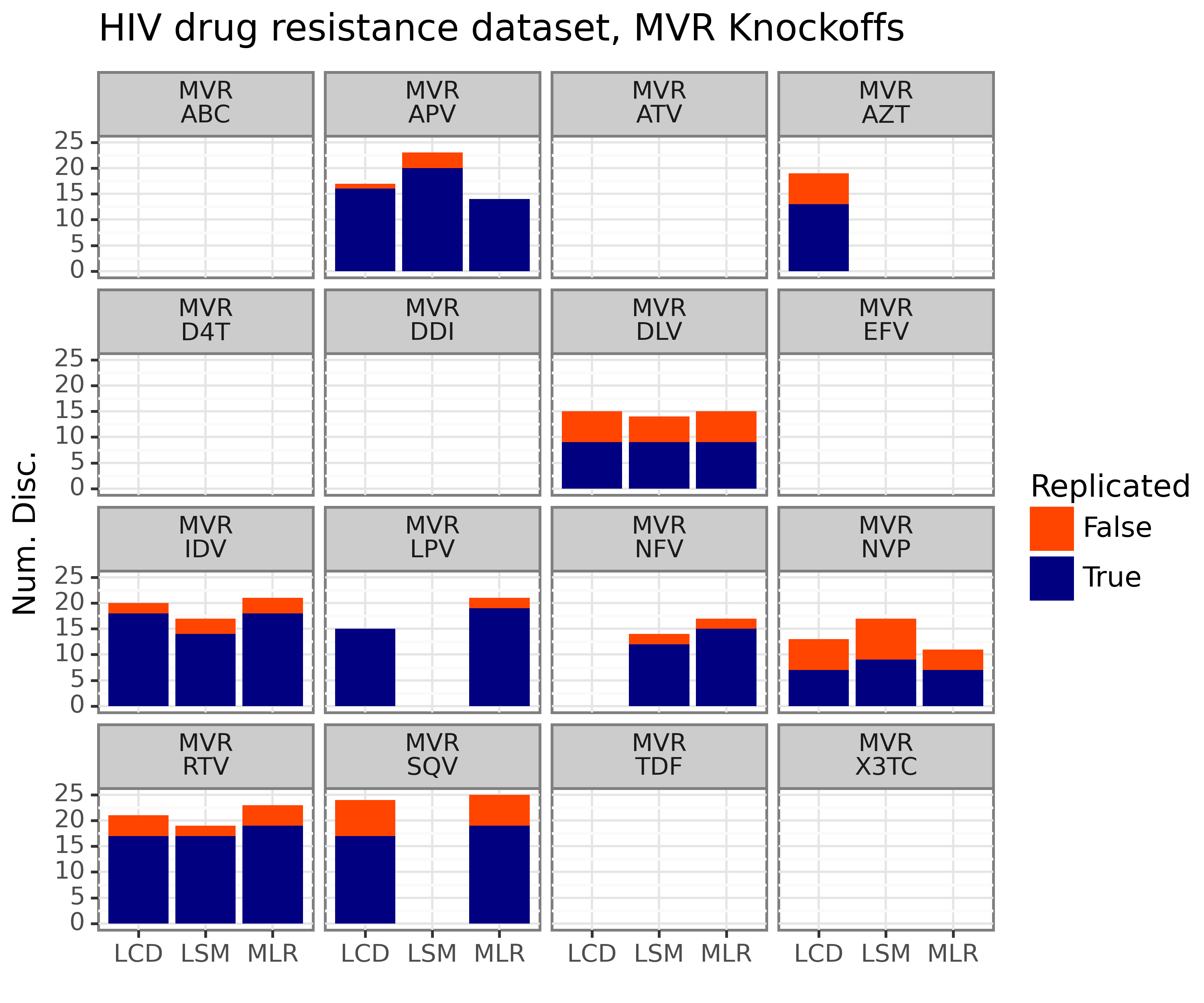}
    \caption{This figure shows the number of discoveries made by each feature statistic for each drug in the HIV drug resistance dataset.}
    \label{fig::hiv_mvr}
\end{figure}

\begin{figure}
    \centering
    \includegraphics{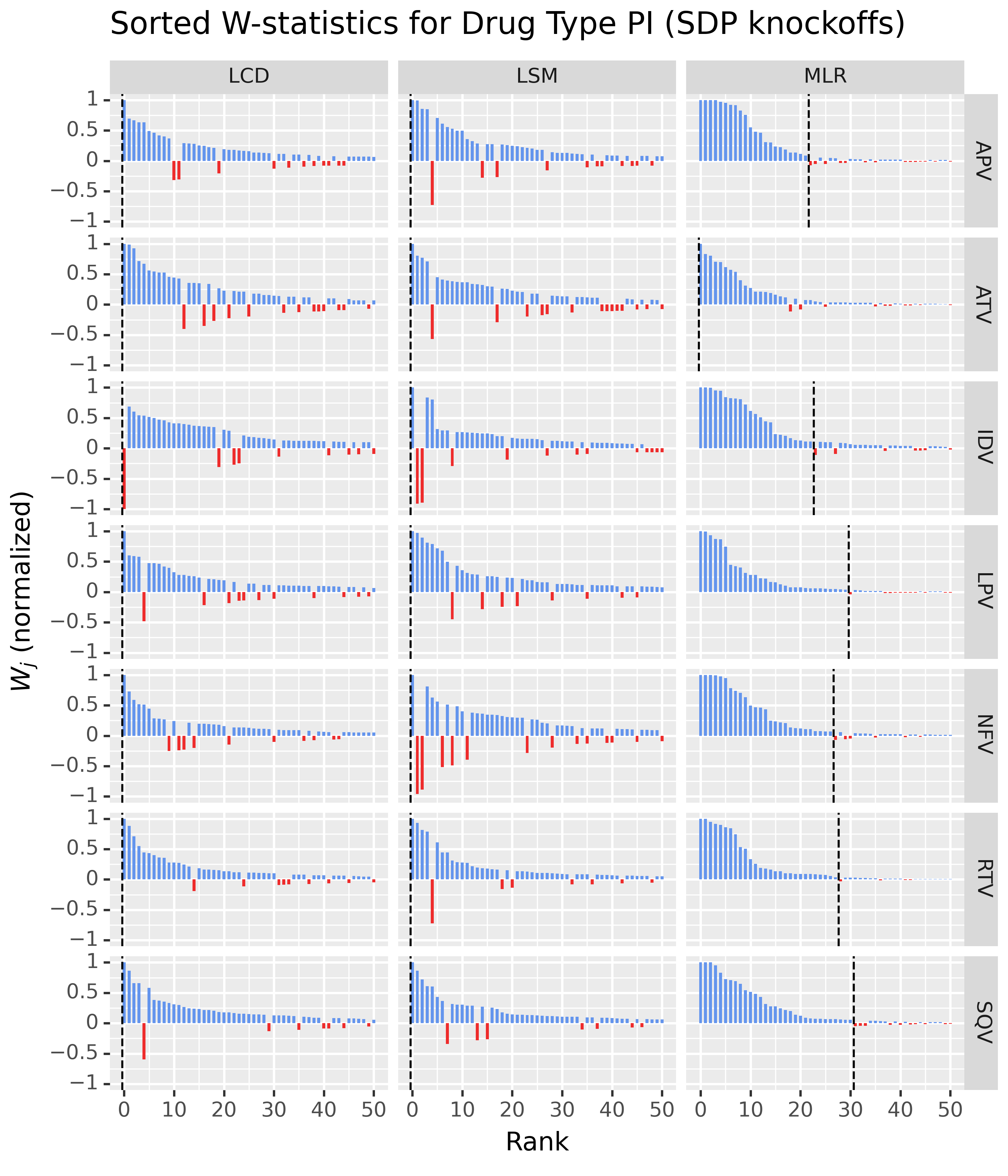}
    \caption{This figure is the same as Figure \ref{fig::wstatplot}, except it shows results for drugs in the PI class using SDP knockoffs.}
    \label{fig::hiv_pi_sdp}
\end{figure}

\begin{figure}
    \centering
    \includegraphics{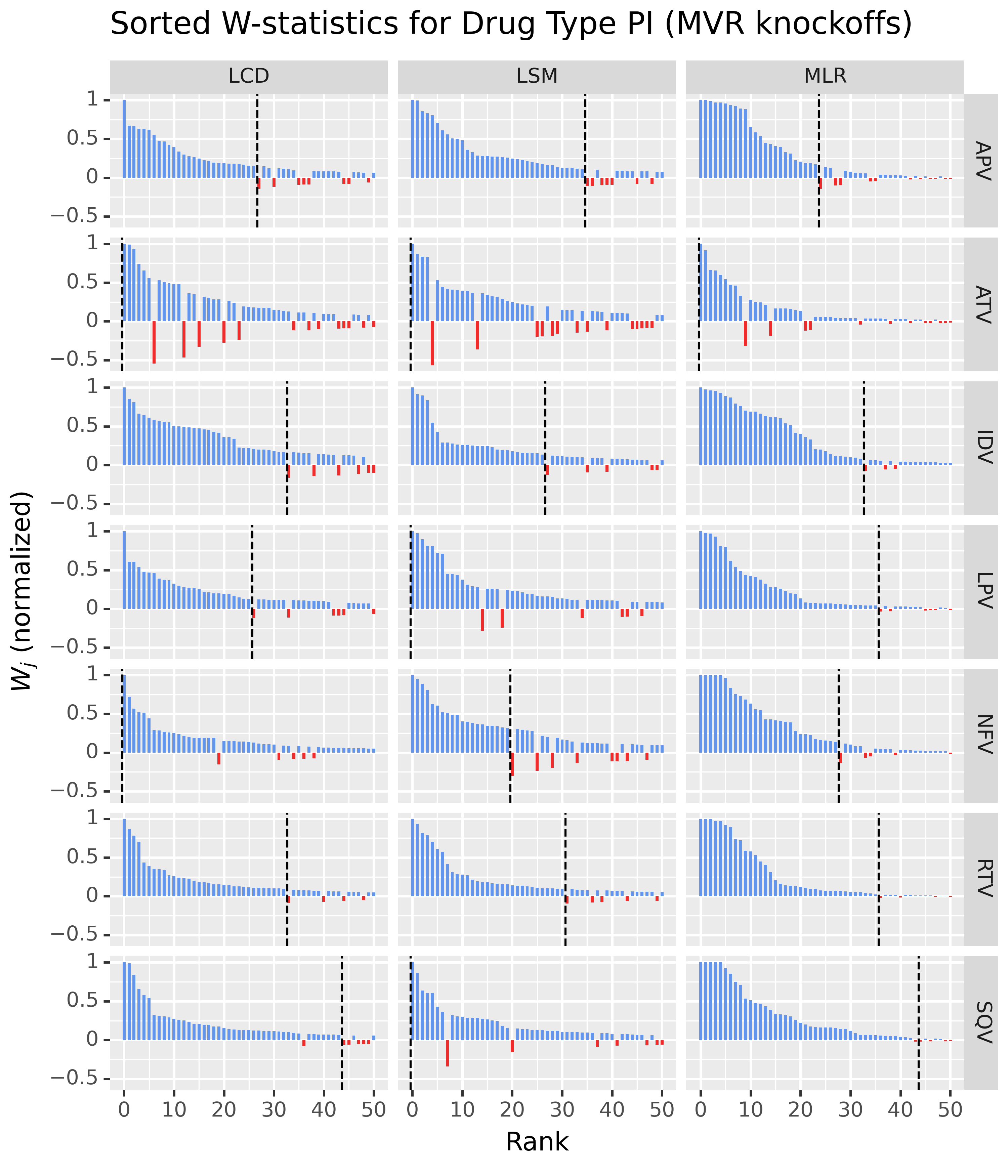}
    \caption{This figure is the same as Figure \ref{fig::wstatplot}, except it shows results for drugs in the PI class using MVR knockoffs.}
    \label{fig::hiv_pi_mvr}
\end{figure}

\begin{figure}
    \centering
    \includegraphics{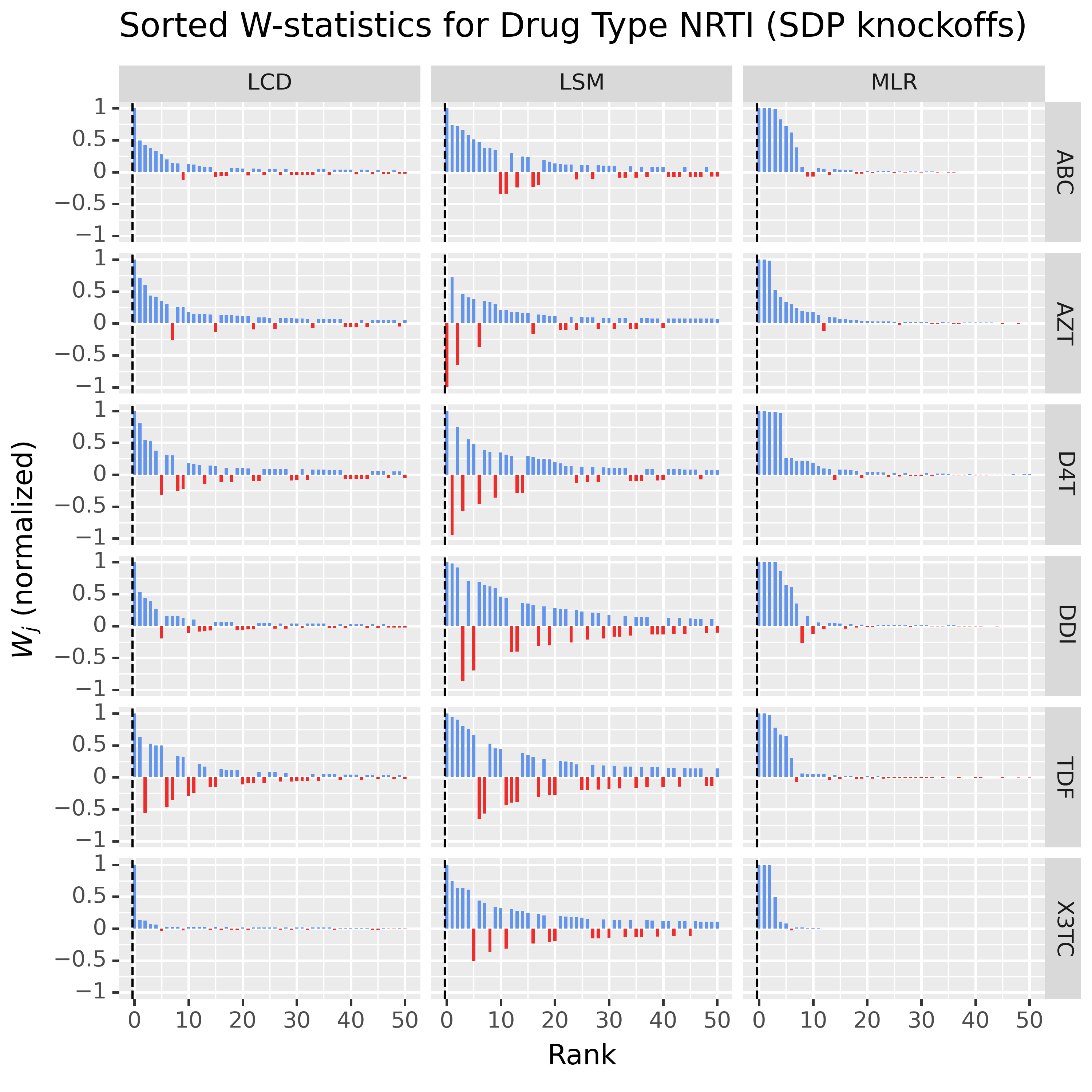}
    \caption{This figure is the same as Figure \ref{fig::wstatplot}, except it shows results for drugs in the NRTI class using SDP knockoffs.}
    \label{fig::hiv_nrti_sdp}
\end{figure}

\begin{figure}
    \centering
    \includegraphics{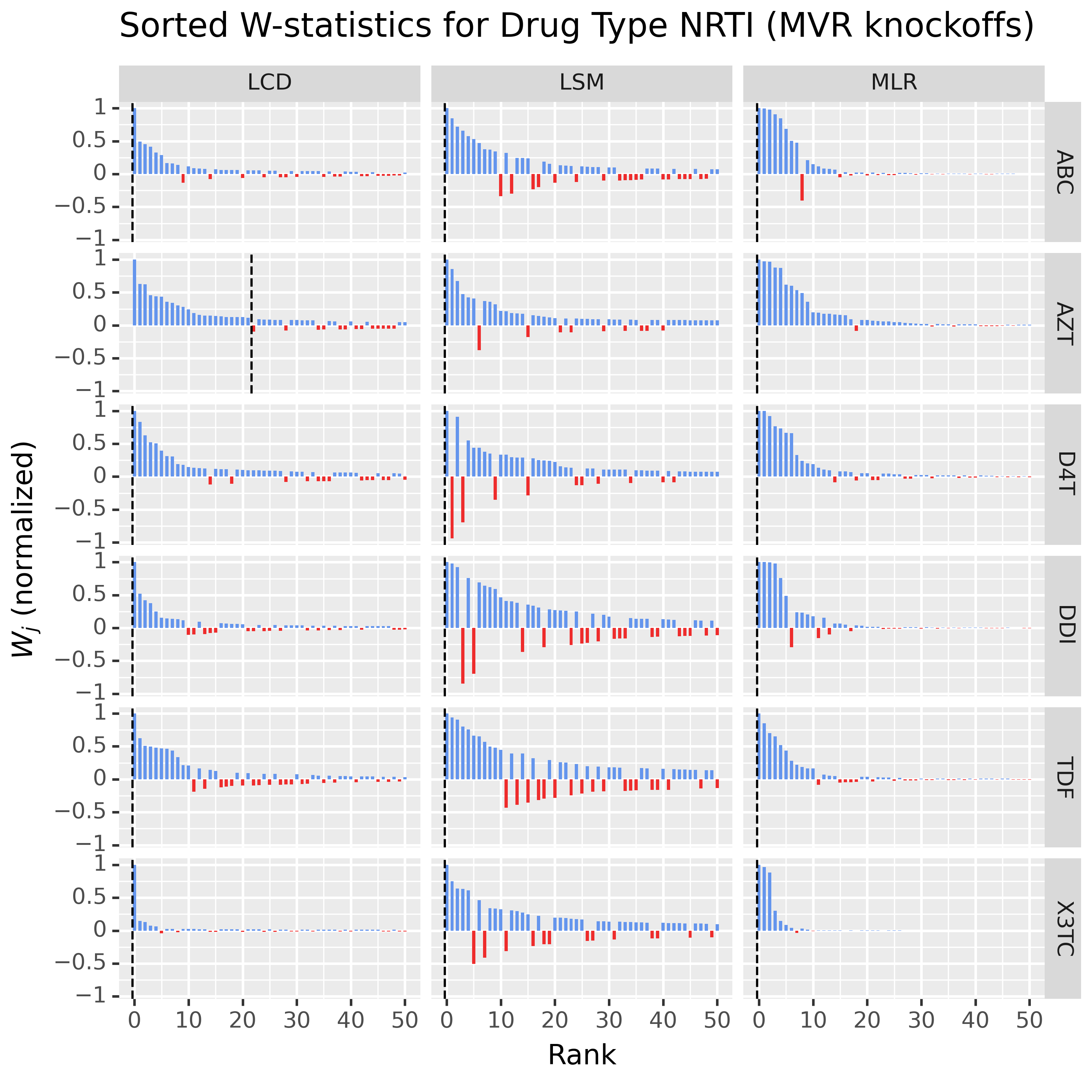}
    \caption{This figure is the same as Figure \ref{fig::wstatplot}, except it shows results for drugs in the NRTI class using MVR knockoffs.}
    \label{fig::hiv_nrti_mvr}
\end{figure}

\begin{figure}
    \centering
    \includegraphics{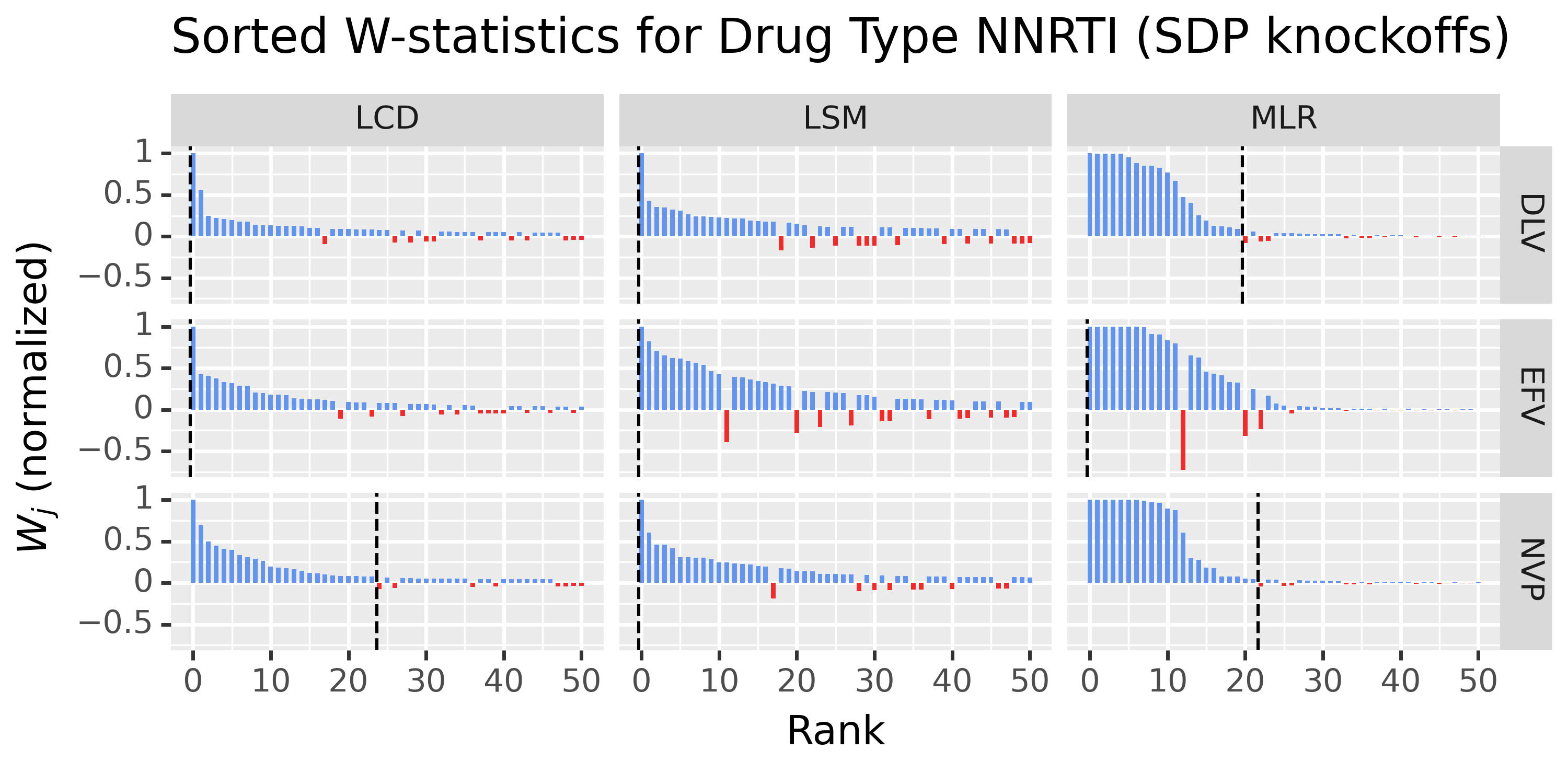}
    \caption{This figure is the same as Figure \ref{fig::wstatplot}, except it shows results for drugs in the NNRTI class using SDP knockoffs.}
    \label{fig::hiv_nnrti_sdp}
\end{figure}

\begin{figure}
    \centering
    \includegraphics{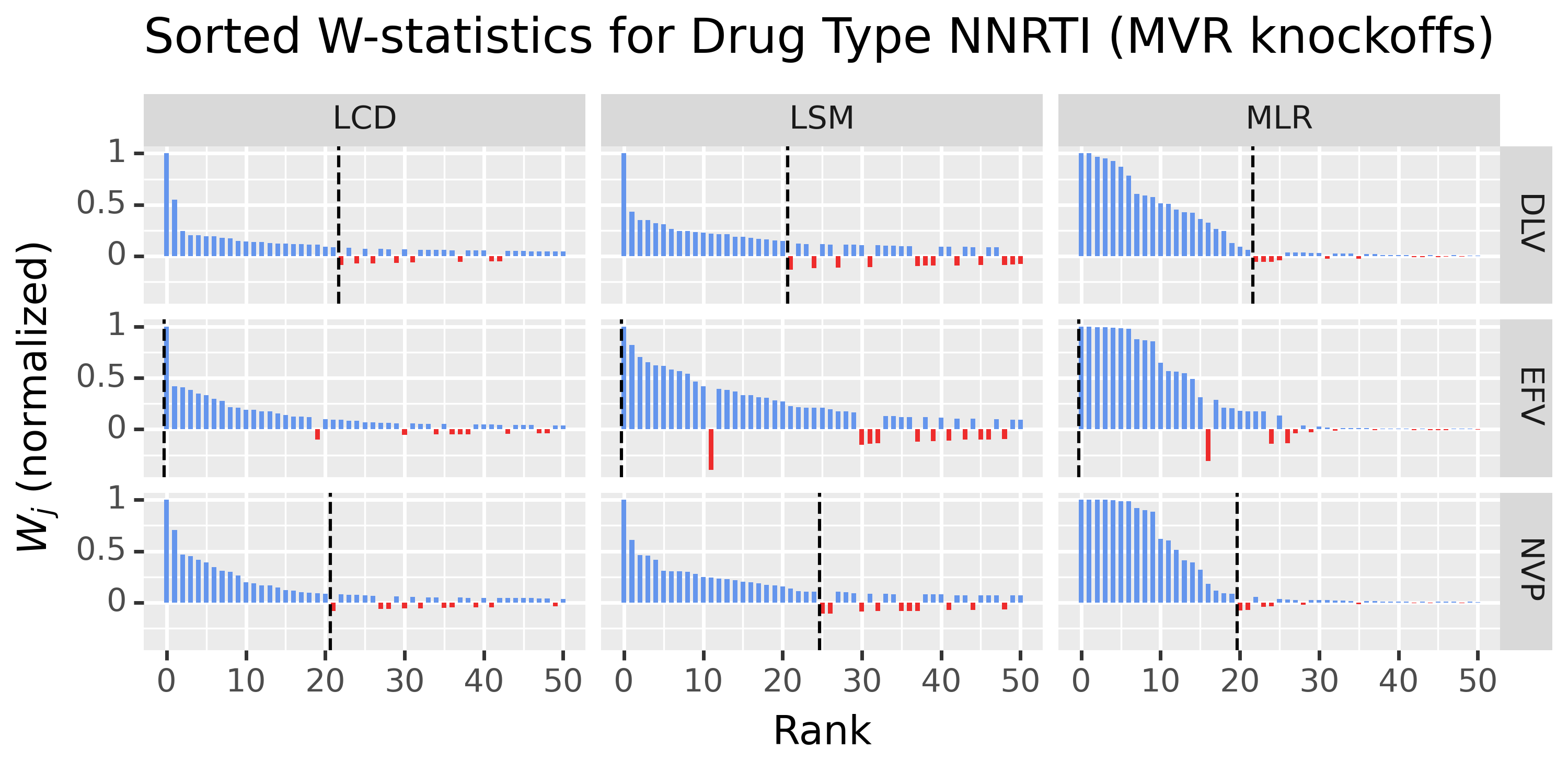}
    \caption{This figure is the same as Figure \ref{fig::wstatplot}, except it shows results for drugs in the NNRTI class using SDP knockoffs.}
    \label{fig::hiv_nnrti_mvr}
\end{figure}

\newpage

\subsection{Financial factor selection}

We now present a few additional details for the financial factor selection analysis from Section \ref{subsec::fundrep}. First, we list the ten index funds we analyze, which are: XLB (materials), XLC (communication services), XLE (energy), XLF (financials), XLK (information technology), XLP (consumer staples), XLRE (real estate), XLU (utilities), XLV (health care), and XLY (consumer discretionary). Second, for each feature statistic, Table \ref{tab::fundfdp} shows the average realized FDP across all ten analyses---as desired, the average FDP for each method is lower than the nominal level of $q=0.05\%$. All code is available at \url{https://github.com/amspector100/mlr_knockoff_paper}.
\begin{table}
    \centering
    \begin{tabular}{|l|l|l|r|}
    \hline
    Knockoff Type & Feature Stat. & Average FDP \\ \hline
    MVR & LCD &  0.013636 \\ \hline
        & LSM &  0.004545 \\ \hline
        & MLR &  0.038571 \\ \hline 
    SDP & LCD &  0.000000 \\ \hline 
        & LSM &  0.035000 \\ \hline 
        & MLR &  0.039002 \\ \hline
    \end{tabular}
    \caption{This table shows the average FDP, defined above, for each method in the financial factor selection analysis from Section \ref{subsec::fundrep}.}
    \label{tab::fundfdp}
\end{table}

\end{document}